\definecolor{shadecolor}{gray}{0.95}
\newcommand*{\ntodo}[1]{\hspace{1pt}\newline\todo[color=green,inline]{#1}}
\newcommand{\distr}{\Delta}
\newcommand{\ddistr}{\Phi}
\newcommand{\distrb}{\Theta}
\newcommand{\terms}[2]{\mathcal T(#1,#2)}
\newcommand{\ics}{\kappa}
\newcommand{\alga}{\mathbb A}
\newcommand{\algb}{\mathbb B}
\newcommand{\qt}{\mathcal U}
\newcommand{\qet}{\mathtt{QTh}}
\newcommand{\et}{\mathtt{Th}}
\newcommand{\sigsl}{\Sigma_{SL}}
\newcommand{\sigca}{\Sigma_{CA}}
\newcommand{\sigcs}{\Sigma_{CS}}
\newcommand{\pplus}[1]{+_{#1}}
\DeclareMathOperator*{\bigplus}{\scalerel*{+}{\textstyle{\sum}}}
\newcommand{\dirac}[1]{\delta(#1)}
\newcommand{\bindistr}[3]{#1 \cdot  #2 + (1-#1)  \cdot #3}
\newcommand{\bindistrwms}[3]{\wms( #1 \, #2 + (1-#1) \, #3)}
\newcommand{\haus}{H}
\newcommand{\kant}{K}
\newcommand{\hk}{\haus\kant}
\newcommand{\fpset}{\mathcal P}
\newcommand{\upset}{{\mathcal{P}_u}}
\newcommand{\pset}{\mathcal P}
\newcommand{\dset}{\mathcal D}
\newcommand{\cset}{\mathcal C}
\newcommand{\lcset}{\hat{\cset}}
\newcommand{\lfpset}{\hat{\fpset}}
\newcommand{\lpset}{\hat{\pset}}
\newcommand{\ldset}{\hat{\dset}}
\newcommand{\EM}{\mathbf E\mathbf M}
\newcommand{\Cat}{\mathbf C}
\newcommand{\Sets}{\mathbf{Set}}
\newcommand{\Met}{\mathbf{Met}}
\newcommand{\wms}{\textnormal{\texttt{WMS}}}
\newcommand{\kompactd}[2]{\mathtt{Comp}(#1,#2)}
\newcommand{\hs}{\mathcal V}
\newcommand{\conv}{cc}
\newcommand{\support}{supp}
\newcommand{\mon}{\mathcal M}
\newcommand{\set}[2]{\{#1|\, #2\}}
\newcommand{\setin}[2]{\{#1|\, #1 \in #2\}}
\newcommand{\pushf}[1]{\overline{#1}}
\newcommand{\ema}{Eilenberg-Moore\ }
\newcommand{\funpair}[2]{\langle #1, #2\rangle}
\newcommand{\funleq}{\sqsubseteq}
\newcommand{\calf}{\mathcal F}
\newcommand{\calg}{\mathcal G}
\newcommand{\ub}{\mathtt{UB}} 
\newcommand{\ubeq}{\stackrel{\ub}{=}}
\newcommand{\acat}{\mathbf A}
\newcommand{\qacat}{\mathbf {QA}}
\newcommand{\cplus}{\oplus} 
\newcommand{\bigcplus}{\bigoplus} 
\newcommand{\bigpplus}{\bigplus} 
\newcommand{\etsl}{\et_{SL}}
\newcommand{\etca}{\et_{CA}}
\newcommand{\etcs}{\et_{CS}}
\newcommand{\qetsl}{\qet_{SL}}
\newcommand{\qetca}{\qet_{CA}}
\newcommand{\qetcs}{\qet_{CS}}
\newcommand{\nf}{\nu}
\newcommand{\vvcut}[1]{}
\newcommand{\finaltodo}[1]{}
\title{Monads and Quantitative Equational Theories \\ for  Nondeterminism and Probability}
\author{Matteo Mio}{CNRS \& ENS--Lyon, France}{}{}{}
\author{Valeria Vignudelli}{CNRS \& ENS--Lyon, France}{}{}{}
\authorrunning{M. Mio and V. Vignudelli} 
\keywords{Computational Effects, Monads, Metric Spaces, Quantitative Algebras.}
\begin{document}

\setlength{\abovedisplayskip}{0.1cm}
\setlength{\belowdisplayskip}{0.1cm}

\maketitle

\begin{abstract}
The monad of convex sets of probability distributions is a well--known tool for modelling the combination of nondeterministic and probabilistic computational effects. In this work we lift this monad from the category of sets to the category of metric spaces, by means of the Hausdorff and Kantorovich metric liftings. Our main result is the presentation of this lifted monad in terms of the quantitative equational theory of convex semilattices, using the framework of quantitative algebras  recently introduced by Mardare, Panangaden and Plotkin.
\end{abstract}

\section{Introduction}\label{introduction:section}

In the theory of programming languages the categorical concept of \emph{monad} is used to handle computational effects \cite{Moggi-89,Moggi-91}. As main examples, the \emph{powerset monad} ($\mathcal{P}$) and the \emph{probability distribution monad} ($\mathcal{D}$) are used to handle nondeterministic and probabilistic behaviours, respectively. It is of course desirable to handle the combination of these two effects to model, for instance, concurrent randomised protocols where nondeterminism arises from the action of an unpredictable scheduler and probability from the use of  randomised procedures such as coin tosses. However, the composite functor $\mathcal{P}\circ\mathcal{D}$ is not a monad (see, e.g., \cite{VW06}).  

A well--known way to handle this technical issue is to use instead the \emph{convex powerset of distributions monad} ($\cset$) which restricts $\mathcal{P} \circ \mathcal{D}$ by only admitting sets of probability distributions that are closed under the formation of \emph{convex combinations} (see  \cite{DBLP:journals/entcs/TixKP09a,DBLP:conf/fossacs/Goubault-Larrecq08a,DBLP:conf/csl/Goubault-Larrecq07,DBLP:journals/entcs/Mislove06,Mislove00,Jacobs08} and Section \ref{section:2}). Restricting $\mathcal{P}\circ\mathcal{D}$ to $\cset$ is not only mathematically convenient, because it leads to a monad, but also natural as convexity captures the possibility of the scheduler to make probabilistic choices, as originally observed by Segala \cite{Seg95:thesis}. Suppose indeed that a scheduler can select between two probabilistic behaviours $\{ d_1, d_2 \}$ for execution. It is reasonable to assume that said scheduler can also, with the aid of a (biased) coin, choose $d_1$ with probability $p$ and $d_2$ with probability $1-p$. Hence, effectively, the scheduler can choose any behaviour in  $\{p\cdot d_1 +  (1-p) \cdot d_2 \mid p\in [0,1] \}$, which is indeed a convex set of distributions. 

In a recent work \cite{BSV19} the authors provide a proof for the following result: the equational theory $\etcs$ of convex semilattices is a \emph{presentation} of the $\Sets$ monad $\cset$. 
This means (see Section \ref{section:2} for details) that the category $\acat{(\etcs)}$ of convex semilattices and their homomorphisms is isomorphic to the category $\EM(\cset)$ of Eilenberg-Moore algebras for $\cset$. 

Presentation results of this kind 
have a number of applications in computer science due to (quoting Klin \cite[p.1]{DBLP:journals/tcs/Klin11}) the ``interplay between the structure (syntax) and the dynamics (behaviour) of systems.'' 
For example,  it follows from the presentation result of  \cite{BSV19} that the free convex semilattice with set of generators $X$ is isomorphic to $\cset(X)$. This allows us to manipulate elements of $\cset(X)$ as convex semilattice terms modulo the equations of $\etcs$ and, similarly, to perform equational reasoning steps using facts (e.g., from geometry) related to the mathematical structure of $\cset(X)$. 
Applications in the field of program semantics and concurrency theory arise by combining coalgebraic reasoning methods, associated with the use of monads as behaviour functors, and algebraic methods, which are made available by presentation theorems. Well known examples include \emph{bisimulation up--to techniques} (e.g., up--to congruence  \cite{DBLP:journals/acta/BonchiPPR17}) and the categorical approach to structural operational semantics, introduced by Turi and Plotkin in \cite{TP1997} (see also \cite{DBLP:journals/tcs/Klin11}) and based on the notion of \emph{bialgebras}.


The category $\Met$, having metric spaces as objects and non--expansive maps as morphisms, is a natural mathematical setting which can replace the category $\Sets$ when it is desirable to switch from the concept of \emph{program equivalence} to that of \emph{program distance}.  This has been a very active topic of research in the last two decades (see, e.g, \cite{prakashbook,GJS90,breugel2005,DJGP02,DBLP:conf/icalp/BreugelW01}). In this context, it is necessary to deal with monads on $\Met$. Variants of the $\Sets$ monads $\pset$ and $\dset$ have been proposed on $\Met$ (see, e.g., \cite{breugel2005,BaldanBKK18} and Section \ref{sec:monadmet}), and are technically based on different types of \emph{metric liftings}, due to Hausdorff and Kantorovich. 


\textbf{Contributions of this work.} In this work we investigate  a $\Met$ variant of the $\Sets$ monad $\cset$, which we denote by $\lcset$. As a functor,  $\lcset : \Met \rightarrow \Met$ maps a metric space $(X,d)$ to the metric space $ (\cset(X), \haus\kant(d))$, the collection of non--empty, finitely generated convex sets of finitely supported probability distributions on $X$
 endowed with the metric $\haus(\kant(d))$, the Hausdorff lifting of the Kantorovich lifting of the metric $d$.
$$
\lcset : \Met \rightarrow \Met \ \ \ \ \ \ \ \ \  (X,d) \mapsto  \Big( \cset(X), \haus(\kant(d)) \Big).
$$
As a first contribution, in Section \ref{section:4:proof:monad} we give a direct proof of the fact that $\lcset$ is indeed a monad on $\Met$. 
This result does not seem straightforward to prove. Most notably, establishing the non--expansiveness of the monad multiplication $\mu^{\lcset}$ requires some detailed calculations. 

Our second and main result concerns the presentation of the $\Met$ monad $\lcset$. Presentations of monads in $\Sets$ are given in terms of categories of algebras (in the sense of universal algebra) and their homomorphisms, but these are not adequate in the metric setting. For this reason we use, instead, the recently introduced apparatus of \emph{quantitative algebras} and  \emph{quantitative equational theories} of \cite{radu2016} (see also \cite{DBLP:conf/lics/MardarePP17, DBLP:conf/lics/BacciMPP18, BacciBLM18, DBLP:journals/entcs/Bacci0LM18}). 
This framework generalises that of universal algebra and equational reasoning by dealing with quantitative algebras, which are metric spaces equipped with non--expansive operations over a signature, and quantitative equations of the form $s=_\epsilon t$, intuitively expressing that the distance between terms $s$ and $t$ is less than or equal to $\epsilon$. In Section \ref{section:4:proof:monad} we define the quantitative equational theory $\qetcs$ of quantitative convex semilattices, and in Section \ref{section:5:proof:presentation} we prove the presentation result (Theorem \ref{thm:main}): the category $\EM(\lcset)$ of Eilenberg-Moore algebras for $\lcset$ is isomorphic to the category $\qacat(\qetcs)$ of quantitative convex semilattices and their non--expansive homomorphims.

\textbf{Relation with other works.}  This work continues the research path opened in the seminal \cite{radu2016} (see also subsequent works  \cite{DBLP:conf/lics/MardarePP17, DBLP:conf/lics/BacciMPP18, BacciBLM18, DBLP:journals/entcs/Bacci0LM18}) where the authors investigated the connection between the quantitative theories of semilattices  ($\qetsl$) and convex algebras ($\qetca$) and the monads $\lfpset$ and $\ldset$, which are $\Met$ variants of $\fpset$ and $\dset$, respectively.  Hence, our work  constitutes a natural step forward. From a technical standpoint, there is a difference between our main presentation result  and those of \cite{radu2016} regarding $\qetsl$ and $\qetca$ (corollaries 9.4 and 10.6 respectively in \cite{radu2016}). Indeed, in \cite{radu2016} the authors only provide representations of the free objects in the categories $\qacat(\qetsl)$ and  $\qacat(\qetca)$. While this suffices in many applications, we believe that proving a full presentation, in the sense introduced and investigated in this work, provides a more general and useful result, giving a representation for the whole categorical structure and not just for free objects. This said, the technical machinery developed in \cite{radu2016} suffices, with minor additional work\footnote{\label{footnote:introduction}The proof structure of our Theorem \ref{thm:main} can be adapted (and in fact much simplified due to the simpler nature of $\qetsl$ and $\qetca$ compared to $\qetcs$) to obtain these isomorphisms of categories.}, to establish the following presentation results in our sense: $\qacat(\qetsl) \cong \EM(\lfpset)$ and $\qacat(\qetca) \cong \EM(\ldset)$.

\textbf{Note:} Full proofs of the results presented in this paper are available in the Appendix. 

\section{Monads on Sets and Equational Theories}\label{sec:monads:set}

\label{section:2}

In this section we present basic definitions and results regarding monads. We assume the reader is familiar with the basic concepts of category theory (see \cite{Awodey} as a reference).

\begin{definition}\label{monad:main_definition}
Given a category $\Cat$, a monad  on $\Cat$ is a triple $(\mon, \eta, \mu)$ composed of a functor $\mon\colon\Cat \rightarrow \Cat$ together with two
natural transformations: a unit $\eta\colon id
\Rightarrow \mon$, where $id$ is the identity functor on $\Cat$, and a multiplication $\mu \colon \mon^{2} \Rightarrow
\mon$, satisfying the two laws 
$\mu \circ \eta\mon = \mu \circ \mon\eta = id $ and $  \mu\circ \mon\mu = \mu \circ\mu\mon.
$
\end{definition}

We now introduce three relevant monads on the category $\Sets$ of sets and functions.



\begin{definition}\label{def:set:fpset}
The \emph{non--empty finite powerset} monad $(\fpset, \eta^{\fpset}, \mu^{\fpset})$ on $\Sets$ is defined as follows.
Given an object $X$ in $\Sets$, $\fpset(X) = \{ X^\prime \subseteq X \mid X^\prime\neq \emptyset \textnormal{ and $X^\prime$ is finite} \}$. Given an arrow $f:X\rightarrow Y$, $\fpset{(f)}: \fpset(X) \to \fpset(Y)$ is defined as $\fpset{(f)}(X^\prime)=\bigcup_{x\in X^\prime} f(x)$ for any $X'\in \fpset(X)$.
The unit $\eta^\fpset_X:X\rightarrow \fpset(X)$ is defined as $\eta^\fpset_X(x) = \{x\}$,
and the multiplication $\mu^\fpset_{X}:  \fpset\fpset(X) \rightarrow \fpset(X)$ is defined as $\mu^\fpset_{X}(\{ X_1,\dots, X_n\})=\bigcup^n_{i=1}X_i$.
\end{definition}

A probability distribution on a set $X$ is a function $\Delta: X\rightarrow [0,1]$ such that $\sum_{x\in X} \Delta(x)=1$. 
The \emph{support} of  $\Delta$ is defined as the set $\support(\distr) = \{ x\in X \mid \distr(x) \neq 0\}$. In this paper we only consider probability distributions with finite support (f.s.), which we sometimes just call {distributions}. The Dirac distribution $\dirac x$ is defined as $\dirac x (x^\prime)=1$ if $x^\prime = x$ and $\dirac x (x^\prime)=0$ otherwise. We often denote a distribution having $\support(\distr) = \{x_{1},x_{2}\}$  using the expression $p_{1} x_{1} + p_{2} x_{2}$, with $p_i = \distr(x_i)$.
Analogously, we let $\sum_{i=1}^{n} p_{i} x_{i}$ denote a distribution $\distr$ with support $\{x_1,\dots, x_n \}$ and with $p_i = \distr(x_i)$.

\begin{definition}\label{def:set:dset}
The \emph{finitely supported probability distribution} monad $(\dset, \eta^{\dset},\mu^{\dset})$ on $\Sets$ is defined as follows.
For objects $X$ in $\Sets$, $\dset(X) = \{ \distr \mid \distr \textnormal{ is a f.s. distribution on $X$}\}$.
For arrows $f\!:\!X\rightarrow\! Y$ in $\Sets$, $\dset{(f)}\!:\!\dset{(X)}\!\rightarrow\!\dset{(Y)}$ is defined as
$\dset{(f)}(\distr)\! =\!   \big(y\mapsto \sum_{x \in f^{-1}(y)} \distr(x) \big)$. 
The unit $\eta^\dset_X:X\rightarrow \dset(X)$ is defined as $\eta_X(x) = \dirac x$. The multiplication $\mu^{\dset}_X:  \dset\dset(X) \rightarrow \dset(X)$ is defined, for $\sum_{i=1}^{n} p_{i} \Delta_{i} \in \dset\dset(X)$, as $\mu^{\dset}_X(\sum_{i=1}^{n} p_{i} \Delta_{i})= \big( x\mapsto \sum_{i=1}^n p_i\cdot \distr_i(x) \big).$
\end{definition}

\begin{remark}
Given elements $\distr_1,\dots, \distr_n\in \dset(X)$, the expression
$\sum_{i=1}^{n} p_{i} \Delta_{i}$ denotes an element in $\dset\dset(X)$.
The set $\dset(X)$ can be seen as a convex subset of the real vector space $\mathbb{R}^X$, so in order to avoid confusion with the notation 
$\sum_{i=1}^{n} p_{i} \Delta_{i}$
we will use the following dot--notation $\sum_{i=1}^{n} p_{i} \cdot \Delta_{i}$  to denote convex combinations of distributions:
$\sum_{i=1}^{n} p_{i} \cdot \Delta_{i} = \mu^\dset_X ( \sum_{i=1}^{n} p_{i} \Delta_{i} )  = \big(x\mapsto \sum_{i=1}^n p_i\cdot \distr_i(x)\big).$
Hence, 
$\sum_{i=1}^{n} p_{i} \Delta_{i}$ denotes an element in $\dset\dset(X)$ (a distribution of distributions), while $\sum_{i=1}^{n} p_{i} \cdot \Delta_{i}$ denotes an element of $\dset(X)$.
\end{remark}

\vvcut
{Lastly, we now introduce the convex powerset monad $\cset$ on $\Sets$ which is used, in the field of program semantics, to model nondeterministic and probabilistic effects. In order to do this, recall that the set $\dset(X)$ can be seen as a convex subset of the real vector space $\mathbb{R}^X$. In order to avoid confusion with the notation introduced above (e.g., $p_1x_1 + \dots + p_nx_n$) for expressing probability distributions, we will use the following dot--notation $p_1\cdot \distr_1+ \dots + p_n\cdot \distr_n$  to denote convex combinations of probability distributions:
$p_1\cdot \distr_1+ \dots + p_n\cdot \distr_n = \mu^\dset_X \big( p_1\distr_1+ \dots + p_n \distr_n  \big)  = \big(x\mapsto \sum_{i=1}^n p_i\cdot \distr_i(x)\big).$
Note that, 
given elements $\distr_1,\dots, \distr_n\in \dset(X)$, the expression
$p_1\distr_1 + \dots + p_n\distr_n$ denotes an element in $\dset(\dset(X))$ (a distribution of distributions), while $ p_1\cdot \distr_1 + \dots + p_n\cdot \distr_n
$ denotes an element of $\dset(X)$ (the distribution obtained as a convex combination of $\distr_i$).
}
Given a collection $S\subseteq\dset (X)$ of distributions, we can construct its \emph{convex closure} $\conv(S) = \{ \sum_{i=1}^{n} p_{i} \cdot \Delta_{i} \mid  n\geq 1,  \distr_i \in S \textnormal{ for all $i$, and } \sum_{i=1}^{n} p_i = 1 \}.
$ 
Note that $\conv(\conv(S)) = \conv(S)$. A subset $S\subseteq\dset (X)$ is \emph{convex} if $S = \conv(S)$. We say that a convex set $S \subseteq\dset (X)$ is \emph{finitely generated} if there exists a finite set $S^\prime\subseteq \dset (X)$ (i.e., $S^\prime\in\fpset\dset{(X)}$) such that $S=\conv(S^\prime)$.
Given a finitely generated convex set $S\subseteq \dset (X)$, there exists one minimal (with respect to the inclusion order) finite set $\ub(S)\in\fpset\dset X$  such that $S=\conv(\ub(S))$. The finite set $\ub(S)$ is referred to as the \emph{unique base} of $S$ (see, e.g., \cite{BSV20ar}). The distributions in $\ub(S)$ are convex--linear independent, i.e.,  if $\ub(S) = \{\distr_1,\dots, \distr_n\}$, then for all $i$, $\distr_i \notin \conv(\{ \distr_j \, |\, j\neq i \})$.

\begin{definition}\label{def:set:cset}
The \emph{finitely generated non-empty convex powerset of distributions} monad $(\cset, \eta^{\cset}, \mu^{\cset})$ on $\Sets$ is defined as follows. Given an object $X$ in $\Sets$, $\cset(X)$ is the collection of non-empty finitely generated convex sets of finitely supported probability distributions on $X$, i.e., $\cset(X) = \{  \conv(S) \mid S\in \fpset\dset X \}$.
Given an arrow $f:X\rightarrow Y$ in $\Sets$, the arrow $\cset{(f)}: \cset(X)\rightarrow \cset(Y)$ is defined as $\cset{(f)}(S)= \{ \dset{(f)}(\distr) \mid \distr\in S \} $.
The unit $\eta_X^{\cset}:X\rightarrow \cset(X)$ is defined as $\eta^\cset_X (x) = \{\dirac x\}$, the singleton (convex) set consisting of the Dirac distribution.
The mutiplication $\mu^\cset_X:  \cset\cset(X) \rightarrow \cset(X)$ is defined, for any $S\in \cset\cset(X)$, as
$
\mu^\cset_X(S)= \bigcup_{\Delta \in S}  \wms(\distr)$, 
where, for any $\distr\in \dset\cset (X)$ of the form $\sum_{i=1}^{n} p_{i} S_{i}$, with $S_i\in\cset (X)$, the \emph{weighted Minkowski sum} operation $\wms: \dset\cset (X)\rightarrow \cset (X)$  is defined as 
$
\wms(\distr) = \{ \sum_{i=1}^{n} p_{i} \cdot \Delta_{i} \mid \textnormal{for each $1\leq i \leq n$, $\distr_i\in S_i$}\}.
$
\vvcut{The fact that  $ \big\{ \wms(\distr) \mid  \distr \in S  \big\}   $ is indeed convex and finitely generated whenever the input $S\in  \cset\cset X$ is convex and finitely generated is well--known (see, e.g., \cite{}).}
\end{definition}

 \vvcut{
\begin{proposition}[\cite{BSV20ar}]\label{prop:ubf}
For $S\in \cset(X)$ and $f:X\to Y$, it holds
$\cset f(S) 
= \conv (\bigcup_{\distr \in \ub(S)} \{\dset f (\distr)\})$.
\end{proposition}
\marginpar{check how it will be written in arxiv (otherwise, we can make it follow from $\cset f$ being a convex semilattice homomorphism, always in arxiv)}
}
 \vvcut
{  We will often write
$$
S \ubeq \conv(\{\distr_1,\dots, \distr_n\})
$$
to express that $\{\distr_1,\dots, \distr_n\} = \ub(S)$ while the simple equation $S = \conv(\{\distr_1,\dots, \distr_n\})$ merely indicates that $S$ is the convex closure of the finite set $\{\distr_1,\dots, \distr_n\}$ which is not necessarily $\ub(S)$. 
\ntodo{alla fine questa notazione $\ubeq$ non mi sembra risulti molto utile, forse meglio evitare?\\
A livello di notazione, io uso in generale $\bigcup_{\Delta \in \ub(S)} \{\Delta\}$, che quindi non rende mai necessario esplicitare la unique base come un insieme $\{\distr_1,\dots, \distr_n\}$. Anche sotto, io esprimerei i lemmi usando questa notazione. Ad esempio, il lemma \ref{lem:ubf} diventa semplicemente:
For any $X,Y\in \Sets$ and $f:X\rightarrow Y$, we have
$$
\cset{f}(S) = \conv \bigcup_{\distr \in \ub(S)} \{\dset f (\distr)\}
$$
Inoltre non capisco il modo in cui e' scritto ora il lemma \ref{lem:ubf}. Cambio qui in seguito e lascio comunque la versione vecchia alla fine.}}

\subsection{Equational Theories and Monad Presentations}

An important concept regarding monads is that of algebras for a monad.
\begin{definition}\label{def:algebra-of-a-monad}
Let $(\mon:\Cat \rightarrow \Cat,\eta,\mu)$ be a monad. An algebra for $\mon$ is a pair $(A,h)$ where $A\in\Cat$ is an object and $h:\mon (A)\rightarrow A$ is a morphism such that: $ h \circ  \eta_A = id_A$ and $h \circ \mon h= h \circ \mu_A $.
Given two $\mon$--algebras $(A,h)$ and $(A^\prime,h^\prime)$, a \emph{$\mon$--algebra morphism} is an arrow $f:A\rightarrow A^\prime$ in $\Cat$ such that
$
 f\circ h = h^\prime \circ \mon(f)  
$.
The category of \ema algebras for $\mon$, denoted by $\EM(\mon)$, has $\mon$--algebras as objects and $\mon$--morphisms as arrows.
\end{definition}

The definitions above are purely categorical and, as a consequence, the category $\EM(\mon)$ is sometimes hard to work with as an abstract entity. It is therefore very useful when $\EM(\mon)$ can be proven isomorphic to a
category whose objects and morphisms are well--known and understood. This leads to the concept of \emph{presentation of a monad}. Before introducing it, we recall some basic definitions of universal algebra (see \cite{univalgebrabook} for a standard introduction).

\begin{definition}\label{basic:definitions:universal-algebra}
A signature $\Sigma$ is a set of function symbols each having its own arity.
We denote with $\terms X \Sigma$ the set of terms built from a set of generators $X$ with the function symbols of $\Sigma$. An equational theory $\et$ of type $\Sigma$ is a set  $\et \subseteq {\terms X \Sigma} \times  {\terms X \Sigma } $ of equations between terms  $\terms X \Sigma$ closed under deducibility in the logical apparatus of equational logic. Given a set $E\subseteq  {\terms X \Sigma} \times  {\terms X \Sigma}$ of equations, the theory induced by $E$ is the smallest equational theory containing $E$. The models of a theory $\et$ are  \emph{$\Sigma$--algebras of the theory $\et$},  i.e., structures $(A,\{ f^A\}_{f\in\Sigma})$ consisting of a set $A$ and operations $f^A:A^{ar(f)}\rightarrow A$, for each operation symbol $f\in \Sigma$ having arity $ar(f)$, satisfying all (universally quantified) equations in $\et$. A homomorphism from  $(A,\{ f^A\}_{f\in\Sigma})$  to  $(B,\{ f^B\}_{f\in\Sigma})$ is a function $g:A\rightarrow B$ such that $g(f^A(a_1,\dots, a_n)) = f^B( g(a_1), \dots, g(a_n))$, for all $f\in \Sigma$.  We denote with $\acat(\et)$  the category whose objects are models of the theory $\et$ and morphisms are homomorphisms.
\end{definition}



\begin{definition}[Presentation of $\Sets$ monads] Let $\mon$ be a monad on $\Sets$. A presentation of $\mon$ is an equational theory $\et$ such that the categories $\EM(\mon)$ and $\acat(\et)$ are isomorphic.
\end{definition}

In what follows we introduce equational theories that are presentations of the three $\Sets$ monads $\fpset$, $\dset$ and $\cset$ introduced earlier.

\begin{definition}\label{def:semilattices:set}
The theory $\etsl$ of semilattices is the theory having as signature $\sigsl=\{\oplus\}$ and equations stating that $\oplus$ is associative, commutative, and idempotent:\\
\begin{tabular}{l}
(A)\;
$(x\oplus y)\oplus z = x\oplus (y\oplus z)$
\qquad
(C)\; 
$x\oplus y = y\oplus x$
\qquad 
(I)\;
$x\oplus x=x.$
\end{tabular}
\end{definition}

\begin{definition}\label{def:convex:set}
The theory $\etca$ of convex algebras has signature $\sigca=\{+_p\}_{p\in(0,1)}$ and, for all $p,q\in(0,1)$, the equations for probabilistic associativity, commutativity, and idempotency:\\
\begin{tabular}{l}
(A$_{p}$)\;
$(x+_qy)+_pz = x+_{pq}(y+_{\frac{p(1-q)}{1-pq}}z)$
\qquad
(C$_{p}$)\; 
$x+_py =  y+_{1-p}x$
\qquad 
(I$_{p}$)\;
$x+_px=x.$
\end{tabular}
\end{definition}

\begin{definition}\label{def:convexsemilattices:set}
The theory $\etcs$ of convex semilattices is the theory with signature $\sigcs=(\{\oplus\}\cup \{+_p\}_{p\in(0,1)})$ where $\oplus$ satisfies the equations of semilattices, $\pplus p$ satisfies the equations of convex algebras for every $p\in (0,1)$, and, furthermore, for every $p\in (0,1)$ the following distributivity equation (D) is satisfied: $x+_p (y \oplus z) = (x+_p y)\oplus (x+_p z)$. 
\end{definition}
The following proposition collects known results in the literature (see \cite{swirszcz:1974, Doberkat0608, jacobs:2010, BSV19}).\begin{proposition}$ $
\vspace{-0.25cm}
\begin{enumerate}
\itemsep-0.2cm
\item The theory $\etsl$ of semilattices is a presentation of $\pset$, i.e., $\acat(\etsl)\cong \EM(\pset)$.
\item The theory $\etca$ of convex algebras is a presentation of $\dset$, i.e., $\acat(\etca)\cong \EM(\dset)$.
\item The theory $\etcs$ of convex semilattices is a presentation of $\cset$, i.e., $\acat(\etcs)\cong \EM(\cset)$.
\end{enumerate}
\end{proposition}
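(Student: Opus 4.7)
All three items share the same skeleton: to show $\acat(\et) \cong \EM(\mon)$ it suffices to exhibit, for each set $X$, a natural $\Sigma$-algebra structure on $\mon(X)$ satisfying the equations of $\et$, and to prove that $(\mon(X),\dots)$ is the free $\et$-algebra on $X$ with unit $\eta_X$. Indeed, freeness gives an adjunction between $\Sets$ and $\acat(\et)$ whose induced monad is $\mon$, and since the forgetful functor $\acat(\et)\to\Sets$ is monadic (equational theories are always monadic over $\Sets$), the comparison functor $\acat(\et)\to\EM(\mon)$ is an isomorphism of categories.

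\textbf{Cases (1) and (2).} For the powerset case, equip $\pset(X)$ with $S\oplus T = S\cup T$; the semilattice axioms are immediate, and given a semilattice $(A,\oplus^A)$ and $f\colon X\to A$ the assignment $\bar f(\{x_1,\dots,x_n\}) = f(x_1)\oplus^A\cdots\oplus^A f(x_n)$ is the unique homomorphism extending $f$, with $\eta^\fpset_X$ as unit of freeness. For the distribution case, equip $\dset(X)$ with $\distr +_p \distr' = p\cdot\distr + (1-p)\cdot\distr'$; the axioms (A$_p$), (C$_p$), (I$_p$) reduce to arithmetic identities on $[0,1]$, and given a convex algebra $(A,\{+_p^A\})$ and $f\colon X\to A$ the unique extension sends $\sum_i p_i x_i$ to the iterated $+_p^A$-combination $\sum^A_i p_i\cdot f(x_i)$ (well-defined by the convex algebra axioms).

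\textbf{Case (3).} Equip $\cset(X)$ with $S\oplus T = \conv(S\cup T)$ and $S +_p T = \wms(pS+(1-p)T) = \{\, p\cdot\distr + (1-p)\cdot\distr' \mid \distr\in S, \distr'\in T\,\}$. The semilattice axioms for $\oplus$ follow from idempotence and commutativity of union together with $\conv(\conv(\cdot)) = \conv(\cdot)$; the convex algebra axioms for $+_p$ follow pointwise from those of $\dset(X)$; and the distributivity equation $S +_p (T\oplus U) = (S+_p T)\oplus (S+_p U)$ is verified by a direct double inclusion, using that any element of $T\oplus U$ is a convex combination $q\cdot\distr_T + (1-q)\cdot\distr_U$ with $\distr_T\in T$, $\distr_U\in U$, and reassociating the resulting combination via axiom (A$_p$).

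\textbf{The main obstacle: freeness in case (3).} Given a convex semilattice $(A,\oplus^A,\{+_p^A\})$ and $f\colon X\to A$, the candidate extension is $\bar f(S) = \bigoplus^A_{\distr\in\ub(S)} \hat f(\distr)$, where $\hat f\colon \dset(X)\to A$ is the convex-algebra extension of $f$ from case (2). The delicate step is well-definedness: if two finite sets $S_1, S_2\subseteq\dset(X)$ satisfy $\conv(S_1) = \conv(S_2)$, then $\bigoplus^A_{\distr\in S_1}\hat f(\distr) = \bigoplus^A_{\distr\in S_2}\hat f(\distr)$. This is proved via a normal-form argument for terms over $\sigcs$: using the axioms of $\qetcs$ (associativity/commutativity/idempotency of $\oplus$, the convex algebra axioms, and crucially distributivity (D) to push $\oplus$ outside of $+_p$), every term over $X$ can be rewritten to a ``canonical'' form $\bigoplus_i \bigl(\sum_j p_{ij} x_{ij}\bigr)$ representing a finite union of formal convex combinations; distributivity combined with idempotence then yields that any convex combination of a set $\{\distr_1,\dots,\distr_n\}$ is absorbed by the $\oplus$-sum of the $\distr_i$ interpreted in $A$, so the value depends only on $\conv(\{\distr_1,\dots,\distr_n\})$. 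Uniqueness of $\bar f$ follows because every element of $\cset(X)$ is, by construction, a finite $\oplus$-sum of finite $+_p$-combinations of Dirac distributions, and any convex semilattice homomorphism extending $f$ must act as $\bar f$ on such expressions.
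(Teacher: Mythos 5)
The paper itself gives no proof of this proposition: it is stated as a collection of known results with citations to the literature (Swirszcz, Doberkat, Jacobs, and \cite{BSV19} for the convex-semilattice case). Your sketch is the standard argument used in those references, and it is correct in outline: free-algebra identification plus monadicity of varieties over $\Sets$ yields the isomorphism of categories, and your treatment of the delicate point in case (3) --- well-definedness of $\bigcplus_{\distr\in S}\hat f(\distr)$ on convex closures via the absorption law derived from distributivity (D) and idempotence --- is exactly the key lemma of \cite{BSV19,BSV20ar} that this paper itself imports (the convexity equation $x \oplus y = x \oplus y \oplus (x +_p y)$ of Example \ref{example:presentation:set:3} and its generalisation, Proposition \ref{prop:removingconv} in the appendix). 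Two small points worth making explicit if you write this out in full: (i) identifying the induced monad of the free--forgetful adjunction with $\mon$ requires checking not only that the free $\et$-algebra on $X$ has carrier $\mon(X)$ with unit $\eta^{\mon}_X$, but also that $\mu^{\mon}_X$ is the unique homomorphic extension of $id_{\mon(X)}$ (routine in all three cases); (ii) besides well-definedness you must verify that $\bar f$ is itself a convex-semilattice homomorphism, which again uses the generalised form of (D) --- your normal-form argument covers this implicitly. Note also a slip of notation: in case (3) you invoke the axioms of $\qetcs$ where you mean the plain equational theory $\etcs$, since this proposition lives entirely in $\Sets$.
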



\subsubsection{One Application: Representation of Term Algebras}

Having presentations of $\Sets$ monads as categories for algebras of equational theories is mathematically convenient for several reasons. One useful application, especially in the field of program semantics, are representation theorems for free algebras, i.e., term algebras.

In this section we assume the reader to be familiar with the concept of free object in a category (see, e.g., \cite[\S 10.3]{Awodey}).
The free object generated by $X$ in the category $\EM(\mon)$ is the $\mon$--algebra $(\mon(X), \mu^{\mon}_{X})$. The free object generated by $X$ in the category $\acat(\et)$ is the term algebra, i.e., the algebra whose the carrier is ${\terms X \Sigma}_{/\et}$, the set of $\Sigma$--terms constructed from the set of generators $X$ taken modulo the equations of the theory $\et$, and with operations defined on equivalences classes, that is : $f([t_1]_{/\et},\dots, [t_n]_{/\et}) = [f(t_1,\dots, t_n)]_{/\et}$ for each $f\in\Sigma$. 
\vvcut{The definition does not depend on any specific choice of representatives for the equivalence classes.}
These characterizations, together with the fact that free objects are unique up to isomorphism, can be used to derive the following result. 
\begin{proposition}\label{prop:free-algebras}
Let $\mon$ be a monad on $\Sets$ and let $F:\acat(\et)\cong \EM(\mon)$ be a presentation of $\mon$ in terms of the equational theory $\et$ of type $\Sigma$. Then the term algebra ${\terms  X \Sigma}_{/\et}$ and the free \ema algebra $(\mon(X), \mu^{\mon}_{X})$ are isomorphic (via $F$).
\end{proposition}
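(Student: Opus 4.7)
The plan is to reduce this to the uniqueness (up to isomorphism) of free objects, combined with the observation that the isomorphism $F$ underlying a presentation commutes with the forgetful functors to $\Sets$.

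First I would recall the universal properties of the two candidates. The term algebra $\terms{X}{\Sigma}_{/\et}$ is the free $\et$-algebra on $X$: every function $f:X\to A$ into the carrier of an $\et$-algebra $A$ extends uniquely to a $\Sigma$-homomorphism $\terms{X}{\Sigma}_{/\et} \to A$, defined by induction on terms and well-defined on equivalence classes because $A$ satisfies the equations of $\et$. Dually, $(\mon(X),\mu^{\mon}_X)$ is the free Eilenberg-Moore algebra on $X$: for every $\mon$-algebra $(A,h)$ and every function $f:X\to A$, the map $h\circ \mon(f):\mon(X)\to A$ is the unique $\mon$-algebra morphism whose composition with $\eta^{\mon}_X$ equals $f$. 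In adjunction terms, the forgetful functors $U_{\et}:\acat(\et)\to\Sets$ and $U_{\mon}:\EM(\mon)\to\Sets$ both admit left adjoints, sending $X$ to the corresponding free algebras.

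Next, I would make explicit that the isomorphism $F:\acat(\et)\to \EM(\mon)$ underlying a presentation commutes with the forgetful functors, i.e.\ $U_{\mon}\circ F = U_{\et}$. This is the standard content of the notion of a presentation: an $\et$-algebra and its image under $F$ share the same underlying set, and the operations of one are definable from those of the other. For the specific presentations considered in this section (for $\pset$, $\dset$, and $\cset$) this is immediate by inspection of how the operations $\oplus$ and $+_p$ are interpreted via the monad multiplication.

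Given these two ingredients, the conclusion is formal. Since right adjoints determine their left adjoints up to unique natural isomorphism, the equality $U_{\et} = U_{\mon}\circ F$ forces $F$ applied to the free $\et$-algebra on $X$ to be isomorphic, as a $\mon$-algebra, to the free $\mon$-algebra on $X$, yielding the desired isomorphism $F(\terms{X}{\Sigma}_{/\et}) \cong (\mon(X),\mu^{\mon}_X)$ in $\EM(\mon)$. Equivalently, one can construct the isomorphism concretely by extending $[x]_{/\et}\mapsto \eta^{\mon}_X(x)$ using the universal property of the term algebra, and then using the universal property of $(\mon(X),\mu^{\mon}_X)$ to build an inverse. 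The only real subtlety is to pin down the commutation $U_{\mon}\circ F = U_{\et}$; once that is settled, the rest is a one-line appeal to uniqueness of free objects.
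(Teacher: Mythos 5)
Your proof is correct and follows essentially the same route as the paper: identify ${\terms X \Sigma}_{/\et}$ and $(\mon(X),\mu^{\mon}_X)$ as the free objects on $X$ in their respective categories and conclude by uniqueness of free objects up to isomorphism, transported along $F$. The subtlety you flag---that the isomorphism $F$ must commute with the forgetful functors to $\Sets$ for the argument to go through---is indeed the implicit assumption in the paper's notion of presentation (its omitted longer treatment works with concrete categories over $\Sets$), so you have identified exactly the right hypothesis to make explicit.
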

In other words, a presentation theorem for $\mon$ provides automatically representation results for term algebras 
via the known semantic behaviour of the multiplication of $\mon$. 

\vvcut{
Having presentation of $\Sets$ monads in terms of categories of algebras of equational theories is mathematically convenient for a variety of reasons. In what follows, we present in some details one application which is quite useful, especially in the field of program semantics: representation theorems for free term algebras.

We first recall the standard categorical definition of free objects.
\ntodo{all these definitions and propositions could be written shortly here, and put in appendix. We could directly write here the three examples. Also, the notation for multi-ary nondeterministic plus could be now entirely avoided in main text, and directly introduced in appendix. For the multi-ary probabilistic plus it's harder to get rid of it (but it could be put in previous section)}

\begin{definition}
A concrete category over $\Sets$ is a pair $(\Cat, F)$ where $\Cat$ is a category and $F:\Cat \rightarrow \Sets$ is a faithful functor.
\end{definition}
Examples of concrete categories over $\Sets$ are given by those categories, such as the category $\acat(\et)$ of algebras of a given equational theory $\et$ and the category $\EM(\mon)$ of \ema algebras of a $\Sets$ monad $\mon$, whose objects are sets (usually equipped with additional structure) and morphisms are set--theoretic functions (usually respecting the structure) and $F:\Cat \rightarrow \Sets$ is the forgetful functor.

\begin{definition}[Free objects]
Let $(\Cat,F)$ be a concrete category over $\Sets$, let $X$ be a set (of generators) in $\Sets$, let $A\in\Cat$ be an object in $\Cat$ and let $i:X\rightarrow F(A)$ be an injective map between sets (called the canonical insertion). We say that $A$ is the \emph{free object on $X$ (with respect to $i$)} if and only if for any object $B\in\Cat$ and any map between sets $f:X\rightarrow F(B)$, there exists a unique morphism $g:A\rightarrow B$ such that $f=F(g)\circ i$. 
\end{definition}

It follows from the above definition and its universal property that, in a given concrete category over $\Sets$, the free object over $X$ is unique up--to isomorphism, if it exists.

The following two results are well--known and provide descriptions of free objects in $\acat(\et)$, the category of algebras for a given equational theory $\et$, and in $\EM(\mon)$, the category of \ema algebras for a $\Sets$--monad $\mon$.

\begin{proposition}
Let $\mon:\Sets\rightarrow\Sets$ be a monad and let $X\in\Sets$ be a set. The free object on $X$ in the category $\EM(\mon)$ is the \ema algebra $\mu^{\mon}_X:\mon\mon X\rightarrow \mon X$.
\end{proposition}

\begin{proposition}\label{term:algebra:set:initial:theorem}
Let $\Sigma$ be a signature, $X$ a set and $\et\subseteq {\terms X \Sigma} \times {\terms X \Sigma}$ an equational theory of type $\Sigma$. The free object on $X$ in the category $\acat(\et)$ is the term algebra ${\terms X \Sigma}_{/\et}$ defined as:
\begin{itemize}
\item the carrier is ${\terms X \Sigma}_{/\et}$, the set of $\Sigma$--terms constructed from the set of generators $X$ taken modulo the equations of the theory $\et$, and
\item the operations, for each $f\in\Sigma$ are defined on equivalences classes, that is as follows: 
$$f([t_1]_{/\et},\dots, [t_n]_{/\et}) = [f(t_1,\dots, t_n)]_{/\et}.$$ 
The definition does not depend on any specific choice of representatives for the equivalence classes.
\end{itemize}
\end{proposition}

The two results above, together with the fact that free objects are unique up--to isomorphism, can be used to derive the following corollary. 
\begin{corollary}\label{prop:free-algebras}
Let $\mon$ be a monad on $\Sets$ and let $F:\acat(\et)\cong \EM(\mon)$ be a presentation of $\mon$ in terms of the equational theory $\et$ of type $\Sigma$. Then the free term algebra ${\terms X \Sigma }_{/\et}$ and the free \ema algebra $\mu^{\mon}_X:\mon\mon(X)\rightarrow \mon(X)$ are isomorphic (via $F$).
\end{corollary}

In other words, a presentation theorem for $\mon$ provides automatically representation results for free term algebras and these are often quite handy, especially in the field of program semantics.
}

\begin{example}\label{example:presentation:set:3}
The presentation of the monad $\cset$ in terms of the theory of convex semilattices implies that the free convex semilattice generated by $X$ is isomorphic with the convex semilattice $(\cset X ,\oplus,+_p)$ where $S_1\oplus S_2 = \conv(S_1\cup S_2)$ (convex union) and $S_1 +_p S_2 = \wms ( p S_1 + (1-p)S_2)$ (weighted Minkowski sum), for all $S_1,S_2\in\cset (X)$. In other words, the set ${\terms  X {\sigcs}}_{/\etcs}$ of  convex semilattice terms modulo the equational theory of convex semilattices can be identified with the set $\cset (X)$ of finitely generated convex sets of finitely supported probability distributions on $X$.
The isomorphism is explicitly given in \cite{BSV20ar} by the function 
$\ics : \cset(X) \to {\terms  X {\sigcs}}_{/\etcs} $
defined as 
$\ics(S) = [\bigcplus_{\distr\in \ub(S) } (\bigpplus_{x \in \support(\distr) } \distr(x)\,x)]_{/\etcs}$, where $\bigcplus_{i\in I} x_{i}$ and $\bigpplus_{i\in I} p_{i} \,x$ are respectively notations for the binary operations $\cplus$ and $\pplus p$ extended to operations of arity $I$, for $I$ finite (see, e.g., \cite{stone:1949, BSS17}).
It is useful to stress the important role played by the axiom (D) of  convex semilattice. This equation allows us to derive the convexity equation $x \oplus y = x \oplus y \oplus (x \pplus p y)$ (see, e.g., \cite[Lemma 14]{BSV20ar}).
\end{example}

\vvcut
{\begin{example}\label{example:presentation:set:1}
Consider the presentation of the monad $\fpset$ in terms of the theory of semilattices. This implies that the free semilattice generated by $X$ is isomorphic to the semilattice $(\fpset X,\oplus)$ where $X_1 \oplus X_2 = X_1\cup X_2$, for all $X_1,X_2\subseteq X$. In other words, the set ${\terms {\{\oplus\}} X }_{/\etsl}$ of semilattice terms modulo the equational theory of semilattices can be identified with $\fpset (X)$.
The isomorphism is given by the function 
$\isl : \fpset(X) \to {\terms {\{\oplus\}} X }_{/\etsl} $
defined as 
$\isl(\{x_{i}\}_{i\in I}) = [\bigcplus_{i\in I} x_{i}]_{/\etsl} $,
with $\bigcplus_{i\in I} x_{i}$ defined inductively on the cardinality of $I$ as follows:
$\bigcplus_{i\in \{1\}} x_{i}=x_{1}$ and 
$\bigcplus_{i\in \{1,...n+1\}} x_i =  ( \bigcplus_{i\in \{1,...n\}} x_i ) \cplus x_{n+1}$.
Note that the specific choice made in the ordering of the elements of the set is not relevant, so we assume that an arbitrary choice is made.
\end{example}

\begin{example}\label{example:presentation:set:2}
The presentation of the monad $\dset$ in terms of the theory of convex algebras implies that the free convex algebra generated by $X$ is isomorphic to the convex algebra $(\dset X,+_p)$ where $\distr_1 +_p \distr_2 = p\cdot\distr_1 + (1-p)\cdot\distr_2)$, for all $\distr_1,\distr_2\in \dset{X}$. In other words, the set ${\terms {\{\pplus p\}_{p\in(0,1)}} X }_{/\etca}$ of convex algebra terms modulo the equational theory of convex algebras can be identified with the set $\dset(X)$ of finitely supported probability distributions over $X$.
The isomorphism is given by the function 
$\ica : \dset(X) \to {\terms {\{\pplus p\}_{p\in(0,1)}} X }_{/\etca} $
defined as 
$\ica(\distr) = [\bigpplus_{x \in \support(\distr) } \distr(x) \, x]_{/\etca} $,
with $\bigpplus_{x \in \support(\distr) } \distr(x) \,x$ defined by induction on the cardinality of $\support(\distr) =\{x_{1},...,x_{n}\}$ as follows:
$\bigpplus_{i\in \{1\}} \delta(x_{i}) \, x_{i} = x_{1}$ and 
$\bigpplus_{i\in \{1,...n+1\}} \distr(x_{i}) \, x_{i} = \big( \bigpplus_{i\in \{1,...n\}} \distrb(x_{i}) \, x_i\big) \pplus{(1-\distr(x_{n+1}))} x_{n+1}$,
with $\distrb$ the distribution such that $\distrb(x_{i}) =\sum_{i\in \{1,...n\}} \frac {\distr(x_{i})} {1-\distr(x_{n+1})}$.
Again, we are assuming that an ordering of the elements of the support of the distribution is chosen.
\end{example}

\begin{example}\label{example:presentation:set:3}
Lastly, the presentation of the monad $\cset$ in terms of the theory of convex semilattices implies that the free convex semilattice generated by $X$ is isomorphic with the convex semilattice $(\cset X ,\oplus,+_p)$ where $S_1\oplus S_2 = \conv(S_1\cup S_2)$ (convex union) and $S_1 +_p S_2 = \wms ( p S_1 + (1-p)S_2)$ (weighted Minkowski sum), for all $S_1,S_2\in\cset X$. In other words, the set ${\terms {\{\oplus\}\cup \{\pplus p\}_{p\in(0,1)}} X }_{/\etcs}$ of  convex semilattice terms modulo the equational theory of convex semilattices can be identified with the set $\cset (X)$ of finitely generated convex sets of finitely supported probability distributions on $X$.
The isomorphism explicitly given in \cite{BSV20ar} by the function 
$\ics : \cset(X) \to {\terms {\{\oplus\}\cup \{\pplus p\}_{p\in(0,1)}} X }_{/\etcs} $
defined as 
$\ics(S) = [\bigcplus_{\distr\in \ub(S) } (\bigpplus_{x \in \support(\distr) } \distr(x)\,x)]_{/\etcs}$ 
A chosen ordering of the elements of $\dset(X)$ is also assumed in the definition. 
\end{example}
}

We remark that presentation results of monads may provide more insights than just the application given by Proposition \ref{prop:free-algebras}, as they give representation for the whole categorical structure and not just free objects.

\section{Monads on Met and Quantitative Equational Theories}
\label{sec:monadmet}

In Section \ref{sec:monads:set} we have considered monads in the category $\Sets$. We now shift our focus to monads in the category $\Met$ of metric spaces and non--expansive functions. The category $\Met$ provides a natural mathematical setting for developing the semantics of programs exhibiting quantitative behaviour such as, e.g., probabilistic choice. It is indeed appropriate in this setting to replace the usual notion of program equivalence with the more informative notion of program distance (see, e.g., \cite{prakashbook,GJS90,breugel2005,DJGP02,DBLP:conf/icalp/BreugelW01}). 

\vvcut{We first recall some basic definitions regarding metric spaces.}

\begin{definition}
A metric space is a pair $(X,d)$ such that $X$ is a set and $d:X\rightarrow\mathbb{R}$ is a function, called the \emph{metric}, satisfying the following properties: $d(x,x) = 0$, $d(x,y)=d(y,x)$ and $d(x,y) \leq d(x,z) + d(z,y)$, for all $x,y,z\in X$. A function $f:X_1 \rightarrow X_2$ between two metric spaces $(X_1,d_1)$ and $(X_2,d_2)$ is called non--expansive (a.k.a.\ $1$--Lipschitz) if $d_2( f(x), f(y) ) \leq d_1(x,y)$ for all $x,y\in X_1$. We denote with $\Met$ the category whose objects are metric spaces and whose morphisms are non--expansive maps.
\end{definition}

Given two metrics $d_1,d_2$ on $X$, we write $d_1\funleq d_2$ if for all $x,x^\prime\in X$, it holds that $d_1(x,x^\prime)\leq d_2(x,x^\prime)$. Let $(Y,d)$ be a metric space,  $X$ a set and $f:X\rightarrow Y$. We write $d\funpair f f$ for the metric on $X$ defined as $d\funpair f f  ( x_1,x_2) =   d(f(x_1), f(x_2))$.  Let $d_{\mathbb{R}}$ be the Euclidean metric  on  $\mathbb{R}$  defined as $d_\mathbb{R}(r_1,r_2) = | r_1 - r_2|$. If $(X,d)$ is a metric space, we simply say that $f:X\rightarrow[0,1]$ is non--expansive to mean that $f:(X,d)\rightarrow ([0,1],d_{\mathbb{R}})$ is non--expansive. The metric $d$ of a metric space $(X,d)$ induces a topology on $X$ whose open sets are generated by the \emph{open balls} of the form $B(x,\epsilon)=\{ y\in X \mid d(x,y) < \epsilon\}$, for $x\in X$ and $\epsilon >0$. A subset $Y\subseteq X$ is called \emph{compact} if it is closed and bounded (i.e., the distance between elements in $Y$ is bounded by some real number). The collection of non--empty compact subsets of a metric space $(X,d)$ is denoted by $\kompactd X d$. Note that every finite subset of $X$ belongs to $\kompactd X d$.


The $\Sets$ monads $\fpset$ and $\dset$ defined in Section \ref{sec:monads:set} can be extended to monads in $\Met$. These extensions are well--known and are based on metric liftings constructions due to  Hausdorff and Kantorovich (see \cite{Kechris} for a standard reference).

\begin{definition}[Hausdorff Lifting]
Let $(X,d)$ be a metric space. The Hausdorff lifting of $d$ is a metric $\haus(d)$ on $\kompactd X d$,  the collection of non--empty compact subsets of $X$, defined as follows for any pair $X_1,X_2 \in \kompactd X d$:
$$
\haus(d)\big(X_1,X_2) = \max\big\{  \sup_{x_1\in X_1}\inf_{x_2\in X_2}d(x_1,x_2)  \ \     , \ \ \sup_{x_2\in X_2}\inf_{x_1\in X_1}d(x_1,x_2)    \big\}.
$$
\end{definition}
This leads to the well--known \emph{hyperspace} monad $\hs$ on $\Met$ (\cite{Hausdorff14}, see also \cite{Kechris}).\footnote{
Variants of this monad can be defined in other categories, such as the Vietoris monad 
on compact Hausdorff spaces and continuous functions (see, e.g., \cite{Garner20}). 
}

\begin{definition}\label{def:mett:compact}
The \emph{hyperspace} monad $(\hs, \eta^\hs, \mu^\hs)$ on $\Met$ is defined as follows.
Given an object $(X,d)$ in $\Met$, $\hs(X,d) = \big( \kompactd X d, \haus(d) \big)$, the metric space of non--empty compact subsets of $X$ equipped with the Hausdorff distance. Given a non--expansive map $f:(X,d_X)\rightarrow (Y,d_Y)$, $\hs (f)(X^\prime)=\bigcup_{x\in X^\prime} f(x)$.
The unit $\eta^\hs_{(X,d)}: (X,d) \rightarrow \hs(X,d)$ is defined as $\eta^\hs_{(X,d)}(x) = \{x\}$, and the multiplication $\mu^\hs_{(X,d)}:  \hs\hs(X,d) \rightarrow \hs(X,d)$ is defined as $\mu^\hs_{(X,d)}(\{ X_i\}_{i\in I})=\bigcup_{i}X_i$.
\end{definition}


The restriction of the monad $\hs$ to finite (hence compact) subsets leads to the following version of the non--empty finite powerset monad on $\Met$, which we denote with $\lfpset$ to distinguish it from the $\Sets$ monad $\fpset$.

\begin{definition}\label{def:lfpset}
The \emph{non--empty finite powerset} monad $(\lfpset, \eta^{\lfpset}, \mu^{\lfpset})$ on $\Met$ is defined as follows.
Given an object $(X,d)$ in $\Met$, $\lfpset(X,d) = \big(\fpset(X), \haus(d) \big)$, the collection of finite non--empty subsets of $X$ equipped with the Hausdorff distance.
The action of $\lfpset$ on morphisms, the unit $\eta^{\lfpset}$ and the multiplication $\mu^{\lfpset}$ are defined as for the $\Sets$ monad $\fpset$ (or, equivalently, as for the $\hs$ monad on $\Met$ restricted to finite sets).
\end{definition}

Next, we introduce the Kantorovich lifting on finitely supported distributions \cite{Kechris}.
\begin{definition}[Kantorovich Lifting]\label{def:kantorovich:lifting}
Let $(X,d)$ be a metric space. The Kantorovich lifting of $d$ is a metric $\kant(d)$ on $\dset(X)$, the collection of finitely supported probability distributions on $X$, defined as follows for any pair $\distr_1, \distr_2\in \dset(X)$:
\vvcut{$$
\kant(d) (\distr_1,\distr_2)=
\sup_{\stackrel{f:X\to [0,1]}{\textnormal{non--expansive}}} \Big\{  |  \big(\sum_{x\in X} f(x)\distr_1(x) \big) - \big(\sum_{x\in X} f(x)\distr_2(x) \big) | \Big\}
$$
where $| \_ |$ denotes the absolute value. Equivalently\footnote{The equivalence is a nontrivial result and follows from the Kantorovich-Rubinstein duality theorem.}, the metric $\kant(d)$ can be defined as:}
$$\kant(d) (\distr_1,\distr_2)=
\inf_{\omega \in Coup(\distr_1,\distr_2)}\Big( \sum_{(x_1,x_2)\in X \times X }\omega(x_1,x_2) \cdot d(x_1,x_2) \Big)
$$
where $Coup(\distr_1,\distr_2)$ is defined as the collection of couplings of $\distr_1$ and $\distr_2$, i.e., the collection of probability distributions on the product space $X\times X$ such that the marginals of $\omega$ are $\distr_1$ and $\distr_2$. Formally, 
$
 Coup(\distr_1,\distr_2) = \{ \omega\in\dset (X\times X) \mid  \dset(\pi_1)(\omega) = \distr_1 \textnormal{ and }  \dset(\pi_2) (\omega) = \distr_2 \}
$
where $\pi_1:X_1\times X_2 \rightarrow X_1$ and $\pi_2:X_1\times X_2 \rightarrow X_2$ are the projection functions. 
\end{definition}

We can now introduce the following version of the finitely supported probability distribution monad on $\Met$, which we denote with $\ldset$ to distinguish it from the $\Sets$ monad $\dset$.

\begin{definition}\label{def:ldset}
The \emph{finitely supported probability distribution} monad $(\ldset, \eta^{\ldset}, \mu^{\ldset})$ on $\Met$ is defined as follows. Given an object $(X,d)$ in $\Met$, $\ldset(X,d) = \big(\dset(X), \kant(d) \big)$, the collection of f.s. probability distributions on $X$ equipped with the Kantorovich distance.
The action of $\ldset$ on morphisms, the unit $\eta^{\ldset}$, and the multiplication $\mu^{\ldset}$ are defined as for the $\Sets$ monad $\dset$.
\end{definition}

The fact that the above definitions are correct (i.e., that $\ldset$ is a functor, that $\eta^{\ldset}$ and $\mu^{\ldset}$ are non--expansive and satisfy the monad laws) is well--known (see, e.g., \cite{Kechris,breugel2005, BaldanBKK18}). 

\subsection{Quantitative Equational Theories and Quantitative Algebras}

\vvcut
{The mathematical field of universal algebra studies \emph{algebras}, which are sets equipped with finitary operations, and their \emph{equational theories}, which are sets of equations closed under the deductive apparatus of equational logic.

\begin{center}
Algebras  $\Longleftrightarrow$ Equational Theories.
\end{center}

In a series of recent papers \cite{radu2016}, the authors have started investigating a new logical framework dealing with  \emph{quantitative algebras} which are algebras in $\Met$ (i.e., metric spaces equipped with finitary non--expansive operations) and \emph{quantitative equational theories}.

\begin{center}
Quantitative algebras  $\Longleftrightarrow$ Quantitative Equational Theories
\end{center}

\ntodo{until here this stuff should go in intro}
}

We provide here the essential definitions and results of the framework developed by Mardare, Panangaden, and Plotkin in \cite{radu2016}.
In what follows, a signature $\Sigma$ is fixed. Recall that $\terms X \Sigma$ denotes the set of terms constructed from $X$ using the function symbols in $\Sigma$. A substitution
is a map of type $\sigma: X\to \terms X \Sigma$. As usual, to any interpretation $\iota: X\rightarrow A$ of the variables into a set
corresponds, by homomorphic extension, a unique map $\iota: \terms X \Sigma \rightarrow A$. 

\begin{definition}[Quantitative Equational Theory]\label{def:qet}
A \emph{quantitative equation} is  an expression of the form $t =_{\epsilon} s$, where $ t, s \in \terms X \Sigma$ and $\epsilon \in [0,1]$. We denote with $E(\Sigma)$ the collection of all quantitative equations. We use the letters  $\Gamma, \Theta$ to range over subsets of $E(\Sigma)$. 
A \emph{quantitative inference} is an element of $2^{E(\Sigma)}\times E(\Sigma)$, i.e., a pair $(\Gamma, t=_\epsilon s)$ where $\Gamma\subseteq E(\Sigma)$ and $t=_\epsilon s$ is a quantitative equation. 
Note that $\Gamma$ needs not be finite.
A \emph{deducibility relation} is a set of quantitative inferences $\vdash\ \subseteq 2^{E(\Sigma)}\times E(\Sigma)$ closed under the following conditions which are stated for arbitrary $s,t,u\in \terms X \Sigma$, $\epsilon, \epsilon^\prime\in [0,1]$, $\Gamma,\Theta\subseteq E(\Sigma)$ and $f\in \Sigma$:\\
(Notation: we use the infix notation $\Gamma \vdash t=_\epsilon s$ to mean that $(\Gamma, t=_\epsilon s ) \in\ \vdash$)\\
\begin{tabular}{l}
(Refl)\;
$\emptyset \vdash t =_0 t$
\qquad 
(Symm) \;
$\{t=_\epsilon s\}\vdash s=_\epsilon t$
\qquad
(Triang) \;
$\{t =_\epsilon u,u =_{\epsilon'} s\} \vdash t =_{\epsilon+\epsilon'} s$
\\
(Max)\;
$\{t=_\epsilon s\}\vdash t=_{\epsilon'} s$, where $\epsilon^\prime > \epsilon$
\qquad
(Arch)
$\{t=_{\epsilon'} s\}_{\epsilon' >\epsilon}\vdash t=_\epsilon s$
\\
(NExp)\;
$\{t_1=_\epsilon s_i\}_{i\in {1\dots ar(f)}}\vdash f(t_1,\dots, t_n)=_{\epsilon} f(s_1,\dots s_n)$
\\
(Subst)\;
if $\Gamma \vdash t=_\epsilon s \text{ then } \{ \sigma(t) =_\epsilon \sigma(s) \mid (t=_\epsilon s) \in \Gamma\}\vdash \sigma(t)=_\epsilon \sigma(s)$, for all $\sigma \in S(\Sigma)$
\\
(Cut)\;
 \;if $\Gamma \vdash \Theta \text{ and } \Theta \vdash t=_\epsilon s \text{ then } \Gamma \vdash t=_\epsilon s$
\\
 (Assum)\;
 if  $ t=_\epsilon s\in\Gamma \text{ then } \Gamma\vdash t=_\epsilon s$, for all $\Gamma,t,s,\epsilon$.
\end{tabular}\\
%
where 
 in (Cut) the expression $\Gamma\vdash \Theta$ means that for all $(t=_\epsilon s)\in\Theta$ it holds that $\Gamma\vdash t=_\epsilon s$.
Given a set of quantitative inferences $\qt \subseteq 2^{E(\Sigma)}\times E(\Sigma)$, the quantitative equational theory induced by $\qt$ is the smallest deducibility relation which includes $\qt$.
\end{definition}

The models of quantitative theories are quantitative algebras, which we now introduce. 

\begin{definition}[Quantitative Algebra]\label{quantitative:algebra:def}
A \emph{quantitative algebra} of type $\Sigma$ is a structure $\alga=\big(A, \{f^A\}_{f\in \Sigma}, d_{A}\big)$ where  $(A,d_{A})$ is a metric space and, for each $f\in \Sigma$, the function $f^A : A^{ar(f)}\rightarrow A$ is a non--expansive map, with $A^{ar(f)}$ endowed with the $\sup$--metric defined as $d_{\sup}(\{a_i\}_{i\in ar(f)}, \{b_i\}_{i\in ar(f)}) = \max_{i \in ar(f)}( d(a_i, b_i))$.
  A homomorphism between quantitative algebras $\alga$ and $\algb$ of type $\Sigma$ is a non--expansive function $g:(A,d_A)\rightarrow (B,d_B)$ which preserves all operations in $\Sigma$, i.e., $g(f^A(x_1,\dots,x_n) ) = f^B ( g(x_1), \dots, g(x_n))$, for all $x_i\in A$. 
We say that $\mathbb{A}$ satisfies a quantitative inference $(\{ s_i =_{\epsilon_i} t_i\}_{i\in I}, s=_\epsilon t)$, written 
$
\{ s_i =_{\epsilon_i} t_i\} \models_\mathbb{A} s=_\epsilon t,
$
if  for every interpretation $\iota:X\rightarrow A$ of the variables $X$ into elements of $A$ the following holds: if for all $i\in I$,      $d_A\big(\iota(s_i), \iota(t_i)\big) \leq \epsilon_i$, then $d_A\big(\iota(s), \iota(t)\big) \leq \epsilon$.
We say that  $\mathbb{A}$ is a model of a quantitative theory $\qet$ if $\mathbb{A}$ satisfies every quantitative inference in $\qet$. 
We denote with $\qacat(\qet)$  the category having as objects the quantitative algebras that are models of $\qet$, and as arrows the non--expansive homomorphisms between quantitative algebras of type $\Sigma$.
\end{definition}

Every quantitative algebra of type $\Sigma$ satisfies the quantitative inferences generating the deducibility relation $\vdash$ in Definition \ref{def:qet}.
We refer to  \cite{radu2016} for proofs that all the above definitions are indeed well--defined. Two interesting quantitative theories studied in  \cite{radu2016} are the following.

\vvcut
{\todo{I think we can directly give quantitative inference, this notation ever used?} 
We say that $\mathbb{A}$ satisfies a quantitative equation $t =_{\epsilon} s$, written
$\mathbb{A}\models t =_{\epsilon} s$,
if for every interpretation $\iota:X\rightarrow A$ of the variables $X$ into elements of $A$, it holds that $d_A \big( \iota(s), \iota(t)\big) \leq \epsilon$.}


\vvcut
{In what follows a signature $\Sigma = \{ f_i \}_{i\in I}$ is fixed. Recall that $\terms X \Sigma$ denotes the set of terms constructed from $X$ using the function symbols in $\Sigma$. A substitution, ranged over by the letter $\sigma$, is a map of type $\sigma: X\to \terms X \Sigma$. As usual, to any interpretation $\iota: X\rightarrow A$ of the variables into the carrier of a quantitative algebra corresponds, by homomorphic extension, a unique map $\iota: \terms X \Sigma \rightarrow A$. 

\todo{give all these definitions in text}
\begin{definition}
A \emph{quantitative equation} is  an expressions of the form $t =_{\epsilon} s$, where $ t, s \in \terms X \Sigma$ and $\epsilon \in [0,1]$. We denote with $E(\Sigma)$ the collection of all quantitative equations.
\end{definition}

\begin{definition}
Let $\mathbb{A}=\big((A,d_A), \{f_i^A\}_{i\in I}\big)$ be a quantitative algebra of type $\Sigma$ and $t =_{\epsilon} s$ a quantitative equation. We say that $\mathbb{A}$ satisfies $t =_{\epsilon} s$, written
$\mathbb{A}\models t =_{\epsilon} s$,
if for every interpretation $\iota:X\rightarrow A$ of the variables $X$ into elements of $A$, it holds that $d_A \big( \iota(s), \iota(t)\big) < \epsilon$.
\end{definition}

We use the letters  $\Gamma, \Theta$ to range over subsets of $E(\Sigma)$. Given a set $\Gamma\subseteq E(\Sigma)$ and a substitution $\sigma$ we define $\sigma(\Gamma)$ as: $\sigma(\Gamma)= \{ \sigma(s) =_\epsilon \sigma(t) \mid (s=_\epsilon t) \in \Gamma\}$. 

\begin{definition}
A \emph{quantitative inference} is an element of $2^{E(\Sigma)}\times E(\Sigma)$, i.e., a pair $(\Gamma, s=_\epsilon t)$ where $\Gamma\subseteq E(\Sigma)$ and $s=_\epsilon t$ is a quantitative equation. 
\end{definition}

\begin{definition}
Let $\mathbb{A}$ be a quantitative algebra of type $\Sigma$ and $(\{ s_i =_{\epsilon_i} t_i\}, s=_\epsilon t)$ be a quantitative inference. We say that $\mathbb{A}$ satisfies $(\{ s_i =_{\epsilon_i} t_i\}, s=_\epsilon t)$, written 
$
\{ s_i =_{\epsilon_i} t_i\} \models_\mathbb{A} s=_\epsilon t,
$
if  for every interpretation $\iota:X\rightarrow A$ of the variables $X$ into elements of $A$ the following holds: if for all $i\in I$,      $d_A\big(\iota(s_i), \iota(t_i)\big) < \epsilon_i$, then $d_A\big(\iota(s), \iota(t)\big) < \epsilon$.
\end{definition}

In other words, quantitative inferences are interpreted as universally quantified implications  $\forall \vec{x}. \big(( \bigwedge_{i\in I}  s_i =_{\epsilon_i} t_i ) \Rightarrow s=_\epsilon t\big)$, possibly with an infinite number of premises.

\ntodo{Revise $<$ with $\leq$}

A set of quantitative inferences is therefore a relation 
$R \subseteq 2^{E(\Sigma)}\times E(\Sigma)$.

\begin{definition}\label{def:dedrel}
A \emph{deducibility relation} is a set of quantitative inferences $\vdash\ \subseteq 2^{E(\Sigma)}\times E(\Sigma)$ closed under the following conditions which are stated for arbitrary $s,t,u\in \terms X \Sigma$, $\epsilon, \epsilon^\prime\in [0,1]$, $\Gamma,\Theta\subseteq E(\Sigma)$ and $f\in \Sigma$:
\begin{center}
Notation: we use the infix notation $\Gamma \vdash s=_\epsilon t$ to mean that $(\Gamma, s=_\epsilon t ) \in\ \vdash$.
\begin{tabular}{l l }
(Refl)\ &
$\emptyset \vdash t =_0 t$
\\
(Symm) &
$\{t=_\epsilon s\}\vdash s=_\epsilon t$
\\
(Triang) &
$\{t =_\epsilon u,u =_{\epsilon'} s\} \vdash t =_{\epsilon+\epsilon'} s$
\\
(Max)&
$\{t=_\epsilon s\}\vdash t=_{\epsilon'} s$, where $\epsilon^\prime > \epsilon$
\\
(NExp)&
$\{t_1=_\epsilon s_i\}_{i\in {1\dots ar(f)}}\vdash f(t_1,\dots, t_n)=_{\epsilon} f(s_1,\dots s_n)$\\

(Arch) &
$\{t=_{\epsilon'} s\}_{\epsilon' >\epsilon}\vdash t=_\epsilon s$
\\
(Subst)&
if $\Gamma \vdash t=_\epsilon s \text{ then } \sigma(\Gamma)\vdash \sigma(t)=_\epsilon \sigma(s)$, for all $\sigma \in S(\Sigma)$
\\
(Cut)&
 if $\Gamma \vdash \Theta \text{ and } \Theta \vdash t=_\epsilon s \text{ then } \Gamma \vdash t=_\epsilon s$
\\
 (Assum)&
 if  $ t=_\epsilon s\in\Gamma \text{ then } \Gamma\vdash t=_\epsilon s$, for all $\Gamma,t,s,\epsilon$.
\end{tabular}
\end{center}

where $\Gamma\vdash \Theta$ in (Cut) means that for all $(t=_\epsilon s)\in\Theta$ it holds that $\Gamma\vdash t=_\epsilon s$.
\end{definition}

\ntodo{Explain that every quantitative algebra automatically satisfy the above axioms.}

\begin{definition}[Quantitative Equational Theory]\label{def:qet}
Given a set of quantitative inferences $\qt \subseteq 2^{E(\Sigma)}\times E(\Sigma)$, we use interchangeably the notations $\vdash_\qt$  and $\qet_{\qt}$ to denote the smallest deducibility relation which includes $\qt$. The relation $\qet_{\qt}$ is called the \emph{quantitative equational theory} induced by $\qt$. 
\end{definition}

\begin{definition}
Let $\mathbb{A}$ be a quantitative algebra of type $\Sigma$ and let $\qet_{\qt}$ be a quantitative equational theory. We say that $\mathbb{A}$ is a model of $\qet_{\qt}$, written
$
\mathbb{A} \models \qet_{\qt}
$ ,
if $\mathbb{A}$ satisfies every quantitative inference in $\qet_{\qt}$.
\end{definition}

\begin{definition}\label{def:category-quantitative-algebras}
Given a  quantitative equational theory $\qet_{\qt}$ of type $\Sigma$, we denote with $\qacat(\qet_{\qt})$ the category whose objects are quantitative algebras that are models of  $\qet_{\qt}$ and arrows are non--expansive homomorphisms between quantitative algebras of type $\Sigma$.
\end{definition}

We refer to  \cite{radu2016} for proofs that all the above definitions are indeed well--defined.}

\begin{definition}[Quantitative Semilattices]\label{def:semilattices:met}
The quantitative theory of \emph{quantitative semilattices}, denoted by  $\qet_{SL}$, has type $\Sigma_{SL}$ (see Definition \ref{def:semilattices:set}) and is
induced by the following quantitative inferences, for all $\epsilon_1,\epsilon_2\in[0,1]$:\\
\begin{tabular}{l}
(A)\; 
$\emptyset \vdash x \oplus ( y \oplus z) =_0 (x \oplus y) \oplus z$  
\qquad 
(C)\; 
$\emptyset \vdash x \oplus y =_0 y \oplus x$
\qquad 
(I)\; 
$\emptyset \vdash x\oplus x =_0 x$\\
(H)\; 
$\big\{x_1=_{\epsilon_1} y_1,     x_2=_{\epsilon_2} y_2\big\}\vdash x_1 \oplus x_2 =_{\max(\epsilon_1,\epsilon_2)} y_{1}\oplus y_{2}.$
\end{tabular}
%
%
\end{definition}


\begin{definition}[Quantitative Convex Algebras]\label{def:convex:met}
The quantitative theory of \emph{quantitative convex algebras}, denoted by  $\qet_{CA}$, has type $\Sigma_{CA}$ (see Definition \ref{def:convex:set}) and is
induced by the following quantitative inferences, for all $p,q\in (0,1)$ and $ \epsilon_1,\epsilon_2\in[0,1]$:\\
\begin{tabular}{l}
(A$_{p}$)\;
$\emptyset \vdash (x+_qy)+_pz =_0 x+_{pq}(y+_{\frac{p(1-q)}{1-pq}}z)$
\qquad
(C$_{p}$)\; 
$\emptyset \vdash x+_py  =_0  y+_{1-p}x$\\
(I$_{p}$)\;
$\emptyset \vdash x+_px  =_0  x$
\qquad
(K)\;
$\big\{x_1=_{\epsilon_1} y_{1},     x_2=_{\epsilon_2} y_{2}\big\}\vdash x_1 +_p x_2 =_{p\cdot \epsilon_1 + (1-p)\cdot \epsilon_2} y_{1}  +_p y_{2}.$
\end{tabular}
\end{definition}
In other words, the theories $\qet_{SL}$ and $\qet_{CA}$ are obtained by taking the equational axioms of semilattices and convex algebras respectively (Definitions \ref{def:semilattices:set} and \ref{def:convex:set}), replacing the equality $(=)$ with $(=_0)$, and by introducing the quantitative inferences  (H) and (K) respectively. 


\vvcut
{Given a  quantitative equational theory $\qet$ of type $\Sigma$, 
we have a corresponding equational theory $\et_{\mathcal{E}}$ of type $\Sigma$, that is, the theory generated by the set of equations
$\mathcal{E} = \{ (s, t) \mid s,t\in  \terms \emptyset \Sigma \ \textnormal{ and } \ \emptyset  \vdash_\qet s =_0 t\}$, with $\terms \emptyset \Sigma$
the set of ground terms over $\Sigma$.
The term algebra ${\terms \emptyset \Sigma}_{/\mathcal{E}}$, the free object in \,  
can be endowed with the metric $d_\qet $ defined as
$d_\qet ( [t]_{/\mathcal{E}}, [s]_{/\mathcal{E}} ) = \inf \big\{  \epsilon \in [0,1] \mid     \ \emptyset  \vdash_\qet s =_\epsilon t\  \big\}$.
As the operations on ${\terms \emptyset \Sigma}_{/\mathcal{E}}$ are non--expansive with respect to the metric $d_\qet$, this gives a quantitative algebra of type $\Sigma$, which we denote with ${\terms \emptyset \Sigma /}_{\qet}$.

One of the main results from \cite{radu2016} is the following:
\begin{proposition}[{\cite[Thm 5.3]{radu2016}}]\label{term:algebra:met:initial:prop}
Let $\qet$ be quantitative equational theory of type $\Sigma$. Then ${\terms \emptyset \Sigma /}_{\qet}$ is the initial object in $\qacat(\qet)$.
\end{proposition}
\marginpar{(equivalently: the free quantitative algebra generated by $\emptyset$) ?? or by empty metric space?}
}

\vvcut
{The main results from \cite{radu2016}, regarding the quantitative theories of quantitative semilattices and quantitative convex algebras, introduced above, take the following form.

First, a very general result (Theorem  \ref{term:algebra:met:initial:theorem} below), provides the corresponding of Lemma \ref{term:algebra:set:initial:theorem} in the context of quantitative theories.

\begin{definition}[\cite{radu2016}]
Let $\qet$ be quantitative equational theory of type $\Sigma$. Let  $\terms \emptyset \Sigma$  be the set of ground terms over the signature $\Sigma$. Let  $\mathcal{E}$ be the equivalence relation defined as:
\begin{equation}\label{def:eq_relation_on_terms}
\mathcal{E} = \{ (s, t) \mid s,t\in  \terms \emptyset \Sigma \ \textnormal{ and } \ \emptyset  \vdash_\qet s =_0 t\}
\end{equation}
and let $\et_{\mathcal{E}}$ be the corresponding equational theory of type $\Sigma$ (see Definition \ref{basic:definitions:universal-algebra}). Let ${\terms \emptyset \Sigma}_{/\mathcal{E}}$ be the ground $\et_\mathcal{E}$ term--algebra (as in Lemma  \ref{term:algebra:set:initial:theorem}). The algebra ${\terms \emptyset \Sigma}_{/\mathcal{E}}$ can be endowed with the following metric $d_\qet$:
\begin{equation}\label{def:inf_distance}
d_\qet ( [t]_{/\mathcal{E}}, [s]_{/\mathcal{E}} ) = \inf \big\{  \epsilon \in [0,1] \mid     \ \emptyset  \vdash_\qet s =_\epsilon t\  \big\}.
$$
\end{equation}
The operations on ${\terms \emptyset \Sigma}_{/\mathcal{E}}$ (see Lemma \ref{term:algebra:set:initial:theorem}) are non--expansive with respect to the metric $d_\qet$.  We denote with ${\terms \emptyset \Sigma /}_{\qet}$ the quantitative algebra of type $\Sigma$ obtained by endowing ${\terms \emptyset \Sigma}_{/\mathcal{E}}$ with the metric $d_\qet$.
\end{definition}

\begin{theorem}[{\cite[Thm 5.3]{radu2016}}]\label{term:algebra:met:initial:theorem}
Let $\qet$ be quantitative equational theory of type $\Sigma$. Then ${\terms \emptyset \Sigma /}_{\qet}$ is the initial object in $\qacat(\qet)$.
\end{theorem}
}

A general result from  \cite[\S 5]{radu2016} states that  free objects always exist in $\qacat(\qet)$, for any $\qet$, and they are isomorphic with term quantitative algebras for $\qet$. Moreover, such free objects are concretely identified for two relevant theories:

\begin{theorem}[{\cite[Cor 9.4 and 10.6]{radu2016}}]\label{theorems:radu:free:algebras}$ $
\vspace{-0.25cm}
\begin{itemize}
\itemsep-0.2cm
\item The free quantitative semilattice in $\qacat({\qet_{SL}})$ generated by a metric space $(X,d)$ is isomorphic to the metric space $\lfpset(X,d) = \big(\fpset(X), \haus(d) \big)$.
\item The free quantitative convex algebra  in $\qacat({\qet_{CA}})$ generated by a metric space $(X,d)$ is isomorphic to the metric space $\ldset(X,d) = \big(\dset(X), \kant(d) \big)$.
\end{itemize}
\end{theorem}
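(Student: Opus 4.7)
Plan: I would prove both items in parallel by verifying the universal property characterising free objects. In each case the plan is (i) to equip the candidate carrier with the obvious interpretation of the signature and check it is a model of the corresponding quantitative theory, (ii) to show the unit from $(X,d)$ is an isometric embedding, and (iii) to construct, for any model $\alga$ and every non-expansive $f\colon (X,d)\to (A,d_A)$, a unique non-expansive homomorphism $\hat f$ extending $f$. Uniqueness in both cases is automatic from the fact that the insertion of $X$ generates the candidate carrier as a $\Sigma$-algebra.

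For $\qetsl$, I would equip $\fpset(X)$ with $S_1\oplus S_2 \bydef S_1\cup S_2$. Axioms (A), (C), (I) then hold at distance $0$ since union is associative, commutative and idempotent. The decisive axiom is (H): unfolding $\haus(d)$ and observing that every element of $S_1\cup S_2$ lies in some $S_i$ and therefore witnesses distance $\leq \epsilon_i\leq\max(\epsilon_1,\epsilon_2)$ to an element of $S_i'\subseteq S_1'\cup S_2'$ yields the required bound, symmetrically. The unit $x\mapsto\{x\}$ is an isometry from the definition of $\haus(d)$ on singletons. Given $\alga$ and $f$, one sets $\hat f(\{x_1,\dots,x_n\})\bydef f(x_1)\oplus^A\cdots\oplus^A f(x_n)$; well-definedness follows from (A), (C), (I) in $\alga$, and non-expansiveness from iterated applications of (H) in $\alga$ after pairing each element of one set with a closest element of the other.

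For $\qetca$, I would equip $\dset(X)$ with $\distr_1 +_p\distr_2\bydef p\cdot\distr_1+(1-p)\cdot\distr_2$. Axioms (A$_p$), (C$_p$), (I$_p$) are elementary identities of convex combinations. Axiom (K) follows from the coupling characterisation of $\kant$: given couplings $\omega_i\in Coup(\distr_i,\distr_i')$ of cost $\leq\epsilon_i$, the distribution $p\cdot\omega_1+(1-p)\cdot\omega_2$ is a coupling of $\distr_1+_p\distr_2$ and $\distr_1'+_p\distr_2'$ of cost at most $p\epsilon_1+(1-p)\epsilon_2$. The unit $x\mapsto \dirac x$ is an isometry because $Coup(\dirac x,\dirac y)$ consists only of $\dirac{(x,y)}$. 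Given a model $\alga$ and non-expansive $f$, I would extend $f$ by setting $\hat f(\sum_{i=1}^n p_i x_i)$ to the iterated convex combination $\bigplus_{i=1}^n p_i f(x_i)$ taken in $\alga$; this is well defined and preserves the operations by (A$_p$), (C$_p$), (I$_p$).

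The main obstacle will be the non-expansiveness of $\hat f$ in the convex-algebra case. For $\distr,\distr'\in\dset(X)$ and an optimal $\omega\in Coup(\distr,\distr')$, the plan is to rewrite both as iterated convex combinations over the common index set $\support(\omega)$, namely $\distr=\bigplus_{(x,y)\in\support(\omega)}\omega(x,y)\cdot\dirac x$ and $\distr'=\bigplus_{(x,y)\in\support(\omega)}\omega(x,y)\cdot\dirac y$, which is justified by (A$_p$), (C$_p$), (I$_p$) since both have $\omega$ as marginal. Then the $n$-ary generalisation of (K), derived inductively from the binary case, combined with non-expansiveness of $f\colon(X,d)\to(A,d_A)$, gives $d_A(\hat f(\distr),\hat f(\distr'))\leq\sum_{(x,y)}\omega(x,y)\cdot d_A(f(x),f(y))\leq\sum_{(x,y)}\omega(x,y)\cdot d(x,y)=\kant(d)(\distr,\distr')$. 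The delicate point is to organise $\distr$ and $\distr'$ as iterated convex combinations with matching weights over a common index set, and to verify that the $n$-ary (K) genuinely follows from its binary instance by induction on $n$.
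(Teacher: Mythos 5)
The paper does not prove this statement -- it imports it verbatim from \cite[Cor 9.4 and 10.6]{radu2016} -- so there is no internal proof to compare against, but your verification of the universal property of the free object is correct and is the standard route. Its key ingredients (the coupling $p\cdot\omega_1+(1-p)\cdot\omega_2$ witnessing convexity of $\kant(d)$ for axiom (K), the optimal-coupling rewriting of both distributions over $\support(\omega)$ followed by the $n$-ary (K), and the closest-element pairing plus idempotence for the Hausdorff bound) are exactly the techniques this paper itself deploys for the harder $\qetcs$ case in Proposition \ref{prop:kantconv} and Lemmas \ref{lem:Gkant} and \ref{lem:Ghk}, so the delicate points you flag are handled there in the same way you propose.
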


We remark that the above theorem from \cite{radu2016} falls short from a full presentation result stating the isomorphisms of categories  $\qacat(\qetcs) \cong \EM(\lfpset)$ and $\qacat(\qetca) \cong \EM(\ldset)$. This latter more general statement does indeed hold and can be obtained, with some minor extra work, from the technical machinery developed in  \cite{radu2016}  (see Footnote \ref{footnote:introduction}). 



%

\vvcut
{Thus the results about quantitative algebras and quantitative theoreis proved in \cite{radu2016} take a very similar form of those of Corollary \ref{cor:free-algebras} in the context of algebras and equational theories.

In fact, it is possible to show\footnote{Although not spelled out explicitly, this general result is implicit in \cite{radu2016}. The proof technique we present in Section \ref{} can be adapted to derive this result directly. We leave the details for an extended version of this work.} a more general result, which implies the statement of Theorem \ref{theorems:radu:free:algebras} above: the categories of quantititative semilattices and quantitative convex algebra are presentations of the monads $\lfpset$ and $\ldset$, respectively, in the following sense.

\begin{proposition}\label{proposition:results:of:Radu}
The following assertions hold:
\begin{itemize}
\item The theory $\qet_{SL}$ of quantitative semilattices is a presentation of the monad $\lfpset$ in $\Met$, in the sense that: $\qacat({\qet_{SL}}) \cong \EM(\lfpset)$,
\item The theory $\qet_{CA}$ of quantitative convex algebras is a presentation of the monad $\ldset$ in $\Met$, in the sense that: $\qacat({\qet_{CA}}) \cong \EM(\ldset)$.\end{itemize}
\end{proposition}

We remark that the notion of presentation of a monad in $\Met$ in terms of a category of quantitative algebras has, to the best of our knowledge, not appeared in the literature before. This definition, which we argue is mathematically very natural, has been enabled only recently by the introduction of quantitative equational theories in \cite{radu2016}.

\ntodo{Qui forse potremmo anche menzionare qualche idea per il futuro perche' questo approccio apre varie strate di ricerca. In qualche modo voglia "vendere" il fatto che questa definizione e' interessante.}
}

\section{The Monad $\lcset$ on the Category of Metric Spaces}
\label{section:4:proof:monad}



In this section we introduce a $\Met$ version of the $\Sets$ monad $\cset$, and we denote it with $\lcset$. The monad $\lcset$ is obtained by composing the Hausdorff lifting $\haus$ and the Kantorovich lifting $\kant$ introduced in the previous section. 
\begin{proposition}\label{cc:of:compact:is:compact}
Let $(X,d)$ be a metric space and let $S\in \kompactd {\dset(X)} {\kant (d)}$. Then  $\conv(S)\in \kompactd {\dset(X)} {\kant (d)}$, i.e., the convex closure of $S$ is also compact.
\end{proposition}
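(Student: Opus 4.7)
The claim splits into two assertions: that $\conv(S)$ is bounded, and that it is closed in $(\dset(X), \kant(d))$.

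Boundedness is the easy part. For $\mu = \sum_{i=1}^{n} p_i \cdot \Delta_i$ and $\nu = \sum_{j=1}^{m} q_j \cdot \Theta_j$ in $\conv(S)$, I would form the common refinement $r_{ij} := p_i q_j$, so that $\mu = \sum_{i,j} r_{ij} \cdot \Delta_i$ and $\nu = \sum_{i,j} r_{ij} \cdot \Theta_j$. Averaging optimal couplings $\omega_{ij} \in Coup(\Delta_i, \Theta_j)$ produces a coupling of $\mu$ and $\nu$ of total cost $\sum_{i,j} r_{ij}\, \kant(d)(\Delta_i, \Theta_j)$, which gives the convexity bound $\kant(d)(\mu,\nu) \leq \sum_{i,j} r_{ij}\, \kant(d)(\Delta_i,\Theta_j) \leq \mathrm{diam}(S)$. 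Hence $\conv(S)$ has diameter at most that of $S$.

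For closedness, my plan is to realise $\conv(S)$ as the image of $\dset(S) \subseteq \dset\dset(X)$ under the non-expansive monad multiplication $\mu^{\ldset} \colon \ldset\ldset(X) \to \ldset(X)$: any $\mu \in \conv(S)$ arises as $\mu^{\ldset}(\tilde\mu)$ for some $\tilde\mu = \sum_i p_i\, \Delta_i$ with $\Delta_i \in S$. Given $\mu_n \to \mu$ with $\mu_n \in \conv(S)$, I would fix representatives $\tilde\mu_n \in \dset(S)$ of each $\mu_n$; by continuity of $\mu^{\ldset}$, it would then suffice to extract a subsequence $\tilde\mu_{n_k}$ converging to some $\tilde\mu \in \dset(S)$, as this $\tilde\mu$ witnesses $\mu = \mu^{\ldset}(\tilde\mu) \in \conv(S)$. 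The main obstacle is precisely ensuring that such a limit $\tilde\mu$ remains finitely supported on $S$: since $\mu$ itself has finite support (say of size $k$), I would combine this with the closedness and boundedness of $S$ via a Carathéodory-style reduction to pick the $\tilde\mu_n$ of uniformly bounded cardinality, after which a standard extraction argument applied to the simplex of coefficients together with the atoms in $S$ delivers the required limit.
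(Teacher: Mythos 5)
Your split into boundedness and closedness matches the paper's definition, under which ``compact'' literally means closed and bounded, and the boundedness half is correct: it is exactly the convexity of $\kant(d)$ (Proposition \ref{prop:kantconv}) extended to $n$-ary combinations via your common refinement $r_{ij}=p_iq_j$, and optimal couplings do exist here because all distributions involved are finitely supported, so the set of couplings is a compact polytope.

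The closedness half, however, has a genuine gap at precisely the step you flag as ``the main obstacle'', and for two independent reasons. First, the uniform bound on the cardinality of the representatives $\tilde\mu_n$ does not follow from $|\support(\mu)|=k$: a Carath\'eodory reduction applied to $\mu_n=\sum_i p_i^{(n)}\cdot\Delta_i^{(n)}$ takes place inside the simplex $\dset(\support(\mu_n))$ (each generator $\Delta_i^{(n)}$ has support contained in $\support(\mu_n)$), so it bounds the number of generators by $|\support(\mu_n)|$, and Kantorovich convergence $\mu_n\to\mu$ puts no bound whatsoever on $|\support(\mu_n)|$. Second, even granting a uniform cardinality bound, your ``standard extraction argument'' must extract a convergent subsequence from sequences of atoms $\Delta_i^{(n)}\in S$, i.e., it needs $S$ to be sequentially compact; the hypothesis only gives that $S$ is closed and bounded, which in a general metric space (and in $(\dset(X),\kant(d))$ for infinite $X$ in particular) does not imply sequential compactness. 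So the two ingredients your plan relies on are exactly the two things the stated hypotheses do not supply. It is worth noting that in the only instance the paper actually uses downstream (Corollary \ref{corollary:useful}, where $S$ is finite), both problems vanish and the argument collapses to an elementary one: everything lives in the finite-dimensional simplex $\dset(Y)$ with $Y=\bigcup_{\Delta\in S}\support(\Delta)$, and $\conv(S)$ is the image of the compact coefficient simplex under the continuous (indeed non-expansive, by convexity of $\kant(d)$) map $(p_1,\dots,p_n)\mapsto\sum_i p_i\cdot\Delta_i$, hence compact in the usual sense and a fortiori closed and bounded. For general closed and bounded $S$ your strategy would need a substantially different idea to control the representations of the $\mu_n$.
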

\begin{corollary}\label{corollary:useful}
Let $(X,d)$ be a metric space. If $S\in \cset(X)$ then $S \in \kompactd {\dset(X)} {\kant (d)}$. 
\end{corollary}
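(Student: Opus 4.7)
The plan is to derive the corollary directly from Proposition \ref{cc:of:compact:is:compact} together with the definition of $\cset(X)$. Recall that by Definition \ref{def:set:cset}, an element $S \in \cset(X)$ is by definition of the form $S = \conv(S')$ for some $S' \in \fpset\dset(X)$, i.e., for some non-empty finite subset $S' \subseteq \dset(X)$. So the task reduces to checking that the hypothesis of Proposition \ref{cc:of:compact:is:compact} is applicable to such an $S'$.

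First, I would observe that $S'$ is a finite subset of $\dset(X)$, hence, as explicitly noted in the excerpt right after the definition of compact subsets, $S' \in \kompactd{\dset(X)}{\kant(d)}$. (Here one is viewing $(\dset(X), \kant(d))$ as the ambient metric space to which the notion $\kompactd{\cdot}{\cdot}$ is applied; the Kantorovich lifting $\kant(d)$ is defined for any metric $d$ on $X$, so this is well-formed.)

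Then, applying Proposition \ref{cc:of:compact:is:compact} to the metric space $(X,d)$ and to the compact set $S' \in \kompactd{\dset(X)}{\kant(d)}$, we conclude that $\conv(S') \in \kompactd{\dset(X)}{\kant(d)}$. Since $S = \conv(S')$ by construction, this is exactly the desired conclusion $S \in \kompactd{\dset(X)}{\kant(d)}$.

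In short, the proof is a one-line unpacking: finitely generated convex sets are convex closures of finite sets, finite sets are compact, and Proposition \ref{cc:of:compact:is:compact} transfers compactness through the convex closure. There is no real obstacle here given the proposition; the only thing to double-check is that the notion of ``finitely generated'' in Definition \ref{def:set:cset} matches exactly ``$\conv$ of a finite set,'' which it does by definition. The substantive mathematical content lies entirely in Proposition \ref{cc:of:compact:is:compact}, whose proof (establishing closedness and boundedness of $\conv(S)$ in the Kantorovich metric) is where the actual work happens.
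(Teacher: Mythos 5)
Your proof is correct and is exactly the argument the paper intends: write $S=\conv(S')$ for a finite $S'\in\fpset\dset(X)$, note that finite subsets are compact, and apply Proposition~\ref{cc:of:compact:is:compact}. No further comment is needed.
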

Corollary \ref{corollary:useful} implies that, given a metric space $(X,d)$, the collection $\cset(X)$ of finitely generated non--empty convex sets of probability distributions on $X$ can be endowed with the subspace metric of $\hs(\ldset(X,d))$, and therefore  $(\cset(X), \hk(d))$ is a metric space, with $\hk(d)=\haus(\kant(d))$. This observation leads to the following definition.


\begin{definition}[Monad $\lcset$]
The \emph{finitely generated non--empty convex powerset of finitely supported distributions} monad $(\lcset, \eta^{\lcset}, \mu^{\lcset})$ on $\Met$ is defined as follows. Given an object $(X,d)$ in $\Met$, $\lcset(X,d) = \big(\cset(X), \hk(d) \big)$. 
The action of $\lcset$ on morphisms, the monad unit $\eta^{\lcset}$, and the monad multiplication $\mu^{\lcset}$ are defined as for the $\Sets$ monad $\cset$ (Definition \ref{def:set:cset}).
\end{definition}


The rest of this section is devoted to the proof that the above definition is well--specified, i.e., that $\lcset$ is indeed a monad on $\Met$.  First, one needs to verify that $\lcset$ is a functor on $\Met$. This follows immediately from the definition, Corollary \ref{corollary:useful}, and $\cset$ being a functor on $\Sets$. It then remains to verify that the unit $\eta^{\lcset}$ and the multiplication $\mu^{\lcset}$ of $\lcset$ are indeed morphisms in $\Met$ (i.e., they are non-expansive functions) and that they satisfy the monad laws of Definition \ref{monad:main_definition}. The fact that the laws are satisfied follows directly from the definitions $\mu^{\lcset}=\mu^\cset$ and $\eta^{\lcset}=\eta^\cset$ and the fact that $\cset$ is a monad on $\Sets$ (hence $\mu^\cset$ and $\eta^\cset$ satisfy the monad laws).  Then it only remains to verify that  $\eta^{\lcset}$ and  $\mu^{\lcset}$ are non--expansive. It is straightforward to verify that  $\eta^{\lcset}$ is an isometric (hence non--expansive) embedding of $(X,d)$ into  $\big(\cset(X), \hk(d) \big)$.
Proving that  $\mu^{\lcset}$ is non--expansive, instead, does not seem straightforward and requires some detailed calculations. We state this result as a theorem.

\begin{theorem}\label{mult:non--expansive:theorem}\label{thm:nemucset}
Let $(X,d)$ be a metric space in $\Met$. Then  $\eta^{\lcset}_{(X,d)}:  \lcset \lcset (X,d)\rightarrow \lcset (X,d) $
 is a non--expansive function, i,e., using functional notation,
$\hk (d) \funpair {\mu^{\lcset}} {\mu^{\lcset}} \funleq \hk\hk (d)$.
\end{theorem}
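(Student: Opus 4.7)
The plan is to establish the inequality by unpacking the two outer liftings $\haus$ and $\kant$ to obtain an outer coupling between representing distributions, and then to combine this with inner Hausdorff-witnesses and inner Kantorovich couplings so as to construct both a witness $\delta' \in \mu^{\lcset}(\mathcal{S}_2)$ and an explicit coupling between $\delta$ and $\delta'$ that attains the required bound.

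Fix $\mathcal{S}_1, \mathcal{S}_2 \in \cset\cset(X)$ and set $\epsilon := \hk\hk(d)(\mathcal{S}_1, \mathcal{S}_2)$. By symmetry of the two clauses in the definition of $\haus$, it suffices to show that for every $\delta \in \mu^{\lcset}(\mathcal{S}_1)$ there exists $\delta' \in \mu^{\lcset}(\mathcal{S}_2)$ with $\kant(d)(\delta,\delta') \leq \epsilon$. Writing $\delta = \sum_k p_k \cdot \delta_k$ with $\Delta = \sum_k p_k S_k \in \mathcal{S}_1$ and $\delta_k \in S_k$, I would first unpack the outer $\haus$: compactness of $\mathcal{S}_2$ (Corollary~\ref{corollary:useful}) yields some $\Delta' = \sum_j q_j S'_j \in \mathcal{S}_2$ with $\kant(\hk(d))(\Delta, \Delta') \leq \epsilon$, and the coupling formulation of the outer $\kant$ yields $\omega \in \dset(\cset(X) \times \cset(X))$ with marginals $\Delta, \Delta'$ and total cost $\sum_{k,j} r_{kj} \cdot \hk(d)(S_k, S'_j) \leq \epsilon$, where $r_{kj} := \omega(S_k, S'_j)$; note that $p_k = \sum_j r_{kj}$ and $q_j = \sum_k r_{kj}$.

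Next I would build the inner witnesses. For each pair $(k,j)$ with $r_{kj} > 0$, compactness of $S'_j$ furnishes $\delta^{(k,j)} \in S'_j$ with $\kant(d)(\delta_k, \delta^{(k,j)}) \leq \hk(d)(S_k, S'_j)$, and compactness of the coupling polytope for finitely supported distributions yields an optimal coupling $\omega_{kj} \in Coup(\delta_k, \delta^{(k,j)})$. Setting
\[
\delta' := \sum_{k,j} r_{kj} \cdot \delta^{(k,j)} = \sum_j q_j \cdot \Big( \tfrac{1}{q_j}\!\sum_k r_{kj}\, \delta^{(k,j)}\Big),
\]
each inner bracket is a convex combination of elements of $S'_j$ and hence lies in $S'_j$, so $\delta' \in \wms(\Delta') \subseteq \mu^{\lcset}(\mathcal{S}_2)$. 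Then I would assemble the coupling $\tilde\omega(x,y) := \sum_{k,j} r_{kj} \cdot \omega_{kj}(x,y)$ between $\delta$ and $\delta'$: a direct check using $p_k = \sum_j r_{kj}$, $q_j = \sum_k r_{kj}$ and the marginals of each $\omega_{kj}$ confirms $\tilde\omega \in Coup(\delta, \delta')$, and linearity of the expected cost gives
\[
\kant(d)(\delta, \delta') \leq \sum_{k,j} r_{kj} \cdot \kant(d)(\delta_k, \delta^{(k,j)}) \leq \sum_{k,j} r_{kj} \cdot \hk(d)(S_k, S'_j) \leq \epsilon.
\]

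The main obstacle will be the bookkeeping in combining the outer coupling $\omega$ with the inner couplings $\omega_{kj}$: the specific ``tensor-product'' form $\tilde\omega = \sum_{k,j} r_{kj}\, \omega_{kj}$ is precisely what is needed to respect, simultaneously, the convexity constraint ensuring $\delta' \in \wms(\Delta')$ and the marginal constraints ensuring that $\tilde\omega$ is a legitimate coupling of $\delta$ and $\delta'$. All existence claims for minimising or optimal witnesses at both levels rest on Corollary~\ref{corollary:useful} together with standard compactness facts for couplings of finitely supported distributions.
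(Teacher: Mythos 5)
Your proof is correct, but it takes a genuinely different route from the paper's. The paper argues compositionally: it factors $\mu^{\lcset}=\mu^{\hs}\circ \hs(\wms)$ and chains together (i) non--expansiveness of $\mu^{\hs}$ (inherited from $\hs$ being a monad on $\Met$), (ii) non--expansiveness of $\wms$ (Lemma~\ref{lem:newms}), which is itself reduced to the convexity of the metric $\hk(d)$ on the convex algebra $\cset(X)$ (Lemma~\ref{lem:hkconv}), and (iii) monotonicity and the image property of the Hausdorff lifting (Propositions~\ref{lem:hausmon} and~\ref{lem:hausfunpair}) to transport (ii) through $\hs$. You instead inline the whole argument into a single three--level construction: an outer Hausdorff witness $\Delta'$, an outer coupling $\omega$ with weights $r_{kj}$, inner Hausdorff witnesses $\delta^{(k,j)}$, inner optimal couplings $\omega_{kj}$, and the mixture $\tilde\omega=\sum_{k,j}r_{kj}\,\omega_{kj}$. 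The mathematical content overlaps heavily --- your mixture of couplings is exactly the mechanism behind the paper's convexity lemmas (Proposition~\ref{prop:kantconv} and Lemma~\ref{lem:hkconv}), and your use of the outer coupling to regroup $\sum_k p_k S_k$ as $\sum_{k,j}r_{kj}S_k$ is the first step of the paper's proof of Lemma~\ref{lem:newms} --- but you never isolate these as separate statements. What the paper's modular route buys is reusability (convexity of $\kant(d)$ and $\hk(d)$ are invoked again in Section~\ref{section:5:proof:presentation}) and a short top--level proof; what your direct route buys is an explicit, self--contained witness $\delta'\in\wms(\Delta')$ together with an explicit coupling attaining the bound, at the price of heavier bookkeeping. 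Two minor points worth tightening: the attainment of the various infima can be avoided entirely by carrying an $\eta$--slack and letting $\eta\to 0$ (the paper's ``closed and bounded'' notion of compactness is not quite what you need for attainment in general, although for finitely generated convex sets genuine compactness does hold as continuous images of simplices); and you should discard the indices $j$ with $q_j=0$ before forming the renormalised inner brackets $\tfrac{1}{q_j}\sum_k r_{kj}\,\delta^{(k,j)}$.
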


\subsection{Sketch of the Proof of Theorem \ref{mult:non--expansive:theorem}}
The key result to prove is Lemma \ref{lem:newms}, stating that the weighted Minkowski sum function $\wms$  is non--expansive. This is obtained by exploiting a key property of the $\hk$ metric (see Lemma \ref{lem:hkconv}) called convexity.  It might well be that both these results have already appeared in the literature in some form or another or are known as folklore by specialists. We present here a direct proof. 

\begin{definition}[Convex metric]
Let $(X, \{+_p\}_{p\in(0,1)})$ be a convex algebra, i.e., a set $X$ equipped with operations $+_p:X^2\rightarrow X$ satisfying the axioms of Definition \ref{def:convex:set}. Let $d:X^2\rightarrow [0,1]$ be a metric on $X$. We say that $d$ is convex if $d(x_1 +_p x_2, y_1 +_p y_2) \leq  d(x_1,y_1) +_p d(x_2,y_2)$ holds for all $x_1,x_2,y_1,y_2\in X$, where $d(x_1,y_1) +_p d(x_2,y_2) = p\cdot d(x_1,y_1) + (1-p)\cdot d(x_2,y_2)$.
\end{definition}

It is well known that the Kantorovich metric $\kant(d)$ is convex.
The following lemma states that also the Hausdorff--Kantorovich metric $\haus\kant(d)$, on the collection $\cset(X)$ of non--empty finitely generated convex sets of distributions, which carries the structure of a convex semilattice (see Example \ref{example:presentation:set:3}) and thus also of a convex algebra, is convex.




\begin{lemma}\label{lem:hkconv}
Let $(X,d)$ be a metric space. The metric $\haus\kant (d)$ on the convex algebra $(\cset(X),  \{+_p\}_{p\in (0,1)})$, with
 $S_1 +_p S_2 = \wms( p_1 S_1 + (1-p_1)S_2)$, is convex.
\end{lemma}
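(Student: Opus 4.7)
The plan is to unfold both liftings and reduce the statement to the known convexity of $\kant(d)$. Fix $S_1,S_2,T_1,T_2 \in \cset(X)$ and set $\epsilon_i = \haus\kant(d)(S_i,T_i)$ for $i=1,2$. The goal is to prove that
\[
\haus\kant(d)(S_1 +_p S_2,\; T_1 +_p T_2) \;\leq\; p\cdot \epsilon_1 + (1-p)\cdot \epsilon_2.
\]
Recall that an element of $S_1 +_p S_2 = \wms(pS_1 + (1-p)S_2)$ has the form $p\cdot \distr_1 + (1-p)\cdot \distr_2$ with $\distr_i \in S_i$, and analogously for $T_1 +_p T_2$. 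So the shape of the proof is: pick one such element on one side, match each component using the Hausdorff bound $\epsilon_i$, recombine with the same weight $p$, then invoke convexity of $\kant(d)$.

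Concretely, fix an arbitrary $\distr \in S_1 +_p S_2$, written $\distr = p\cdot \distr_1 + (1-p)\cdot \distr_2$ with $\distr_i \in S_i$. Given $\eta > 0$, the definition of the Hausdorff lifting supplies $\distrb_i \in T_i$ with $\kant(d)(\distr_i, \distrb_i) \leq \epsilon_i + \eta$. Then $\distrb := p\cdot \distrb_1 + (1-p)\cdot \distrb_2 \in T_1 +_p T_2$, and since $\kant(d)$ is convex,
\[
\kant(d)(\distr,\distrb) \;\leq\; p\cdot\kant(d)(\distr_1,\distrb_1) + (1-p)\cdot\kant(d)(\distr_2,\distrb_2) \;\leq\; p\epsilon_1 + (1-p)\epsilon_2 + \eta.
\]
Since $\eta$ is arbitrary, $\inf_{\distrb \in T_1 +_p T_2} \kant(d)(\distr,\distrb) \leq p\epsilon_1 + (1-p)\epsilon_2$. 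Taking $\sup$ over $\distr \in S_1 +_p S_2$ bounds one of the two terms inside the $\max$ defining $\haus\kant(d)(S_1+_p S_2, T_1 +_p T_2)$. The symmetric argument, starting instead from an arbitrary element of $T_1 +_p T_2$, bounds the other term, completing the proof.

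The only nonroutine ingredient is the convexity of $\kant(d)$, which is classical and may be cited. No significant obstacle is expected: the argument is essentially a componentwise matching of convex combinations, made possible by the fact that $\wms$ is defined pointwise along the same weights $p$ on both sides, so that the two coupling choices for the components can be fused into a single coupling whose cost is exactly the convex combination of the component costs.
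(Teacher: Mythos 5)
Your proof is correct and follows essentially the same route as the paper's: both decompose an arbitrary element of $\wms(pS_1+(1-p)S_2)$ as $p\cdot\distr_1+(1-p)\cdot\distr_2$, match each component within Hausdorff distance $\epsilon_i$ of the corresponding $T_i$, and recombine via the convexity of $\kant(d)$. The paper phrases this as a chain of $\sup$/$\inf$ identities while you argue pointwise with an $\eta$-slack, but the mathematical content is identical.
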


Using the convexity of $\hk$ it is possible to prove that the $\wms$ function is non--expansive.

\begin{lemma}\label{lem:newms}
Let $(X,d)$ be a metric space. The function $\wms: \ldset(\lcset (X,d))\rightarrow \lcset (X,d)$ (see Definition \ref{def:set:cset}) is non--expansive, i.e.
$\hk (d) \funpair \wms \wms \sqsubseteq \kant \hk (d)$. 
\end{lemma}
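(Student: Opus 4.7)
The plan is to reduce the inequality $\haus\kant(d)(\wms(\Delta_1),\wms(\Delta_2)) \leq \kant\haus\kant(d)(\Delta_1,\Delta_2)$ to the convexity of the metric $\haus\kant(d)$ established in Lemma \ref{lem:hkconv}, via the coupling characterization of the Kantorovich lifting. Fix $\Delta_1,\Delta_2 \in \dset\cset(X)$ and let $\omega \in Coup(\Delta_1,\Delta_2)$ be an arbitrary coupling; by definition of $Coup$, $\omega$ is a finitely supported distribution on $\cset(X) \times \cset(X)$ whose marginals are $\Delta_1$ and $\Delta_2$. Enumerating the support of $\omega$ as $\{(S_k,T_k)\}_{k=1}^n$ with weights $w_k = \omega(S_k,T_k)$, the marginal condition lets me rewrite $\Delta_1 = \sum_{k=1}^n w_k\, S_k$ and $\Delta_2 = \sum_{k=1}^n w_k\, T_k$ in $\dset\cset(X)$ (possibly with repetitions among the $S_k$ and $T_k$, which is harmless).

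The next step is to express $\wms$ in terms of the convex algebra structure on $\cset(X)$. Writing $+_p$ for the binary convex combination $S +_p S' = \wms(pS + (1-p)S')$ on $\cset(X)$, a direct unfolding of Definition \ref{def:set:cset} shows that $\wms(\sum_k w_k S_k)$ equals the iterated convex combination $\bigpplus_{k} w_k \cdot S_k$ computed in the convex algebra $(\cset(X), \{+_p\}_{p \in (0,1)})$; this is well-defined because the convex algebra axioms hold on $\cset(X)$. Hence
\[
\wms(\Delta_1) = \bigpplus_{k=1}^n w_k \cdot S_k, \qquad \wms(\Delta_2) = \bigpplus_{k=1}^n w_k \cdot T_k.
\]

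Now I would iterate Lemma \ref{lem:hkconv}. Binary convexity of $\haus\kant(d)$ states that $\haus\kant(d)(S_1 +_p S_2, T_1 +_p T_2) \leq p\cdot \haus\kant(d)(S_1,T_1) + (1-p)\cdot \haus\kant(d)(S_2,T_2)$. A straightforward induction on the size of the support, using this binary inequality together with the convex algebra axioms (A$_p$), (C$_p$) to re-parenthesize iterated convex combinations, yields the $n$-ary version
\[
\haus\kant(d)\Bigl(\bigpplus_{k=1}^n w_k \cdot S_k,\ \bigpplus_{k=1}^n w_k \cdot T_k\Bigr) \ \leq\ \sum_{k=1}^n w_k \cdot \haus\kant(d)(S_k,T_k).
\]
Combining this with the identification above, the right-hand side equals $\sum_{(S,T)} \omega(S,T)\cdot \haus\kant(d)(S,T)$. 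Taking the infimum over all couplings $\omega \in Coup(\Delta_1,\Delta_2)$ and invoking the coupling formulation of the Kantorovich lifting from Definition \ref{def:kantorovich:lifting} gives exactly $\kant\haus\kant(d)(\Delta_1,\Delta_2)$, which proves the lemma.

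The main obstacle I anticipate is the induction step for the $n$-ary convexity: the binary formula must be applied after suitably factoring the iterated convex combination as $(\bigpplus_{k<n} \tfrac{w_k}{1-w_n}\cdot S_k) +_{1-w_n} S_n$, and one needs to verify that this rewriting is coherent on both sides simultaneously, and to handle degenerate weights $w_k \in \{0,1\}$ separately. Aside from this bookkeeping, the argument reduces cleanly to Lemma \ref{lem:hkconv} and the coupling definition of the Kantorovich metric.
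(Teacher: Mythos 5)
Your proof is correct and follows essentially the same route as the paper's: pick an arbitrary coupling $\omega$, use its marginals to rewrite both distributions over the common index set of pairs in $\support(\omega)$ (with convexity of the $S_k$, $T_k$ absorbing repetitions), apply the convexity of $\haus\kant(d)$ from Lemma \ref{lem:hkconv} generalized to finitely supported combinations, and take the infimum over couplings. The only difference is presentational: you make explicit the induction that upgrades the binary convexity of Lemma \ref{lem:hkconv} to the $n$-ary form, a step the paper's proof invokes without elaboration.
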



Lastly, we state the following two useful properties of the Hausdorff lifting. 



\begin{proposition}\label{lem:hausmon}
Let $d,d'$ be two metrics over $X$ such that $d \funleq d'$.
Then $\haus (d) \funleq \haus (d')$.
\end{proposition}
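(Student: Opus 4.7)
The plan is to unfold the definition of the Hausdorff lifting and propagate the pointwise inequality $d \funleq d'$ through the nested $\sup$, $\inf$, and $\max$ operations, each of which is monotone in its arguments. Concretely, I would fix an arbitrary pair $X_1, X_2 \in \kompactd{X}{d}$ and show the inequality
\[
\haus(d)(X_1, X_2) \leq \haus(d')(X_1, X_2)
\]
pointwise, which by definition is exactly $\haus(d) \funleq \haus(d')$.

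The first step is to note that, since $d(x_1, x_2) \leq d'(x_1, x_2)$ for every pair $x_1, x_2 \in X$, for any fixed $x_1 \in X_1$ we get
\[
\inf_{x_2 \in X_2} d(x_1, x_2) \leq \inf_{x_2 \in X_2} d'(x_1, x_2),
\]
by monotonicity of the infimum. Taking the supremum over $x_1 \in X_1$ preserves the inequality, so
\[
\sup_{x_1 \in X_1} \inf_{x_2 \in X_2} d(x_1, x_2) \leq \sup_{x_1 \in X_1} \inf_{x_2 \in X_2} d'(x_1, x_2).
\]
By an entirely symmetric argument swapping the roles of $X_1$ and $X_2$, the second component in the definition of $\haus$ is likewise bounded. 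Since $\max$ is monotone in each argument, combining the two inequalities yields $\haus(d)(X_1, X_2) \leq \haus(d')(X_1, X_2)$, as required.

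There is no real obstacle here: the proof is a straightforward monotonicity argument, and the only thing to double-check is that we are comparing $\haus(d)$ and $\haus(d')$ on a common domain. Since $d \funleq d'$ implies that every set which is bounded (resp.\ closed) with respect to $d'$ is also bounded (resp.\ closed) with respect to $d$, the relevant inputs $X_1, X_2$ on which we need the inequality are well-defined for both liftings, so no additional care is needed. The proof therefore reduces to the three-line chain of inequalities sketched above.
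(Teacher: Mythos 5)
Your proof is correct and is essentially identical to the paper's: both simply unfold the definition of $\haus$ and push the pointwise inequality $d \funleq d'$ through $\inf$, $\sup$, and $\max$ by monotonicity. (One small quibble with your closing aside: $d \funleq d'$ does give that $d'$-bounded sets are $d$-bounded, but for closedness the implication runs the other way --- the $d$-topology is coarser, so $d$-closed implies $d'$-closed; this is immaterial here, since the displayed inequality holds for any non-empty sets on which both expressions are defined.)
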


\begin{proposition}\label{lem:hausfunpair}
Let $(X,d_{X})$ and $(Y,d_{Y})$ be metric spaces, let $f: X\to Y$ with $d_{X}=d_{Y}\funpair f f$ (i.e., $d_X(x_1,x_2) = d_Y(f(x_1),f(x_2)$). 
Then $\haus (d_{X})  =\haus (d_{Y}) \funpair {\hs (f)}{\hs (f)}$.
\end{proposition}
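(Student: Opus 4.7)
\subsection*{Proof plan for Proposition \ref{lem:hausfunpair}}

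The plan is to unfold both sides of the claimed equality and observe that, under the hypothesis $d_X = d_Y \funpair f f$, every quantity appearing on the left-hand side can be rewritten as the corresponding quantity on the right-hand side, without any inequalities to lose. In particular, I expect no real obstacle in this proof: the statement is essentially the remark that Hausdorff lifting behaves naturally with respect to isometric maps, and the calculation is a direct chase of definitions. The only point that requires a brief justification is the passage from an infimum/supremum ranging over the preimage set to one ranging over its direct image.

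First I would fix $X_1, X_2 \in \hs(X, d_X)$ and expand
\[
\haus(d_Y)\funpair{\hs(f)}{\hs(f)}(X_1, X_2) \;=\; \haus(d_Y)\bigl(f(X_1), f(X_2)\bigr),
\]
using that $\hs(f)(X_i) = \{f(x) \mid x \in X_i\} = f(X_i)$ by Definition \ref{def:mett:compact}. By definition of the Hausdorff lifting, this expression equals the maximum of the two symmetric sup-inf quantities over $f(X_1)$ and $f(X_2)$. I would then treat one of them, say
\[
\sup_{y_1 \in f(X_1)} \inf_{y_2 \in f(X_2)} d_Y(y_1, y_2),
\]
and show that it equals $\sup_{x_1 \in X_1} \inf_{x_2 \in X_2} d_X(x_1, x_2)$; the other direction of the max follows by symmetry.

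The key step is the following observation. For any $x_1 \in X_1$, the hypothesis gives
\[
\inf_{x_2 \in X_2} d_X(x_1, x_2) \;=\; \inf_{x_2 \in X_2} d_Y\bigl(f(x_1), f(x_2)\bigr) \;=\; \inf_{y_2 \in f(X_2)} d_Y\bigl(f(x_1), y_2\bigr),
\]
where the last equality uses that the infimum depends on $x_2$ only through the value $f(x_2)$, so ranging $x_2$ over $X_2$ yields the same set of values as ranging $y_2$ over $f(X_2)$ (this is valid even if $f$ is not injective, since both sides take the infimum of the same set of reals). Taking the supremum over $x_1 \in X_1$ and applying the same reasoning on the outer variable gives
\[
\sup_{x_1 \in X_1} \inf_{x_2 \in X_2} d_X(x_1, x_2) \;=\; \sup_{y_1 \in f(X_1)} \inf_{y_2 \in f(X_2)} d_Y(y_1, y_2).
\]
Combining with the symmetric identity and taking the maximum yields the claimed equality $\haus(d_X)(X_1, X_2) = \haus(d_Y)\bigl(f(X_1), f(X_2)\bigr)$, which is exactly $\haus(d_X) = \haus(d_Y) \funpair{\hs(f)}{\hs(f)}$.
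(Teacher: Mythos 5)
Your proposal is correct and follows essentially the same route as the paper's proof: unfold the Hausdorff lifting, replace $d_X(x_1,x_2)$ by $d_Y(f(x_1),f(x_2))$ using the hypothesis, and reindex the suprema and infima from $X_i$ to $f(X_i)$. Your explicit justification of the reindexing step (that the set of real values being optimised over is unchanged) is a point the paper leaves implicit, but the argument is the same.
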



\begin{proof}[Proof of Theorem \ref{thm:nemucset}]
We need to show that $\hk (d) \funpair {\mu^{\lcset}} {\mu^{\lcset}} \funleq \hk\hk (d)$.

Since $\hs$ is a monad on $\Met$ (Definition \ref{def:mett:compact}), $\mu^{\hs}$ is non-expansive, i.e.,
$\haus (d)
\funpair {\mu^{\hs}}{\mu^{\hs}}
\funleq 
\haus \haus (d)$. By applying this to the metric $\kant (d)$, we derive
\begin{equation}\label{eq:nemucset1}
\hk (d)
\funpair {\mu^{\hs}}{\mu^{\hs}}
\funleq 
\haus \hk (d).
\end{equation}
By definition $\mu^{\lcset}=\mu^{\hs}\circ \hs(\wms)$ (i.e., 
$
S\ \mapsto \  \bigcup \big\{ \wms(\distr) \mid  \distr \in S  \big\} 
$) and therefore: 
\begin{align*}
\hk (d) \funpair {\mu^{\lcset}}{\mu^{\lcset}}
&=\hk (d)
\funpair {\mu^{\hs}\circ \hs(\wms)}{\mu^{\hs}\circ \hs(\wms)}\\
&=\hk (d)
\funpair {\mu^{\hs}}{\mu^{\hs}} \funpair {\hs (\wms)}{\hs( \wms)} 
\end{align*}
Thus, by (\ref{eq:nemucset1})
we can derive
\begin{equation}\label{eq:nemucset2}
\hk (d) \funpair {\mu^{\lcset}}{\mu^{\lcset}}
\funleq \haus \hk (d)\funpair {\hs( \wms)}{\hs( \wms)}.
\end{equation}
Moreover, by the non-expansiveness of $\wms$ (Lemma \ref{lem:newms}), we know that
\[\hk (d) \funpair \wms \wms \funleq \kant \hk (d)\]
which implies by the monotonicity of $\haus$ (Proposition \ref{lem:hausmon}) that
\begin{equation}\label{eq:nemucset31}
\haus(\hk (d) \funpair \wms \wms) \funleq \hk \hk (d).
\end{equation}
By Proposition \ref{lem:hausfunpair}, we can rewrite the left-hand term of (\ref{eq:nemucset31}) as follows
\[ \haus (\hk (d)\funpair { \wms}{\wms})= \haus\hk (d)\funpair {\hs( \wms)}{\hs( \wms)}\]
and thus we derive from (\ref{eq:nemucset31}):
\begin{equation}\label{eq:nemucset3}
\haus\hk (d)\funpair {\hs(\wms)}{\hs( \wms)}\funleq  \hk \hk (d).
\end{equation}
Lastly, by (\ref{eq:nemucset2}) and (\ref{eq:nemucset3}):
$\hk (d) \funpair {\mu^{\lcset}}{\mu^{\lcset}}
\funleq \haus \hk (d)\funpair {\hs (\wms)}{\hs( \wms)}
\funleq \hk \hk (d). $
\end{proof}

\section{Presentation of the Monad $\lcset$}
\label{section:5:proof:presentation}

In this section we present the main result of this work and show that the monad $\lcset$ on $\Met$, introduced in Section \ref{section:4:proof:monad}, is presented by quantitative convex semilattices.
\begin{definition}\label{def:qcs}
The quantitative equational theory of \emph{quantitative convex semilattices}, denoted by  $\qetcs$, is the quantitative theory over the signature $\sigcs= (\{ \oplus \}\cup \{ +_p \}_{p\in (0,1)})$ of convex semilattices induced by the following set quantitative inferences:
\vspace{-0.25cm}
\begin{itemize}
\itemsep-0.2cm
\item the quantitative inferences ($A$), ($C$), ($I$) and ($H$) inducing the quantitative theory of semilattices (see Definition \ref{def:semilattices:met}), 
\item the quantitative inferences ($A_p$), ($C_p$), ($I_p$), and ($K$) inducing the quantitative theory of convex algebras (see Definition \ref{def:convex:met}), 
\item for every $p\in (0,1)$, the quantitative inference ($D$) $\emptyset \vdash x+_p (y \oplus z) =_{0} (x+_p y)\oplus (x+_p z)$.
\end{itemize}
\end{definition}

The following is the main result of this work.

\begin{theorem}\label{thm:main}
The quantitative equational theory $\qetcs$ of quantitative convex semilattices is a presentation of the monad $\lcset$, that is, $\qacat(\qetcs)\cong \EM(\lcset)$.
\end{theorem}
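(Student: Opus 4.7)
The plan is to establish $\qacat(\qetcs)\cong \EM(\lcset)$ by exhibiting mutually inverse functors $F:\EM(\lcset)\to\qacat(\qetcs)$ and $G:\qacat(\qetcs)\to\EM(\lcset)$. Throughout, the $\Sets$--level presentation $\acat(\etcs)\cong \EM(\cset)$ will already take care of the equational content, so the distinctive quantitative content will concern non--expansiveness of two specific maps.

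For the forward functor $F$, given an EM algebra $h:\lcset(A,d_A)\to(A,d_A)$, I equip $A$ with operations $a\oplus b=h(\conv(\{\dirac a,\dirac b\}))$ and $a+_p b=h(\{p\cdot\dirac a+(1-p)\cdot\dirac b\})$. The $\Sets$ presentation, applied to the underlying $\Sets$--level EM algebra, already makes $(A,\oplus,+_p)$ a convex semilattice, so the $=_0$ axioms of $\qetcs$ are automatically satisfied. Non--expansiveness of the operations, and satisfaction of the quantitative inferences (H) and (K), then follow because $\eta^{\lcset}$ is an isometric embedding and $h$ is non--expansive: for (H), $d_A(a_1\oplus a_2,b_1\oplus b_2)\leq \hk(d_A)(\conv(\{\dirac{a_1},\dirac{a_2}\}),\conv(\{\dirac{b_1},\dirac{b_2}\}))\leq\max\{d_A(a_1,b_1),d_A(a_2,b_2)\}$, and the analogous chain for (K) uses Lemma \ref{lem:hkconv} (convexity of $\hk$).

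For the backward functor $G$, given $\alga=(A,d_A,\oplus,+_p)\in\qacat(\qetcs)$, I define $h_\alga:\cset(A)\to A$ by interpreting each $S$ as a convex semilattice term through its unique base, $h_\alga(S)=\bigoplus_{\Delta\in\ub(S)}\bigl(\bigpplus_{x\in\support(\Delta)}\Delta(x)\, x\bigr)$, following the isomorphism $\ics$ of Example \ref{example:presentation:set:3}. The $\Sets$ presentation ensures this is a well--defined EM $\cset$--algebra structure. The only genuinely quantitative condition left is non--expansiveness of $h_\alga$ from $\lcset(A,d_A)$ to $(A,d_A)$, and I expect this to be the main obstacle. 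My plan is to factor the argument into two stages corresponding to the two metric liftings. First, the restriction $h_\alga^{\ldset}:\ldset(A,d_A)\to(A,d_A)$ interpreting a single distribution as an iterated $+_p$ term is non--expansive for $\kant(d_A)$ by the argument underlying Theorem \ref{theorems:radu:free:algebras} applied to the convex--algebra reduct of $\alga$ (using (K) and the coupling characterization of $\kant$). Second, for general $S$ one has $h_\alga(S)=\bigoplus_{\Delta\in\ub(S)} h_\alga^{\ldset}(\Delta)$. The delicate point is that $\ub(S_1)$ and $\ub(S_2)$ may have different cardinalities and need not realize the $\hk$--distance directly; here axiom (D), via the derived convexity equation $x\oplus y=x\oplus y\oplus(x+_p y)$, becomes crucial because it lets us enlarge each $\ub(S_i)$ by convex combinations of its own elements without changing the value of $h_\alga(S_i)$. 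Given $S_1,S_2$ with $\hk(d_A)(S_1,S_2)\leq\epsilon$, one then builds enlarged finite generating sets $G_i\supseteq\ub(S_i)$ of common cardinality realizing a Hausdorff pairing at $\kant$--distance $\leq\epsilon$, and concludes $d_A(h_\alga(S_1),h_\alga(S_2))\leq\epsilon$ by (K) applied pointwise along the pairing followed by an iterated application of (H).

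Finally, $F\circ G$ and $G\circ F$ are identities on objects: both reductions follow from the $\Sets$ presentation after identifying the operations extracted by $F$ with those used to define $h_\alga$ in $G$. Preservation of morphisms is immediate in both directions, since a non--expansive map $f:A\to B$ commutes with the EM structure iff it is a convex semilattice homomorphism, by the $\Sets$--level correspondence.
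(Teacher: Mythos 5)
Your proposal is correct and follows essentially the same route as the paper: the same pair of functors, the same extraction of operations via $\alpha(\conv\{\dirac{x_1},\dirac{x_2}\})$ and $\alpha(\{p\,x_1+(1-p)\,x_2\})$, and the same two-stage argument for non--expansiveness of $h_\alga$ (a Kantorovich coupling step for single distributions, then base enlargement via the convexity equation derived from (D) followed by iterated applications of (H)). The only difference is presentational: the paper packages the second stage as a purely syntactic derivability lemma in $\qetcs$ (Lemma \ref{lem:Ghk}) and then evaluates it in the model, whereas you argue semantically in the algebra directly, which amounts to the same computation.
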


As one direct corollary of this general statement we automatically get the following result (cf. with Theorem \ref{theorems:radu:free:algebras}) characterising free quantitative convex semilattices, which, by \cite[\S 5]{radu2016}, are in turn isomorphic to term quantitative algebras for $\qetcs$.
\begin{corollary}
The free quantitative algebra in $\qacat({\qetcs})$ generated by a metric space $(X,d)$ is isomorphic to
$\lcset{(X,d)}$, the metric space of finitely generated convex sets of probability distributions metrized by the Hausdorff--Kantorovich metric $\haus\kant (d)$.
\end{corollary}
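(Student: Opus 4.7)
The plan is to derive this corollary as an immediate consequence of the presentation theorem (Theorem \ref{thm:main}) together with the standard Eilenberg--Moore characterisation of free algebras over a monad. The key observation is that the isomorphism $F : \qacat(\qetcs) \to \EM(\lcset)$ established by Theorem \ref{thm:main} commutes with the respective forgetful functors $U_{\qacat} : \qacat(\qetcs) \to \Met$ and $U_{\EM} : \EM(\lcset) \to \Met$, and therefore transports free objects from one side to the other.

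First, I would verify the compatibility $U_{\EM} \circ F = U_{\qacat}$. By inspection of the construction of $F$ in the proof of Theorem \ref{thm:main}, a quantitative convex semilattice $\alga = (A, \{\oplus^{A}, +_{p}^{A}\}, d_{A})$ is sent to an $\lcset$-algebra whose underlying metric space is precisely $(A, d_{A})$; conversely, an Eilenberg--Moore algebra $(A, h)$ is sent to the quantitative convex semilattice having $(A, d_{A})$ as its underlying metric space and operations defined from $h$. Hence $F$ and its inverse both strictly preserve underlying metric spaces.

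Second, I would invoke the classical fact that for any monad $(\lcset, \eta^{\lcset}, \mu^{\lcset})$ on $\Met$ the forgetful functor $U_{\EM}$ admits a left adjoint $F^{\lcset} : \Met \to \EM(\lcset)$ given explicitly by $F^{\lcset}(X,d) = (\lcset(X,d), \mu^{\lcset}_{(X,d)})$, so that the free $\lcset$-algebra on $(X,d)$ has underlying metric space $\lcset(X,d) = (\cset(X), \haus\kant(d))$.

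Finally, combining these two points with Theorem \ref{thm:main}: the composite $F^{-1} \circ F^{\lcset}$ is a left adjoint to $U_{\qacat} = U_{\EM} \circ F$, so it is (up to unique isomorphism) the free-object functor for $\qacat(\qetcs)$. Applying $U_{\qacat}$ to the free quantitative convex semilattice on $(X,d)$ yields
\[
U_{\qacat}\bigl(F^{-1}(F^{\lcset}(X,d))\bigr) \;=\; U_{\EM}\bigl(F^{\lcset}(X,d)\bigr) \;=\; \lcset(X,d),
\]
which is the claim. There is no real obstacle: all the substantive work has already been carried out in Theorem \ref{thm:main} and in the verification that $\lcset$ is a monad (Theorem \ref{mult:non--expansive:theorem}); the only minor check is the strict commutation of $F$ with the forgetful functors, and this is built into how $F$ is defined.
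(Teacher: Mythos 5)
Your proposal is correct and follows essentially the same route as the paper, which presents this as a direct corollary of Theorem \ref{thm:main} obtained by transporting the free Eilenberg--Moore algebra $(\lcset(X,d),\mu^{\lcset}_{(X,d)})$ across the isomorphism of categories (the $\Met$ analogue of Proposition \ref{prop:free-algebras}). Your explicit verification that the isomorphism commutes strictly with the forgetful functors to $\Met$ --- which holds because $\calf$ and $\calg$ both act as the identity on underlying metric spaces --- is exactly the (routine but necessary) check the paper leaves implicit.
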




We prove Theorem \ref{thm:main} by explicitly defining a pair of functors
$\calf:  \EM(\lcset) \to \qacat(\qetcs)$ and $\calg: \qacat(\qetcs)\to \EM(\lcset)$
and proving that  they are isomorphisms of categories, i.e., that $\calg \circ \calf = id_{ \EM(\lcset)}$ and $\calf \circ \calg = id_{\qacat(\qetcs)}$.
In the following sections, we exhibit such functors and show that they are well-defined isomorphisms.

\subsection{The functor $\calf:\EM(\lcset) \to \qacat(\qetcs)$}\label{sec:F}


Recall from Definition \ref{def:algebra-of-a-monad} that an object in  $\EM(\lcset)$ is a structure $((X,d),\alpha)$ where $(X,d)$ is a metric space and $\alpha: (\cset (X),\hk(d)) \to (X,d)$ is a non-expansive function satisfying $ \alpha \circ  \eta^{\lcset}_X = id_X$ and $ \alpha \circ \lcset \alpha= \alpha \circ \mu^{\lcset}_X$. A morphism $f: ((X,d_X),\alpha_X) \rightarrow ((Y,d_Y),\alpha_Y)$ in  $\EM(\lcset)$ is a non--expansive function $f: X \rightarrow Y$ such that  $f\circ \alpha_X = \alpha_Y \circ \lcset(f)$.

\begin{definition}[Functor $\calf$] We define $\calf:  \EM(\lcset) \to \qacat(\qetcs)$ as follows:
\vspace{-0.2cm}
\begin{itemize}
\itemsep-0.2cm
\item on objects: $\calf((X,d),\alpha)= (X, \sigcs^{\alpha}, d)$\\
with $\sigcs^{\alpha} = (\{\cplus^{\alpha}\} \cup\{ \pplus p^\alpha\}_{p\in(0,1)})$ the interpretation of the convex semilattice operations $\cplus$ and $\pplus p$ as $x_{1} \cplus^{\alpha} x_{2} = \alpha(\conv\{\dirac {x_{1}}, \dirac {x_{2}}\})$ and $x_{1} \pplus p^\alpha x_{2}= \alpha(\{p x_{1} + (1-p) x_{2}\})$, 
\item on morphisms: $\calf(f)=f$, with $f:X\rightarrow Y$ seen as a non--expansive map from $X$ to $Y$.
\end{itemize}
\end{definition}

We now prove that the functor $\calf$ is well-defined.  First, on objects, we need to show that $\calf((X,d),\alpha)$ is indeed a quantitative algebra satisfying the quantitative inferences of the theory $\qetcs$. To show that $(X, \sigcs^{\alpha}, d)$ is a quantitative algebra (Definition \ref{quantitative:algebra:def}), since $(X,d)$ is a metric space, we only need to verify that the operations $\cplus^{\alpha}$ and $\pplus p^\alpha$ are non--expansive. 
\begin{lemma}\label{non--expansive:lemma5.1.1}
 The operations $\cplus^{\alpha}$ and $\pplus p^\alpha$, for all $p\in(0,1)$, are non--expansive. 
\end{lemma}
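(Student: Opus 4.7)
The plan is to reduce the non-expansiveness of both operations to non-expansiveness of $\alpha$, which is given by hypothesis: since $((X,d),\alpha)$ is an Eilenberg--Moore algebra for $\lcset$, the map $\alpha:(\cset(X),\hk(d)) \to (X,d)$ is a morphism in $\Met$. By definition $x_1 \cplus^\alpha x_2 = \alpha(\conv\{\dirac{x_1},\dirac{x_2}\})$ and $x_1 \pplus p^\alpha x_2 = \alpha(\{p\dirac{x_1}+(1-p)\dirac{x_2}\})$, so it suffices to bound the $\hk(d)$-distance between the two convex sets fed into $\alpha$ by the sup-distance $\max(d(x_1,y_1),d(x_2,y_2))$ on $X^2$.

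For $\pplus p^\alpha$ the two arguments of $\alpha$ are singletons, so by the definition of $\haus$ the Hausdorff--Kantorovich distance reduces to a single Kantorovich distance, and I would bound
$$
\kant(d)\bigl(p\dirac{x_1}+(1-p)\dirac{x_2},\ p\dirac{y_1}+(1-p)\dirac{y_2}\bigr) \le p\,d(x_1,y_1)+(1-p)\,d(x_2,y_2)
$$
using the obvious coupling that places mass $p$ on $(x_1,y_1)$ and $(1-p)$ on $(x_2,y_2)$ (recall Definition~\ref{def:kantorovich:lifting}). Since $p\,d(x_1,y_1)+(1-p)\,d(x_2,y_2)\le\max(d(x_1,y_1),d(x_2,y_2))$, applying the non-expansiveness of $\alpha$ yields the desired inequality.

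For $\cplus^\alpha$ the argument is similar, but now the convex sets are the segments $\conv\{\dirac{x_1},\dirac{x_2}\}$ and $\conv\{\dirac{y_1},\dirac{y_2}\}$. To bound the Hausdorff distance I would pair each $\distr = p\dirac{x_1}+(1-p)\dirac{x_2}$ in the first set with its analogue $\distr' = p\dirac{y_1}+(1-p)\dirac{y_2}$ in the second set and use exactly the same coupling as above to get $\kant(d)(\distr,\distr') \le \max(d(x_1,y_1),d(x_2,y_2))$, uniformly in $p$; the symmetric direction is identical. This gives
$$
\haus\kant(d)\bigl(\conv\{\dirac{x_1},\dirac{x_2}\},\,\conv\{\dirac{y_1},\dirac{y_2}\}\bigr) \le \max(d(x_1,y_1),d(x_2,y_2)),
$$
and again non-expansiveness of $\alpha$ finishes the proof.

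No real obstacle is expected: the whole argument is a two-line reduction to non-expansiveness of $\alpha$ together with an elementary coupling estimate. The only subtlety is to notice that for the $\pplus p^\alpha$ case one is computing a Hausdorff distance between singletons, which collapses to the underlying Kantorovich distance, and that the bound $p\,\epsilon_1+(1-p)\,\epsilon_2 \le \max(\epsilon_1,\epsilon_2)$ is needed to match the $\sup$-metric convention from Definition~\ref{quantitative:algebra:def} (the sharper convex bound will be recorded separately when verifying the axiom ($K$) of $\qetcs$).
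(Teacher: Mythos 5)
Your proof is correct and takes essentially the same route as the paper: both reduce the claim to the non-expansiveness of $\alpha$ together with a bound on the $\hk(d)$-distance between the argument sets by the $\sup$-metric $\max(d(x_1,y_1),d(x_2,y_2))$. The paper organises that bound as a composition of general non-expansive maps ($\conv$, the pair-formers $\lambda x_1,x_2.\{x_1,x_2\}$ and $\lambda x_1,x_2.(px_1+(1-p)x_2)$, each proved non-expansive in separate appendix lemmas), whereas you inline the same estimate via explicit couplings on the specific two-point segments; the computational content is identical.
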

\begin{proof}
Using functional notation we have 
$\cplus^{\alpha} = \alpha \,\circ\, \conv \,\circ \,\fpset \eta^{\dset}_{X}\, \circ\, ( \lambda x_1,x_2 . \{x_1,x_2\})$.
The function $\alpha$ is non--expansive by assumption. $\fpset \eta^{\dset}_{X}$ is non-expansive by $\lfpset$ and $\ldset$ being monads on $\Met$. The functions $ \lambda x_1,x_2 . \{x_1,x_2\}:{(X,d)}\times (X,d)\rightarrow \lpset{(X,d)}$ and $\conv: \lpset\ldset(X,d) \rightarrow \lcset(X,d)$ are non--expansive as well.
Hence $\cplus^{\alpha}$ is non--expansive as composition of non--expansive maps. Similarly, we have $\pplus p^\alpha= \alpha   \circ  \eta^{\pset}_{\dset(X)} \circ \big( \lambda {x_1,x_2}. (p x_1 + (1-p)x_2 )\big)$ and all operations involved are non--expansive.
%
\end{proof}

As $\calf((X,d),\alpha)$ is a quantitative algebra, it satisfies all the quantitative inferences of Definition \ref{def:qet}. It only remains to show that the quantitative inferences of the theory $\qetcs$ (Definition \ref{def:qcs}) are also satisfied. For each of the quantitative inferences  ($A$, $C$, $I$, $A_p$, $C_p$, $I_p$, $D$), which are of the form  $\emptyset \vdash s=_0 t$, we  need to show that the equality $s=t$ holds (universally quantified) in $(X, \sigcs^{\alpha}, d)$. This amounts to showing that the algebra $(X, \sigcs^{\alpha})$ (with the metric $d$ forgotten) is a model of the equational theory of convex semilattices (Definition \ref{def:convexsemilattices:set}). This proof has no specific metric--theoretic content and is omitted here. Thus, it only remains to show that the quantitative inferences ($H$) and ($K$) are satisfied. 

\begin{lemma}[H]
$\big\{x_1=_{\epsilon_1} y_{1},     x_2=_{\epsilon_2} y_{2}\big\} \models_{\calf((X,d),\alpha)}  x_1 \oplus x_2 =_{\max(\epsilon_1,\epsilon_2)} y_{1}\oplus y_{2}$.
\end{lemma}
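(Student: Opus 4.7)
The plan is to reduce the quantitative inference to a concrete bound on the Hausdorff--Kantorovich distance between two convex hulls of Dirac distributions, and then exhibit explicit couplings to verify that bound.

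Fix any interpretation $\iota$ with $\iota(x_i)=a_i$ and $\iota(y_i)=b_i$, and assume $d(a_i,b_i)\leq \epsilon_i$ for $i=1,2$. The goal is to show
\[
d\big(\alpha(\conv\{\dirac{a_1}, \dirac{a_2}\}),\, \alpha(\conv\{\dirac{b_1}, \dirac{b_2}\})\big) \leq \max(\epsilon_1,\epsilon_2).
\]
Since $\alpha$ is a non--expansive map from $(\cset(X),\hk(d))$ to $(X,d)$ by the definition of an Eilenberg--Moore algebra in $\EM(\lcset)$, it suffices to establish
\[
\hk(d)\big(\conv\{\dirac{a_1}, \dirac{a_2}\},\, \conv\{\dirac{b_1}, \dirac{b_2}\}\big) \leq \max(\epsilon_1,\epsilon_2).
\]

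To prove this reduced inequality, I would parametrize the two convex hulls in parallel: every element of $\conv\{\dirac{a_1}, \dirac{a_2}\}$ has the form $\distr_p = p\cdot \dirac{a_1} + (1-p)\cdot \dirac{a_2}$ for some $p\in[0,1]$, and similarly $\distrb_p = p\cdot \dirac{b_1} + (1-p)\cdot \dirac{b_2}$ ranges over the other hull. The distribution $\omega_p = p\cdot \dirac{(a_1,b_1)} + (1-p)\cdot \dirac{(a_2,b_2)}$ on $X\times X$ has marginals $\dset(\pi_i)(\omega_p)$ equal to $\distr_p$ and $\distrb_p$ respectively (a direct calculation), and therefore belongs to $Coup(\distr_p,\distrb_p)$. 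Its expected distance is
\[
\sum_{(u,v)\in X\times X} \omega_p(u,v)\cdot d(u,v) \;=\; p\cdot d(a_1,b_1) + (1-p)\cdot d(a_2,b_2) \;\leq\; \max(\epsilon_1,\epsilon_2),
\]
so $\kant(d)(\distr_p,\distrb_p)\leq \max(\epsilon_1,\epsilon_2)$ uniformly in $p$. Both suprema in the definition of $\haus(\kant(d))$ are thus bounded by $\max(\epsilon_1,\epsilon_2)$, yielding the required inequality.

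I do not anticipate any real obstacle: the two convex hulls are one--dimensional and admit a canonical parallel parametrization, which forces an obvious coupling. The only minor check is that $\omega_p$ remains a valid coupling when some of the $a_i$ or $b_i$ coincide, but this follows since the marginals are defined by pushforward and the computation does not use distinctness. An alternative strategy would exploit that $\eta^{\lcset}_X$ is an isometric embedding and that the Hausdorff lifting satisfies $\haus(d')(S_1\cup S_2, T_1\cup T_2)\leq \max(\haus(d')(S_1,T_1),\haus(d')(S_2,T_2))$, combined with the non--expansiveness of convex closure (via the convexity of $\hk(d)$ from Lemma \ref{lem:hkconv}), but the direct coupling argument above is shorter and self--contained.
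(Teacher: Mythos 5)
Your proof is correct, but it takes a more direct route than the paper. The paper disposes of (H) in one line: the inference (H) for a binary operation is interderivable with the (NExp) rule, so it holds in $\calf((X,d),\alpha)$ as soon as $\cplus^{\alpha}$ is non--expansive with respect to the $\sup$--metric, which is exactly Lemma \ref{non--expansive:lemma5.1.1}; that lemma in turn factors $\cplus^{\alpha}$ as $\alpha \circ \conv \circ \fpset\eta^{\dset}_X \circ (\lambda x_1,x_2.\{x_1,x_2\})$ and cites the non--expansiveness of each factor (established in separate appendix lemmas on $\conv$ and on pair formation). You instead unfold everything and prove the single concrete inequality $\hk(d)\big(\conv\{\dirac{a_1},\dirac{a_2}\},\conv\{\dirac{b_1},\dirac{b_2}\}\big)\leq\max(\epsilon_1,\epsilon_2)$ by exhibiting, for each point $\distr_p$ of one hull, the explicit coupling $p\cdot\dirac{(a_1,b_1)}+(1-p)\cdot\dirac{(a_2,b_2)}$ with its partner $\distrb_p$; this correctly bounds both suprema in the Hausdorff distance, and your remark that the computation survives coincidences among the $a_i,b_i$ is right. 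What the paper's route buys is reuse: the factorization lemmas are needed elsewhere (e.g.\ for well--definedness of the functor) and the (H)$\leftrightarrow$(NExp) observation makes the lemma free. What your route buys is self--containedness: it needs only the definition of $\hk$ and of a coupling, and no appeal to the general machinery. Both are sound; yours is a legitimate, slightly longer but elementary alternative.
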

\begin{proof}
The quantitative inference ($H$) is equivalent (i.e., mutually derivable in presence of the others deductive rules of Definition   \ref{def:qet}) with the (NExp) deductive rule. This means that ($H$) holds in $\calf((X,d),\alpha)$ because the operation $\cplus^{\alpha}$ is non--expansive (Lemma \ref{non--expansive:lemma5.1.1}). 
\end{proof}

\begin{lemma}[K]
$x_{1} =_{\epsilon_{1}} y_{1}, x_{2} =_{\epsilon_{2}} y_{2}  \models_{\calf((X,d),\alpha)}  x_{1} \pplus p x_{2} =_{p \cdot \epsilon_{1} + (1-p) \cdot \epsilon_{2}} y_{1} \pplus p y_{2}$.
\end{lemma}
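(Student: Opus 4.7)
The plan is to unfold the definition of $\pplus p^\alpha$ and reduce the statement step-by-step to a known convexity property of the Kantorovich metric. I proceed as follows.

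Assume the hypotheses $d(x_1, y_1) \leq \epsilon_1$ and $d(x_2, y_2) \leq \epsilon_2$. By definition, $x_1 \pplus p^\alpha x_2 = \alpha(\{p\,x_1 + (1-p)\,x_2\})$ and $y_1 \pplus p^\alpha y_2 = \alpha(\{p\,y_1 + (1-p)\,y_2\})$, where the singletons contain the distributions $p\dirac{x_1} + (1-p)\dirac{x_2}$ and $p\dirac{y_1} + (1-p)\dirac{y_2}$ respectively. Since $\alpha$ is non--expansive from $(\cset(X),\hk(d))$ to $(X,d)$, we have
\[
d\bigl(x_1 \pplus p^\alpha x_2,\ y_1 \pplus p^\alpha y_2\bigr) \;\leq\; \hk(d)\bigl(\{p\dirac{x_1} + (1-p)\dirac{x_2}\},\ \{p\dirac{y_1} + (1-p)\dirac{y_2}\}\bigr).
\]

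The right--hand side is the Hausdorff--Kantorovich distance between two singleton sets, which by definition of the Hausdorff lifting reduces to the Kantorovich distance between the two distributions themselves, namely $\kant(d)(p\dirac{x_1} + (1-p)\dirac{x_2},\ p\dirac{y_1} + (1-p)\dirac{y_2})$. The key step is then to bound this by $p\,d(x_1,y_1) + (1-p)\,d(x_2,y_2)$. There are two equivalent ways to proceed: either invoke the convexity of the Kantorovich metric (recalled in Section \ref{section:4:proof:monad} just before Lemma \ref{lem:hkconv}) together with the fact that $\eta^{\ldset}$ is an isometric embedding, so that $\kant(d)(\dirac{x_i},\dirac{y_i}) = d(x_i,y_i)$; or, more directly, exhibit the explicit coupling
\[
\omega \;=\; p\,\dirac{(x_1,y_1)} + (1-p)\,\dirac{(x_2,y_2)} \;\in\; Coup\bigl(p\dirac{x_1}+(1-p)\dirac{x_2},\ p\dirac{y_1}+(1-p)\dirac{y_2}\bigr),
\]
whose expected distance is exactly $p\,d(x_1,y_1) + (1-p)\,d(x_2,y_2)$. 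By Definition \ref{def:kantorovich:lifting}, the Kantorovich distance is the infimum over couplings, so this witness immediately yields the required upper bound.

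Chaining these bounds and using the hypotheses $d(x_i,y_i) \leq \epsilon_i$ gives $d(x_1 \pplus p^\alpha x_2, y_1 \pplus p^\alpha y_2) \leq p\,\epsilon_1 + (1-p)\,\epsilon_2$, which is exactly the quantitative inference (K). There is no serious obstacle: every ingredient (non--expansiveness of $\alpha$, the definition of $\haus$ on singletons, the explicit optimal--type coupling for a convex combination of Diracs) is either assumed or elementary, and the proof is essentially a two--line calculation once the definitions are unfolded.
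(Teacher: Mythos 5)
Your proof is correct and follows essentially the same route as the paper's: unfold $\pplus p^{\alpha}$, apply non--expansiveness of $\alpha$, reduce the Hausdorff distance between singletons to the Kantorovich distance, and bound that by $p\cdot d(x_1,y_1)+(1-p)\cdot d(x_2,y_2)$ via convexity of $\kant(d)$ (your explicit coupling $p\,\dirac{(x_1,y_1)}+(1-p)\,\dirac{(x_2,y_2)}$ is just the specialization to Diracs of the coupling used in the paper's proof of that convexity). No gaps.
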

\begin{proof}
For arbitrary $x_1,x_2,y_1,y_2\in X$, assume $d ({x_{1}},{y_{1}}) \leq \epsilon_{1}$ and  $d({x_{2}},{y_{2}}) \leq \epsilon_{2} $. 
Then
\begin{align*}
d(x_{1} \pplus p^{\alpha} x_{2}, y_{1} \pplus p^{\alpha} y_{2})
&= d(\alpha(\{p {x_{1}} +(1-p) {x_{2}}\}), \alpha(\{p {y_{1}}+(1-p) {y_{2}}\})\\
&\leq \hk (d)(\{p {x_{1}} +(1-p) {x_{2}}\}, \{p {y_{1}}+(1-p) {y_{2}}\})\tag{$\alpha$ non-exp.}\\
&= \kant (d)(p {x_{1}} +(1-p) {x_{2}}, p {y_{1}}+(1-p) {y_{2}})\\
&\leq p \cdot d(x_{1},y_{1}) + (1-p) \cdot d(x_{2}, y_{2})\tag{the metric $\kant (d)$ is convex}\\
&\leq p \cdot \epsilon_{1} + (1-p) \cdot\epsilon_{2} \tag*{\hspace*{\fill} \qedhere}
\end{align*}
\end{proof}

Hence $\calf$ is well--defined on objects. It remains to verify that $\calf$ is well defined on morphisms. Let $f:((X,d), \alpha)\to ((Y,d'), \beta)$ be a morphism in  $\EM(\lcset)$. We need to verify that 
$\calf(f)$ is a morphisms in  $\qacat(\qetcs)$, i.e.,  a non--expansive homomorphism of convex semilattices (see Definition \ref{quantitative:algebra:def}).  Since by definition $\calf(f)=f$, the function $\calf(f)$ is non--expansive. It remains to verify that it is a homomorphism. This proof has no specific metric--theoretic content and we omit it here.
\vvcut
{ We just show that $f$ preserves $\oplus^\alpha$, the case of $+_p^{\alpha}$ is analogous:
\begin{align*}
f(x_{1} \cplus^{\alpha} x_{2})
&= f({\alpha} (\conv\{\dirac {x_{1}}, \dirac {x_{2}}\}))\\
&= {\beta} (\cset f (\conv\{\dirac {x_{1}}, \dirac {x_{2}}\})) \tag{by $\alpha$ a morphism of EM algebras}\\
&= {\beta}(\conv\{\dirac {f(x_{1})}, \dirac {f(x_{2})}\}) \tag{by Proposition \ref{prop:ubf}}\\
&= {f(x_{1})}\cplus^{\beta} {f(x_{2})}.
\end{align*}
}



\subsection{The functor $\calg: \qacat(\qetcs) \to \EM(\lcset)$}\label{sec:G}

Recall that an object in  $\qacat(\qetcs)$ is a quantitative convex semilattice  $\alga =(X, \sigcs^{\alga},d)$, with $\sigcs^{\alga}=(\{\cplus^{\alga}\} \cup \{ \pplus p^\alga\}_{p\in(0,1)})$. Also, recall from Example \ref{example:presentation:set:3} that there is an isomorphism $\ics$ mapping elements of $\cset(X)$ to equivalence classes of convex semilattice terms in ${\terms X {\sigcs}}_{/\etcs}$. 
Let us define $\nf: \cset(X) \to \terms X {\sigcs}$ as a choice function, mapping each $S \in \cset(X)$ to one representative of the equivalence class $\ics(S)$. This allows us to uniquely write down each $S\in\cset(X)$ as a convex semilattice term: 
\[\nf (S) = \bigcplus_{\Delta\in \ub(S) } (\bigpplus_{x \in \support (\distr)} \distr(x) \, x).\]
With 
abuse of notation, we have used the letter $X$ to range both over a set of variables and the carrier of $\alga$. 
By interpreting each variable $x$ with the corresponding element $x\!\in\! X$ of $\alga$, and by homomorphic extension, we get that each term $t\!\in\!  \terms X {\sigcs}$ can be interpreted as an element $t^\alga$ of $\alga$, and in particular $(\nf(S))^\alga$ denotes an element of $\alga$  for each $S\in \cset(X)$.

\begin{definition}[Functor $\calg$] We specify $\calg:  \qacat(\qetcs) \rightarrow \EM(\lcset) $ as follows:
\vspace{-0.2cm}
\begin{itemize}
\itemsep-0.2cm
\item on objects $\alga =(X, \sigcs^{\alga},d)$, we define $\calg(\alga) = ((X,d), \alpha)$,\\
with $\alpha: (\cset (X),\hk(d)) \to (X,d)$ defined as: $\alpha (S)= (\nf(S))^{\alga}$,
\item on morphisms (i.e., non-expansive homomorphisms) we define $\calg(f)=f$.
\end{itemize}
\end{definition}

In order to prove that $\calg$ is well-defined on objects, we have to show that indeed $((X,d), \alpha)$ is an Eilenberg-Moore algebra for $\lcset$, which amounts to proving the following lemma.
\begin{lemma}\label{lem:Gwd}
Let $\calg(\alga) = ((X,d), \alpha)$, for $\alga=(X, \sigcs^{\alga},d)\in \qacat(\qetcs)$.
\vspace{-0.2cm}
\begin{enumerate}
\itemsep-0.2cm
\item $(X,\alpha)$ is an Eilenberg-Moore algebra for $\cset$ in $\Sets$, i.e., $\alpha \circ \eta^{\cset}=id$ and $\alpha \circ \cset\alpha= \alpha \circ \mu^{\cset}$. 
\item $\alpha$ is a morphism in $\Met$, i.e., $\alpha$ is a non-expansive map: $d\funpair \alpha \alpha \funleq \hk (d)$.
\end{enumerate}
\end{lemma}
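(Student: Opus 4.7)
The two items are proved independently.

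Part (1) is purely set-theoretic (the metric plays no role) and I would deduce it from the known set-level presentation $\acat(\etcs)\cong \EM(\cset)$. Forgetting the metric, $\alga$ is a model of the equational theory $\etcs$, since the axioms $(A),(C),(I),(A_p),(C_p),(I_p),(D)$ of $\qetcs$ are exactly those of $\etcs$ decorated with $=_0$. The concrete form of the isomorphism $\acat(\etcs)\to \EM(\cset)$ sends such a model to its term-interpretation map through the isomorphism $\ics:\cset(X)\to {\terms X \sigcs}_{/\etcs}$ of Example \ref{example:presentation:set:3}; that map is precisely $S\mapsto (\nf(S))^\alga$, i.e.\ our $\alpha$. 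The required identities $\alpha\circ \eta^\cset = id$ and $\alpha\circ \cset\alpha = \alpha\circ \mu^\cset$ then follow.

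For Part (2), given $S_1,S_2\in\cset(X)$ with $\haus\kant(d)(S_1,S_2)\leq \epsilon$, I need $d(\alpha(S_1),\alpha(S_2))\leq \epsilon$. Since $\alga$ is a model of $\qetcs$, it suffices to exhibit terms $t_1,t_2\in \terms X \sigcs$ with $t_i^\alga = \alpha(S_i)$ and $\emptyset\vdash_{\qetcs} t_1 =_\epsilon t_2$. I would build these in two stages. In the \emph{distribution stage}, for $\distrb_1,\distrb_2\in \dset(X)$ with $\kant(d)(\distrb_1,\distrb_2)\leq \epsilon^\prime$, I take an optimal coupling $\omega\in Coup(\distrb_1,\distrb_2)$ and use associativity, commutativity and idempotency of $+_p$ to rewrite the canonical convex-algebra terms for $\distrb_1,\distrb_2$ as convex combinations indexed over $\support(\omega)$, with matched pairs $(x,y)$ of weight $\omega(x,y)$; the multi-ary form of $(K)$ then produces a derivation at distance $\sum_{(x,y)}\omega(x,y)\,d(x,y)$, and infimising over $\omega$ yields $\kant(d)(\distrb_1,\distrb_2)$ (this is essentially the coupling argument underlying Theorem \ref{theorems:radu:free:algebras} for $\qetca$).

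In the \emph{set stage}, I would use the derived convexity identity $x\oplus y = x\oplus y \oplus (x +_p y)$, valid in any convex semilattice by $(D)$ (see Example \ref{example:presentation:set:3}), to enlarge the $\oplus$-normal forms $\nf(S_1),\nf(S_2)$. The Hausdorff bound supplies, for each base distribution in $\ub(S_1)$, a distribution in $S_2$ (a convex combination of $\ub(S_2)$) at Kantorovich distance $\leq \epsilon$, and symmetrically from $S_2$. Adding these witness convex combinations as extra $\oplus$-summands, legal by the convexity identity, produces terms $t_1,t_2$ still representing $\alpha(S_1),\alpha(S_2)$ whose $\oplus$-summands are in bijection with paired summands at Kantorovich distance $\leq \epsilon$. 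Applying the distribution stage to each pair and combining through the multi-ary form of $(H)$ then yields $\emptyset\vdash_{\qetcs} t_1 =_\epsilon t_2$. The main obstacle is precisely this set stage: converting the analytic Hausdorff--Kantorovich bound into a syntactic derivation requires explicitly exhibiting enlarged normal forms in bijection, and it is the convexity consequence of $(D)$ that makes this possible without leaving the equivalence classes $\ics(S_1),\ics(S_2)$; once the enlarged terms are in place, the distribution stage and $(H)$ do the rest.
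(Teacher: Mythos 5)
Your overall strategy for Part (2) is the same as the paper's: an optimal coupling plus the multi-ary form of $(K)$ handles the Kantorovich distance between single distributions; Hausdorff witnesses, the convexity identity $x\oplus y = x\oplus y\oplus(x+_p y)$ derived from $(D)$, and the multi-ary form of $(H)$ handle the Hausdorff part by enlarging the $\oplus$-normal forms so that summands pair up. For Part (1) you take a mildly different route (invoking the set-level presentation $\acat(\etcs)\cong\EM(\cset)$ and identifying its isomorphism with $S\mapsto(\nf(S))^{\alga}$, where the paper verifies $\alpha\circ\eta^{\cset}=id$ and $\alpha\circ\cset\alpha=\alpha\circ\mu^{\cset}$ by direct computation with unique bases and the generalized axiom $(D)$); this is legitimate provided you justify that the known isomorphism has exactly that concrete form.

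There is, however, one step in Part (2) that is wrong as literally stated: you reduce to exhibiting $t_1,t_2$ with $t_i^{\alga}=\alpha(S_i)$ and $\emptyset\vdash_{\qetcs} t_1=_{\epsilon}t_2$. No such derivation from the empty hypothesis set can exist in general, because the theory $\qetcs$ contains no information about the particular metric $d$ on $X$: a judgement $\emptyset\vdash t_1=_{\epsilon}t_2$ must hold under \emph{every} interpretation into \emph{every} model, so it cannot encode the bound $\hk(d)(S_1,S_2)$. Moreover, the rules $(K)$ and $(H)$ that your two stages rely on have nonempty premises, so your own construction does not produce a derivation from $\emptyset$; it produces a derivation from the hypothesis set $\Gamma=\bigcup\{x=_{d(x,y)}y\}$ ranging over the relevant support pairs. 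The correct reduction, which is exactly the paper's Lemma on derivability in $\qetcs$, is to establish $\Gamma\vdash_{\qetcs}\nf(S_1)=_{\hk(d)(S_1,S_2)}\nf(S_2)$ and then use that $\alga$, being a model, satisfies this quantitative inference: under the identity interpretation $\iota(x)=x$ every premise $d(x,y)\leq d(x,y)$ holds trivially, so the conclusion $d(\alpha(S_1),\alpha(S_2))\leq\hk(d)(S_1,S_2)$ follows. Your two-stage construction goes through verbatim once the derivation is stated relative to $\Gamma$ rather than $\emptyset$, so the gap is in the formulation of the reduction rather than in the combinatorial content.
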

\begin{proof}
The proof of the first point does not have any specific metric--theoretic content and is omitted here. For the second point, let $S,T\in \cset(X)$. By the definition of $\alpha$, we have $d(\alpha(S), \alpha(T)) 
=  d((\nf(S))^{\alga}, (\nf(T))^{\alga})$.
As stated in Lemma \ref{lem:Ghk} below, it is possible to derive in $\qetcs$ the quantitative inference
$$
\bigcup_{(\Delta,\Theta)\in\ub(S)\times \ub(T)} (\bigcup_{(x,y) \in \support(\Delta)\times \support(\Theta)}\{x=_{d(x,y)} y\})
\vdash 
\nf(S) 
=_{\hk(d)(S,T)}  
\nf(T)
$$
which, since $\alga$ is a model of $\qetcs$, is thereby satisfied by $\alga$. Since all the premises of the inference hold in $\alga$, we conclude that $d((\nf(S))^\alga, (\nf(T))^\alga) \leq  \hk(d)(S,T)$ and, therefore, $d\funpair \alpha \alpha \funleq \hk (d)$ holds, as desired.
\end{proof}

The following technical lemma  is critically used in the proof of Lemma \ref{lem:Gwd}(2) above. Note that its statement is purely syntactic as it deals with derivability in the deductive apparatus of quantitative equational theories (Definition \ref{def:qet}).

\begin{lemma}\label{lem:Ghk}
Let $(X,d)$ be a metric space and let $S,T \in \cset(X)$. Then we have in $\qetcs$:
\[
\bigcup_{(\Delta,\Theta)\in\ub(S)\times \ub(T)} (\bigcup_{(x,y) \in \support(\Delta)\times \support(\Theta)}\{x=_{d(x,y)} y\})
\vdash 
\nf(S) 
=_{\hk(d)(S,T)}  
\nf(T)
\]
\end{lemma}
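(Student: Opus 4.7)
The plan is to reduce the general Hausdorff--Kantorovich bound to the ``single distribution'' Kantorovich case, using the distributivity axiom (D) together with the semilattice and convex-algebra axioms of $\qetcs$ to absorb witnessing convex combinations into the semilattice join. Write $\ub(S)=\{\Delta_1,\dots,\Delta_n\}$, $\ub(T)=\{\Theta_1,\dots,\Theta_m\}$, $\epsilon = \hk(d)(S,T)$, and let $D_i = \bigpplus_{x\in\support(\Delta_i)}\Delta_i(x)\,x$, $E_j = \bigpplus_{y\in\support(\Theta_j)}\Theta_j(y)\,y$, so that $\nf(S) = \bigcplus_i D_i$ and $\nf(T) = \bigcplus_j E_j$.

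First, I would establish two auxiliary derivations in $\qetcs$. The first is a \emph{convexity equation}: as recalled in Example \ref{example:presentation:set:3}, (D) together with the remaining axioms yields $x \oplus y =_0 x \oplus y \oplus (x \pplus p y)$; iterating this with distributivity yields, for any convex combination $\bigpplus_i r_i\, t_i$ of terms $t_1,\dots,t_n$, the derivation $\bigcplus_i t_i =_0 \bigl(\bigcplus_i t_i\bigr)\oplus \bigl(\bigpplus_i r_i\, t_i\bigr)$. The second is a \emph{single-distribution Kantorovich bound}: for $\Delta,\Theta\in\dset(X)$, from premises $\{x =_{d(x,y)} y\}_{(x,y)\in\support(\Delta)\times\support(\Theta)}$ I would derive $\bigpplus_x \Delta(x)\,x =_{\kant(d)(\Delta,\Theta)} \bigpplus_y \Theta(y)\,y$. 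To do this, pick an optimal coupling $\omega \in Coup(\Delta,\Theta)$ (which exists by finite support), use (A$_p$), (C$_p$), (I$_p$) to rewrite both sides with distance $0$ as $\bigpplus_{(x,y)}\omega(x,y)\,x$ and $\bigpplus_{(x,y)}\omega(x,y)\,y$, and then apply iterated (K) to the pairs $(x,y)$ to obtain total distance $\sum_{(x,y)}\omega(x,y)\,d(x,y) = \kant(d)(\Delta,\Theta)$.

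Next comes the main construction. For each $i$, since $\Delta_i \in S$, the definition of the Hausdorff distance yields some $\hat\Theta_i \in T$ with $\kant(d)(\Delta_i, \hat\Theta_i) \leq \epsilon$; expand $\hat\Theta_i = \sum_j q_{ij}\Theta_j$ as a convex combination of $\ub(T)$, and set $\hat E_i := \bigpplus_j q_{ij}\,E_j$. Symmetrically, for each $j$ choose $\hat\Delta_j = \sum_i r_{ji}\Delta_i \in S$ with $\kant(d)(\hat\Delta_j, \Theta_j) \leq \epsilon$ and set $\hat D_j := \bigpplus_i r_{ji}\, D_i$. Applying the first auxiliary twice, I derive in $\qetcs$
\[\nf(S) \;=_0\; \nf(S) \oplus \hat D_1 \oplus \cdots \oplus \hat D_m, \qquad \nf(T) \;=_0\; \hat E_1 \oplus \cdots \oplus \hat E_n \oplus \nf(T).\]
Moreover, by the convex-algebra axioms, $\hat E_i$ is provably $=_0$ to the encoding of $\hat\Theta_i$ and $\hat D_j$ to that of $\hat\Delta_j$. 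Applying the second auxiliary to each pair $(\Delta_i,\hat\Theta_i)$ and $(\hat\Delta_j,\Theta_j)$, and (Max) to bump each distance up to $\epsilon$, yields $D_i =_{\epsilon} \hat E_i$ and $\hat D_j =_{\epsilon} E_j$. Iterated (H) then combines these component-wise bounds into
\[D_1 \oplus \cdots \oplus D_n \oplus \hat D_1 \oplus \cdots \oplus \hat D_m \;=_{\epsilon}\; \hat E_1 \oplus \cdots \oplus \hat E_n \oplus E_1 \oplus \cdots \oplus E_m,\]
and (Cut) with the two $=_0$ equalities above concludes $\nf(S) =_\epsilon \nf(T)$.

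The main obstacle, as I expect it, is the first auxiliary result: absorbing an arbitrary convex combination into a semilattice sum using only the $\qetcs$ axioms. The binary base case is immediate from (D) and (I) together with (I$_p$), but the general $n$-ary statement requires a careful induction on the size of the convex combination, invoking distributivity repeatedly to peel off one summand at a time. A secondary (chiefly bookkeeping) concern is aligning the $n+m$ summands on the two sides of the iterated (H) step so that the \emph{maximum} of component distances is exactly $\epsilon$, not something larger — this is why one must introduce the $\hat D_j$'s on the $S$-side and the $\hat E_i$'s on the $T$-side symmetrically.
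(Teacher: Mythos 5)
Your proposal is correct and follows essentially the same route as the paper's proof: the ``single-distribution Kantorovich bound'' is the paper's Lemma \ref{lem:Gkant} (proved exactly as you sketch, via an optimal coupling, the convex-algebra axioms and iterated (K)), the ``convexity equation'' for absorbing convex combinations into semilattice joins is Proposition \ref{prop:removingconv} (quoted from \cite{BSV20ar}), and the main construction --- choosing Hausdorff witnesses on each side, padding both $\nf(S)$ and $\nf(T)$ with the encodings of these witnesses at distance $0$, bumping each pairwise Kantorovich bound to $\hk(d)(S,T)$ with (Max), and combining with iterated (H), (Triang) and (Cut) --- is the paper's argument verbatim. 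The only cosmetic difference is that you introduce the terms $\hat E_i$, $\hat D_j$ explicitly as convex combinations of the $E_j$, $D_i$ and note they are provably $=_0$ to the encodings of the witnesses, whereas the paper works directly with $\nf(\Theta_\Delta)$ and $\nf(\Delta_\Theta)$; both versions need the same observation that the supports of the witnesses are contained in $\bigcup\support(\Theta_j)$ so that the required premises lie in $\Gamma$.
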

\begin{proof}[Proof Sketch]
First, we derive the following useful quantitative inference dealing with the case of  $S=\{\Delta\}$ and $T=\{\Theta\}$ being singletons, so that $\hk(d)(S,T) = \kant(d)(\Delta,\Theta)$.  Let $(X,d)$ be a metric space and let $\Delta, \Theta \in \dset(X)$. Then the following is derivable in $\qetcs$:
\[\bigcup_{(x,y)\in \support (\Delta) \times \support (\Theta)}\{x=_{d(x,y)} y\}\vdash  \nf(\{\Delta\}) =_{\kant(d)(\Delta,\Theta)} \nf(\{\Theta\}).\]
To construct this derivation we take an optimal coupling $\omega$ of $\Delta$ and $\Theta$ (see Definition \ref{def:kantorovich:lifting}) witnessing the Kantorovich distance $\kant(d)(\Delta,\Theta)$ and then use the information provided by $\omega$ to construct a syntactic derivation where only the quantitative inferences ($A_p$, $C_p$, $I_p$ and $K$)  of the quantitative theory of convex algebras are used. 
The construction of this derivation follows analogously to the completeness result for quantitative convex algebras from \cite{radu2016}.

Secondly, we calculate the $\hk(d)(S,T)$ distance between $S$ and $T$.
$$
\hk(d)(S,T) = \max\big\{  \sup_{\Delta\in S}\inf_{\Theta \in T}\kant(d)(\Delta,\Theta)  \ \     , \ \  \sup_{\Theta\in T}\inf_{\Delta \in S}\kant(d)(\Delta,\Theta)    \big\}.
$$
By compactness arguments, the $\inf$ and $\sup$ are always attained. Hence this calculation involves distances $\kant(d)(\Delta_i,\Theta_j)$ between a finite number of elements $\Delta_i\in S$ and $\Theta_j \in T$, for $0\leq i \leq n$ and $0\leq j \leq m$. 
Since the equation $x \oplus y = x \oplus y \oplus ( x +_p y)$ holds in all convex semilattices, 
we can derive in the theory of convex semilattices the equalities: $\nf(S) = \nf(S) \oplus \nf(\{\Delta_1\}) \oplus \dots \oplus \nf(\{\Delta_n\})$ and $\nf(T) = \nf(T) \oplus \nf(\{\Theta_1\}) \oplus \dots \oplus \nf(\{\Theta_m\})$. For each of the pairs $(\Delta_i, \Theta_j)$ appearing in the expressions above we can derive, as described above, the quantitative equation  $\nf(\{\Delta_i\}) =_{\kant(d)(\Delta_i,\Theta_j)} \nf(\{\Theta_j\})$.  The calculation of $\hk(d)(S,T)$ can then be mimicked syntactically to derive the quantitative equation $\nf(S) 
=_{\hk(d)(S,T)}  
\nf(T)$ by only using the quantitative inferences ($A$, $C$, $I$ and $H$)  of quantitative semilattices. This follows analogously to the completeness result for quantitative semilattices from \cite{radu2016}.
\end{proof}

It remains to verify that the functor $\calg$ is well-defined on morphisms.  To see this, take $f:X\to Y$ a non-expansive homomorphism of quantitative algebras $\alga =(X, \sigcs^{\alga},d)$
and $\algb =(Y, \sigcs^{\algb},d')$ in $\qacat(\qetcs)$.
Then $f$ is an arrow in $\Met$, being non-expansive. We therefore only need to show that $f$ is also a morphism of Eilenberg-Moore algebras (see Definition \ref{def:algebra-of-a-monad}) i.e., that $ f\circ \alpha = \beta \circ \lcset(f)  $. The verification of this equality involves no specific metric--theoretic considerations, and is therefore omitted.

\subsection{The isomorphism}\label{sec:iso}

It remains to prove that the functors $\calf:  \EM(\lcset) \to \qacat(\qetcs)$ and $\calg:  \EM(\lcset) \to \qacat(\qetcs)$ define an isomorphism between the categories 
$\EM(\lcset)$ and  $\qacat(\qetcs)$. This means proving that $\calg \circ \calf = id_{\EM(\lcset)}$ and $\calf \circ \calg = id_{\qacat(\qetcs)}$. On morphisms,  by definition we have $\calg \circ \calf(f) = f = \calf \circ \calg(f)$. Hence the identities trivially hold true. The proofs regarding the identities on objects require only routine verifications, unfolding definitions, not involving any specific metric--theoretic content and therefore we omit them here.


\section{Conclusions}

We have introduced the $\Met$ monad $\lcset$ of non--empty convex sets of distributions equipped with the Hausdorff-Kantorovich distance, and we have proved that $\lcset$ is presented by the quantitative equational theory $\qetcs$ of quantitative convex semilattices.  This result provides the basis
for a foundational understanding of equational reasoning about program distances in processes combining nondeterminism and probabilities, as in  bisimulation and trace metrics \cite{DengCPP06, GeblerLT16, TangB18, BacciBLMTB19, Cast18}. This opens several directions for future research.

%

For instance, one interesting line of research is to examine the axiomatizations of bisimulation equivalences and metrics for nondeterministic and probabilistic programs (or process algebras) that have been proposed in the literature \cite{MOW03, BS01, DP07, Andova99, DArgenioGL14}. The quantitative equational framework of quantitative convex semilattices provides a novel tool for comparing and further developing the existing works.

It is also important to explore variants of the $\Met$ monad $\lcset$ such as, for instance, the one that also includes the empty set. 
These are needed to model program observations such as termination.
Following the ideas presented in \cite{BSV19}, these variants can be explored via the lift monad $(\cdot +1)$ and its quotients described by equational theories over the signature of convex semilattices extended with a new constant symbol. A systematic study of these quotients is a promising direction for future work. 
Applications to up-to techniques for bisimulation metrics \cite{CPV16, BonchiKP18} could then be pursued as well.

Lastly, it is natural to ask if the monad $\lcset$, and its presentation, can be obtained as a general categorical composition of the hyperspace monad $\hs$ and the distribution monad $\ldset$. The recently announced results of Goy and Petrisan  \cite{Petrisan20}, if applicable to the category $\Met$, might shed some light on this question.

\newpage

\vvcut{


%


In order to capture program equivalences and distances, it is first necessary to model program termination. This can be embedded using the lift monad $(\cdot +1)$ and its quotients (see, e.g., \cite{BSV19}). The interaction of nondeterminism, probabilities, and termination in $\Met$ and in quantitative equational theories is thereby a natural direction for future work.
By adapting the machinery for purely probabilistic processes used in \cite{DBLP:conf/lics/BacciMPP18, DBLP:journals/entcs/Bacci0LM18, BacciBLM18}, we could then have quantitative equational theories capturing bisimulation and trace-like distances on probabilistic and nondeterministic processes.


Whether the monad $\lcset$ (and its presentation) can be obtained in a canonical way, i.e., arising as the composition of the hyperspace monad and the Kantorovich monad, is yet unclear. This could be addressed using categorical tools recently announced by Goy and Petrisan  \cite{Petrisan20}.

compare with Baldan??

cite powerset construction and Sokolova and Silva?

that the  presentation of 
- Monad with termination (+1)\\
- bisimulation/traces using diamond\\
- Petrisan\\
- GSOS (works by Gebler et al?)\\
- recursion/moving beyond non-expansiveness: continuity etc?\\
- variations of the metrics, e.g. Wasserstein metric

Riga 1\\
Riga 2\\
Riga 3\\
Riga 4\\
Riga 5\\
Riga 6\\
Riga 7\\
Riga 8\\
Riga 9\\
Riga 10\\
Riga 11
}

\newpage
\bibliography{biblio} 

\newpage
\appendix

\section{Proofs for Section \ref{section:4:proof:monad}}

We first recall some basic properties of suprema and infima.
\begin{proposition}\label{appendix:basic:infsup:properties}
Let $X,Y$ be sets and $f:X\rightarrow\mathbb{R}$ and $f^\prime:Y\rightarrow\mathbb{R}$ be arbitrary bounded functions. The following properties  hold:
\begin{align*}
&\text{(SUP)} &&\sup_{x\in X, y\in Y}\big(\bindistr p  {f(x)} {f'(y)}\big) =\bindistr p {(\sup_{x\in X} f(x))}  {(\sup_{y\in Y} f'(y))} \\
&\text{(INF)}&&\inf_{x\in X, y\in Y}\big(\bindistr p  {f(x)} {f'(y)}\big)=\bindistr p {(\inf_{x\in X} f(x))}  {(\inf_{y\in Y} f'(y))}
\end{align*}
\end{proposition}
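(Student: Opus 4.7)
The plan is to reduce both identities to two elementary and well known facts about suprema and infima of real-valued functions on product sets, namely:
\begin{itemize}
\itemsep-0.2cm
\item[(i)] for any bounded $g:X\to\mathbb{R}$ and $h:Y\to\mathbb{R}$, $\sup_{(x,y)\in X\times Y}(g(x)+h(y)) = \sup_{x\in X} g(x) + \sup_{y\in Y} h(y)$, and symmetrically for $\inf$;
\item[(ii)] for any bounded $g:X\to\mathbb{R}$ and constant $c\geq 0$, $\sup_{x\in X} c\cdot g(x) = c\cdot \sup_{x\in X} g(x)$, and symmetrically for $\inf$.
\end{itemize}
Both (i) and (ii) are standard. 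For (i), the inequality $\leq$ follows since $g(x)+h(y) \leq \sup g + \sup h$ for all $(x,y)$; the reverse inequality $\geq$ follows by choosing, for any $\varepsilon>0$, points $x_\varepsilon,y_\varepsilon$ with $g(x_\varepsilon)\geq \sup g - \varepsilon/2$ and $h(y_\varepsilon)\geq \sup h - \varepsilon/2$. For (ii), the monotone scaling by a non-negative constant commutes with sup (and with inf), as multiplication by $c\geq 0$ is monotone.

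To prove (SUP), I would instantiate (ii) with $c=p$ and $c=1-p$ (both non-negative, since $p\in(0,1)$ in the paper's conventions, or more generally $p\in[0,1]$) to rewrite
\[
\sup_{x\in X, y\in Y}\bigl(p\cdot f(x)+(1-p)\cdot f'(y)\bigr)
\]
first as the supremum of $g(x)+h(y)$ with $g(x)=p\cdot f(x)$ and $h(y)=(1-p)\cdot f'(y)$, then apply (i) to obtain $\sup_x g(x) + \sup_y h(y)$, and finally pull the scalars back out using (ii) to arrive at $p\cdot(\sup_x f(x)) + (1-p)\cdot(\sup_y f'(y))$. The proof of (INF) is entirely analogous, using the $\inf$-versions of (i) and (ii).

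There is no genuine obstacle here: the result is a purely elementary distributivity/commutation property of $\sup$ and $\inf$ with affine convex combinations on a product domain. The only mild care to take is to verify that the coefficients $p$ and $1-p$ are non-negative (so that step (ii) applies without flipping the ordering), and to note that boundedness of $f$ and $f'$ guarantees finiteness of all suprema and infima, so that the arithmetic on the right-hand side is well-defined. Because of this, I would present the argument as a short direct chain of equalities invoking (i) and (ii), rather than as separate $\leq$ and $\geq$ inequalities.
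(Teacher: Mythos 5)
Your proof is correct. Note that the paper itself offers no proof of this proposition at all --- it is stated in the appendix as a recalled elementary fact about suprema and infima --- so there is nothing to compare against; your reduction to the two standard facts (additivity of $\sup$/$\inf$ over product domains and commutation with non-negative scaling), together with the observations that $p,1-p\geq 0$ and that boundedness guarantees finiteness, is exactly the routine argument the authors left implicit.
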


It is well-known that the Kantorovich distance is convex. We present here an easy proof of this result.

\begin{proposition} \label{prop:kantconv}
Let $(X,d)$ be a metric space. The metric $\kant (d)$ on the convex algebra $(\dset(X),  \{+_p\}_{p\in (0,1)})$, with
 $\Delta_1 +_p \Delta_2 = p_1\cdot \Delta_1 + (1-p_1)\cdot\Delta_2$, is convex.
%
\end{proposition}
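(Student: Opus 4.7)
The plan is to use the coupling-based definition of $\kant(d)$ given in Definition \ref{def:kantorovich:lifting}. The key observation is that convex combinations of couplings are again couplings of the corresponding convex combinations of the marginals, and the cost functional $\omega \mapsto \sum_{(x_1,x_2)} \omega(x_1,x_2)\cdot d(x_1,x_2)$ is linear in $\omega$.

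Concretely, fix $\Delta_1, \Delta_2, \Theta_1, \Theta_2 \in \dset(X)$ and $p \in (0,1)$. First I would establish the following closure property: if $\omega_1 \in Coup(\Delta_1, \Theta_1)$ and $\omega_2 \in Coup(\Delta_2, \Theta_2)$, then the convex combination $\omega := p\cdot \omega_1 + (1-p)\cdot \omega_2 \in \dset(X \times X)$ belongs to $Coup(\Delta_1 +_p \Delta_2,\, \Theta_1 +_p \Theta_2)$. This is a routine check using linearity of the pushforward: $\dset(\pi_i)(p\cdot \omega_1 + (1-p)\cdot \omega_2) = p\cdot \dset(\pi_i)(\omega_1) + (1-p)\cdot \dset(\pi_i)(\omega_2)$, which equals $p\cdot \Delta_i' + (1-p)\cdot \Theta_i'$ with the appropriate marginals.

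Next, I would exploit linearity of the cost functional:
\begin{align*}
\sum_{(x_1,x_2)} \omega(x_1,x_2) \cdot d(x_1,x_2)
&= p \sum_{(x_1,x_2)} \omega_1(x_1,x_2)\cdot d(x_1,x_2) \\
&\quad + (1-p) \sum_{(x_1,x_2)} \omega_2(x_1,x_2)\cdot d(x_1,x_2).
\end{align*}
Taking the infimum on the left over all couplings of $\Delta_1 +_p \Delta_2$ and $\Theta_1 +_p \Theta_2$ gives an upper bound on $\kant(d)(\Delta_1 +_p \Delta_2, \Theta_1 +_p \Theta_2)$. Then, using property (INF) from Proposition \ref{appendix:basic:infsup:properties} applied to the independent infima over $\omega_1 \in Coup(\Delta_1,\Theta_1)$ and $\omega_2 \in Coup(\Delta_2,\Theta_2)$, I obtain
\[
\kant(d)(\Delta_1 +_p \Delta_2,\, \Theta_1 +_p \Theta_2) \leq p\cdot \kant(d)(\Delta_1,\Theta_1) + (1-p)\cdot \kant(d)(\Delta_2,\Theta_2),
\]
which is exactly the convexity inequality.

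The main (mild) obstacle is the first step: verifying that the construction $\omega = p\cdot \omega_1 + (1-p)\cdot\omega_2$ actually lies in the required coupling set. Once this is secured, the rest is a pure linearity plus infimum-swap argument, and no metric-theoretic subtlety beyond the definition is needed. Note that since infima over couplings are attained on the relevant finitely supported distributions (the cost functional is continuous and the coupling polytope is compact in the finite-support setting), one can alternatively pick optimal $\omega_1, \omega_2$ witnessing the Kantorovich distances and then the inequality becomes an equality at the witness, giving the same bound without appealing to (INF).
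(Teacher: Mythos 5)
Your proposal is correct and follows essentially the same route as the paper's proof: form the convex combination of couplings, verify it is a coupling of the combined marginals, use linearity of the cost functional, and split the infimum via the (INF) property. No gaps.
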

\begin{proof}[Proof of Proposition \ref{prop:kantconv}.]

We need to show that \[\kant (d) (p\cdot \Delta+ (1-p)\cdot \Delta', p\cdot \Theta+ (1-p)\cdot \Theta')\leq p\cdot \kant (d) (\Delta,\Theta)+ (1-p)\cdot \kant (d)(\Delta',\Theta').\]

Take any pair of couplings $\omega'\in Coup(\Delta,\Theta), \omega''\in Coup(\Delta',\Theta')$. Then the distribution $\omega= p\cdot \omega' + (1-p)\cdot \omega''$ is a coupling for $(p\cdot \Delta+ (1-p)\cdot \Delta', p\cdot \Theta+ (1-p)\cdot \Theta')$, since:
\begin{align*}
(p\cdot \Delta+ (1-p)\cdot \Delta') (x)
&=p\cdot (\sum_{y\in X}\omega'(x,y)) + (1-p)\cdot (\sum_{y\in X}\omega''(x,y))\\
&=\sum_{y\in X}(p\cdot \omega'(x,y) + (1-p)\cdot \omega''(x,y))\\
&=\sum_{y\in X}\omega(x,y)
\end{align*}
and analogously we have
\[(p\cdot \Theta+ (1-p)\cdot \Theta' )(y)
=\sum_{x\in X}\omega(x,y).\]

Hence, for any pair of couplings $\omega'\in Coup(\Delta,\Theta), \omega''\in Coup(\Delta',\Theta')$,
\begin{align*}
&\kant(d) (p\cdot \Delta+ (1-p)\cdot \Delta', p\cdot \Theta+ (1-p)\cdot \Theta')\\
&= \inf_{\omega \in Coup(p\cdot \Delta+ (1-p)\cdot \Delta', p\cdot \Theta+ (1-p)\cdot \Theta')} 
\sum_{(x,y)}\omega(x,y) \cdot d(x,y)\\
&\leq  
\sum_{(x,y)}(p\cdot \omega' + (1-p)\cdot \omega'')(x,y) \cdot d(x,y)\\
&= 
p\cdot  
(\sum_{(x,y)}\omega'(x,y) \cdot d(x,y))
+ (1-p)\cdot 
(\sum_{(x,y)}\omega''(x,y) \cdot d(x,y))
\end{align*}
Then we derive the result:
\begin{align*}
&\kant(d) (p\cdot \Delta+ (1-p)\cdot \Delta', p\cdot \Theta+ (1-p)\cdot \Theta') \\
&\leq  
\inf_{{\omega'} \in {Coup(\Delta,\Theta)}} \inf_{{\omega''} \in {Coup(\Delta',\Theta')}} \Big(p\cdot  (\sum_{(x,y)}\omega'(x,y) \cdot d(x,y))
+ (1-p)\cdot 
(\sum_{(x,y)}\omega''(x,y) \cdot d(x,y))\Big)\\
&=
p\cdot (\inf_{{\omega'} \in {Coup(\Delta,\Theta)}} 
\sum_{(x,y)}\omega'(x,y) \cdot d(x,y))
+ (1-p)\cdot (\inf_{{\omega''} \in {Coup(\Delta',\Theta')}} 
\sum_{(x,y)}\omega''(x,y) \cdot d(x,y))\tag{by Proposition \ref{appendix:basic:infsup:properties}(INF)} \\
&= p\cdot \kant(d)(\Delta,\Theta)+ (1-p)\cdot \kant(d)(\Delta',\Theta')
\end{align*}
\end{proof}

\begin{proof}[Proof of Lemma \ref{lem:hkconv}]
We need to show that for all $S,S',T,T' \in \cset (X)$ and for all $p \in (0,1)$ it holds
\[\hk (d) \big(\bindistrwms p S  T, \; \bindistrwms p {S'}  {T'}\big) \leq \bindistr p {\hk (d) (S,S')} {\hk (d) (T,T')}.\]

We first derive the inequality
\begin{equation}\label{hausconv1}
\begin{gathered}
p\cdot \Big(\sup_{\distr'\in S'} \big(\inf_{\distr \in S} \kant(d)(\distr,\distr') \big)\Big)+ (1-p)\cdot \Big(\sup_{\distrb'\in T'} \big(\inf_{\distrb \in T} \kant(d)(\distrb,\distrb') \big)\Big)\\
\geq \sup_{\Psi'\in \bindistrwms p {S'} {T'}} \big(\inf_{\Psi \in \bindistrwms p {S} {T}} \kant(d)(\Psi,\Psi')\big) 
  \end{gathered}
\end{equation}
as follows:
\begin{align*}
&p\cdot \Big(\sup_{\distr'\in S'} \big(\inf_{\distr \in S} \kant(d)(\distr,\distr') \big)\Big)+ (1-p)\cdot \Big(\sup_{\distrb'\in T'} \big(\inf_{\distrb \in T} \kant(d)(\distrb,\distrb') \big)\Big)\\
&= \sup_{\distr'\in S', \distrb'\in T'} \Big(p\cdot \big(\inf_{\distr \in S} \kant(d)(\distr,\distr')\big) + (1-p)\cdot \big(\inf_{\distrb \in T} \kant(d)(\distrb,\distrb')\big) \Big)
\tag{by Proposition \ref{appendix:basic:infsup:properties}(SUP)} \\
&= \sup_{\distr'\in S', \distrb'\in T'} \Big(\inf_{\distr \in S, \distrb\in T} \big(p\cdot \kant(d)(\distr,\distr') + (1-p)\cdot \kant(d)(\distrb,\distrb')\big) \Big)
\tag{by Proposition \ref{appendix:basic:infsup:properties}(INF)} \\
&\geq \sup_{\distr'\in S', \distrb'\in T'} \big(\inf_{\distr \in S, \distrb \in T} \kant(d)(\bindistr p \distr \distrb , \bindistr p {\distr'} {\distrb'})  \big)
 \tag{by Proposition \ref{prop:kantconv}
 and monotonicity of $\inf$ and $\sup$}\\
&= \sup_{\Psi'\in \bindistrwms p {S'} {T'}} \big(\inf_{\Psi \in \bindistrwms p {S} {T}} \kant(d)(\Psi,\Psi')\big)
 \tag{definition of $\wms$}\\
\end{align*}

We then use inequality (\ref{hausconv1}) to derive:
\begin{align*}
&\hk (d) \big(\bindistrwms p S  T,\; \bindistrwms p {S'}  {T'}\big)
\\
&=\max
 \Big\{
\sup_{\Psi'\in \bindistrwms p {S'} {T'}} \big(\inf_{\Psi \in \bindistrwms p {S} {T}} \kant(d)(\Psi,\Psi') \big) ,\\
&\qquad\qquad \sup_{\Psi'\in \bindistrwms p {S'} {T'}} \big(\inf_{\Psi \in \bindistrwms p {S} {T}} \kant(d)(\Psi,\Psi')\big)  
\Big\}
 \tag{definition of $\haus$}\\
 &\leq \max
 \Big\{
 p\cdot \Big(\sup_{\distr'\in S'} \big(\inf_{\distr \in S} \kant(d)(\distr,\distr') \big)\Big)
 + (1-p)\cdot \Big(\sup_{\distrb'\in T'} \big(\inf_{\distrb \in T} \kant(d)(\distrb,\distrb')\big) \Big),\\
&\qquad\qquad p\cdot \Big(\sup_{\distr\in S} \big(\inf_{\distr' \in S'} \kant(d)(\distr,\distr') \big)\Big)
 + (1-p)\cdot \Big(\sup_{\distrb\in T} \big(\inf_{\distrb' \in T'} \kant(d)(\distrb,\distrb')\big) \Big)
\Big\}
 \tag{by inequality (\ref{hausconv1})}
 \end{align*}
and since for any finite set $A,B$ it holds {$\max(A)\leq \max (A\cup B)$}, we have that this expression is less than or equal to
\begin{align*}
&\max 
 \Big\{
 p\cdot \Big(\sup_{\distr'\in S'} \big(\inf_{\distr \in S} \kant(d)(\distr,\distr') \big)\Big)
 + (1-p)\cdot \Big(\sup_{\distrb'\in T'} \big(\inf_{\distrb \in T} \kant(d)(\distrb,\distrb')\big) \Big),\\
&\qquad\qquad p\cdot \Big(\sup_{\distr'\in S'} \big(\inf_{\distr \in S} \kant(d)(\distr,\distr') \big)\Big)
 + (1-p)\cdot \Big(\sup_{\distrb\in T} \big(\inf_{\distrb' \in T'} d(\distrb,\distrb')\big) \Big),\\
&\qquad\qquad p\cdot \Big(\sup_{\distr\in S} \big(\inf_{\distr' \in S'} \kant(d)(\distr,\distr') \big)\Big)
 + (1-p)\cdot \Big(\sup_{\distrb'\in T'} \big(\inf_{\distrb \in T} \kant(d)(\distrb,\distrb')\big) \Big),\\
&\qquad\qquad p\cdot \Big(\sup_{\distr\in S} \big(\inf_{\distr' \in S'} \kant(d)(\distr,\distr') \big)\Big)
 + (1-p)\cdot \Big(\sup_{\distrb\in T} \big(\inf_{\distrb' \in T'} d(\distrb,\distrb')\big) \Big)
\Big\}
\end{align*}
which is in turn, by Proposition \ref{appendix:basic:infsup:properties}(SUP) applied to $\max$, equal to
\begin{align*}
&p\cdot  \max\Big\{\sup_{\distr'\in S'} \big(\inf_{\distr \in S} \kant(d)(\distr,\distr')\big), \sup_{\distr\in S} \big(\inf_{\distr' \in S'} \kant(d)(\distr,\distr')\big) \Big\}
+\\
&\quad (1-p) \cdot \max\Big\{\sup_{\distrb'\in T'} \big(\inf_{\distrb \in T} \kant(d)(\distrb,\distrb') \big), \sup_{\distrb\in T} \big(\inf_{\distrb' \in T'} \kant(d)(\distrb,\distrb')\big)\Big\} 
\\
&= \bindistr p {\hk (d) (S,S')} {\hk (d) (T,T')}  \tag{definition of $\haus$}
\end{align*}

\end{proof}

\begin{proof}[Proof of Lemma \ref{lem:newms}.]
Let $\sum_{i} p_{i}  S_{i}, \sum_{j} q_{j} T_{j}$ be probability distributions over sets $S_{i},T_{j}\in \cset (X)$.
Let $\omega \in Coup(\sum_{i} p_{i}  S_{i}, \sum_{j} q_{j}  T_{j})$ be an arbitrary coupling for such distributions (see Definition \ref{def:kantorovich:lifting}). Let $r_{i,j}\in[0,1]$ be defined as $r_{i,j}=\omega(S_{i},T_{j})$, for each $i\in I$ and $j\in J$.

Then:
\begin{align*}
&\hk (d) (
\wms(\sum_{i} p_{i} S_{i}), 
\wms(\sum_{j} q_{j}  T_{j}))\\
&=\hk (d) ( 
\wms(\sum_{i} (\sum_{j} r_{i,j} )S_{i}), 
\wms(\sum_{j} (\sum_{i} r_{i,j}) T_{j}))
& \tag{by $\omega$ being a coupling}\\
&=\hk (d) (
\wms(\sum_{(i,j)} r_{i,j}  S_{i}), 
\wms(\sum_{(i,j)} r_{i,j}T_{j}))
&\tag{by $S_{i},T_{j}$ convex for all $i,j$}\\
&\leq \sum_{(i,j)} r_{i,j} \cdot 
\hk (d) (S_{i}, T_{j})
& \tag{by Lemma \ref{lem:hkconv}, generalized to arbitrary finitely supported distributions}
\end{align*}
Hence, we derive the result:
\begin{align*}
\hk (d) (
\wms(\sum_{i} p_{i}  S_{i}), 
\wms(\sum_{j} q_{j}  T_{j}))
&\leq \inf_{\omega \in Coup(\sum_{i} p_{i}  S_{i}, \sum_{j} q_{j}  T_{j})}
\sum_{(i,j)} r_{i,j}\cdot \hk (d) (S_{i}, T_{j}) & \\
&= \kant\hk (d) (\sum_{i} p_{i}  S_{i}, \sum_{j} q_{j} T_{j}) & \tag{by definition of $\kant$}\\
\end{align*}
\end{proof}

\begin{proof}[Proof of Proposition \ref{lem:hausmon}.]
\begin{align*}
\haus (d) (S,T)
&=
\max\{
\sup_{x\in S}\inf_{y\in T} d (x,y), 
\sup_{y\in T} \inf_{x\in S} d (x,y)\}\\
&\leq 
\max\{
\sup_{x\in S}\inf_{y\in T} d' (x,y), 
\sup_{y\in T} \inf_{x\in S} d' (x,y)\}
&\tag{by $d\funleq d'$ and by monotonicity of $\sup$ and $\inf$}\\
&= \haus (d') (S,T)
\end{align*}
\end{proof}

\begin{proof}[Proof of Proposition \ref{lem:hausfunpair}.]
Let $S,T$ be compact subsets of $(X,d)$. Then
\begin{align*}
\haus (d_{X})(S,T) 
&= 
\max\{
\sup_{x\in S}\inf_{x'\in T} d_{X} (x,x'), 
\sup_{x'\in T} \inf_{x\in S} d_{X} (x,x')\}\\
& =
\max\{
\sup_{x\in S}\inf_{x'\in T} d_{Y} (f(x),f(x')), 
\sup_{x'\in T} \inf_{x\in S} d_{Y} (f(x),f(x'))\}\\
&=\max\{
\sup_{y\in (\hs f)(S)}\inf_{y'\in (\hs f)(T)} d_{Y} (y,y'), 
\sup_{y'\in (\hs f)(T)} \inf_{y\in  (\hs f)(S)} d_{Y} (y,y')\}\\
&= \haus (d_{Y}) ({\hs f}(S), {\hs f}(T))
\end{align*}
\end{proof}

\section{Proofs for Section \ref{section:5:proof:presentation}}

We first recall some useful properties of the $\Sets$ monad $\cset$ and convex semilattices \cite{BSV19,BSV20ar}.

Propositions \ref{prop:ubf} and \ref{prop:ubmu} show how some operations in $\cset$ can be computed using unique bases.

\begin{proposition}[\cite{BSV20ar}, Lemma 5]\label{prop:ubf}
For $S\in \cset(X)$ and $f:X\to Y$, it holds
$\cset f(S) 
= \conv (\bigcup_{\distr \in \ub(S)} \{\dset f (\distr)\})$.
\end{proposition}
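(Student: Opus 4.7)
The plan is to prove the claimed equality by double inclusion, exploiting two basic facts: that $S = \conv(\ub(S))$ by definition of the unique base, and that the map $\dset(f) : \dset(X) \to \dset(Y)$ preserves convex combinations of distributions.

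First I would establish the convex-linearity of $\dset(f)$: given $\Delta = \sum_{i=1}^n p_i \cdot \Delta_i \in \dset(X)$, a direct computation exchanging the order of summation in the defining formula $\dset(f)(\Delta)(y) = \sum_{x \in f^{-1}(y)} \Delta(x)$ (Definition \ref{def:set:dset}) yields $\dset(f)(\sum_i p_i \cdot \Delta_i) = \sum_i p_i \cdot \dset(f)(\Delta_i)$. This step is the mathematical core of the proof, though it is entirely routine.

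With this in hand, the inclusion $\cset(f)(S) \subseteq \conv(\bigcup_{\Delta \in \ub(S)} \{\dset(f)(\Delta)\})$ is straightforward. By Definition \ref{def:set:cset}, an arbitrary element of $\cset(f)(S)$ has the form $\dset(f)(\Delta)$ for some $\Delta \in S = \conv(\ub(S))$, so $\Delta = \sum_i p_i \cdot \Delta_i$ with $\Delta_i \in \ub(S)$, and by convex-linearity $\dset(f)(\Delta) = \sum_i p_i \cdot \dset(f)(\Delta_i)$, which belongs to the right-hand side. The reverse inclusion runs symmetrically: any convex combination $\sum_i p_i \cdot \dset(f)(\Delta_i)$ with $\Delta_i \in \ub(S)$ equals $\dset(f)(\sum_i p_i \cdot \Delta_i)$ by convex-linearity, and $\sum_i p_i \cdot \Delta_i \in \conv(\ub(S)) = S$, so this element lies in $\cset(f)(S)$.

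There is no serious obstacle here; the entire argument reduces to the observation that the functorial action of $\dset$ on a morphism is affine, which is a direct calculation from the definition of $\dset(f)$. What makes the statement useful in practice is not its depth but its effect: it reduces the computation of $\cset(f)(S)$ on an arbitrary (possibly large) convex set $S$ to evaluating $\dset(f)$ only on the finitely many generators in $\ub(S)$, which is precisely what is needed for the syntactic manipulations in Section \ref{section:5:proof:presentation}.
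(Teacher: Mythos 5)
Your proof is correct: the affinity of $\dset(f)$, verified by the interchange of summations you describe, immediately yields both inclusions, and your use of $S=\conv(\ub(S))$ is exactly the right hinge. The paper itself imports this statement from \cite{BSV20ar} without reproducing a proof, so there is no in-paper argument to compare against, but your route is the standard one and nothing is missing.
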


\begin{proposition}[\cite{BSV20ar}, Lemma 8]\label{prop:ubmu}
For $S\in \cset\cset(X)$, it holds
$$\mu(S)=\conv \big(\bigcup_{\ddistr \in \ub(S)}  \{ \sum_{T\in \support(\ddistr)} \ddistr(T) \cdot \distr_{T} \mid \textnormal{for each $T\in \support(\ddistr)$, $\distr_T\in \ub(T)$}\}\big).$$
\end{proposition}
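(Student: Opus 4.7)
The plan is to prove the two inclusions $\mu(S) \supseteq \conv(B)$ and $\mu(S) \subseteq \conv(B)$ separately, where $B$ denotes the set inside $\conv(\cdot)$ on the right-hand side of the stated equality. For the easier inclusion $\conv(B) \subseteq \mu(S)$, I would observe that $\ub(S) \subseteq S$, so for every $\ddistr \in \ub(S)$ we have $\wms(\ddistr) \subseteq \mu(S)$ directly from the definition of $\mu^{\cset}$ (Definition \ref{def:set:cset}). In particular, every element of the form $\sum_{T \in \support(\ddistr)} \ddistr(T) \cdot \distr_T$ with $\distr_T \in \ub(T) \subseteq T$ belongs to $\wms(\ddistr) \subseteq \mu(S)$; hence $B \subseteq \mu(S)$, and since $\mu(S) \in \cset(X)$ is convex we conclude $\conv(B) \subseteq \mu(S)$.

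For the reverse inclusion, the key technical step is the following lemma: for every $\ddistr \in \dset(\cset(X))$,
\[ \wms(\ddistr) \subseteq \conv\Big(\Big\{ \sum_{T\in \support(\ddistr)} \ddistr(T) \cdot \distr_T \;\Big|\; \distr_T \in \ub(T)\Big\}\Big). \]
To prove it, pick an arbitrary element $\sum_T \ddistr(T) \cdot \distr_T \in \wms(\ddistr)$ with $\distr_T \in T = \conv(\ub(T))$, decompose each $\distr_T = \sum_i \mu_{T,i} \cdot \distr_{T,i}$ with $\distr_{T,i} \in \ub(T)$ and $\sum_i \mu_{T,i} = 1$, and take the product distribution over tuples $\sigma$ choosing one index per $T \in \support(\ddistr)$: the tuple $\sigma$ carries weight $\lambda_\sigma = \prod_T \mu_{T,\sigma(T)}$ and selects the ``pure'' point $z_\sigma = \sum_T \ddistr(T) \cdot \distr_{T,\sigma(T)} \in \wms'(\ddistr)$, where $\wms'(\ddistr)$ is the set on the right-hand side of the lemma. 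A direct calculation exploiting that the marginal $\sum_{\sigma : \sigma(T)=i} \lambda_\sigma$ equals $\mu_{T,i}$ (since the remaining factors collapse to $1$) shows $\sum_\sigma \lambda_\sigma \cdot z_\sigma = \sum_T \ddistr(T) \cdot \distr_T$, exhibiting the original element as a convex combination of points in $\wms'(\ddistr)$.

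Using the lemma, take $x \in \mu(S)$, so $x \in \wms(\ddistr)$ for some $\ddistr \in S = \conv(\ub(S))$. Write $\ddistr = \sum_k \lambda_k \cdot \ddistr_k$ with $\ddistr_k \in \ub(S)$ and $\sum_k \lambda_k = 1$, and write $x = \sum_{T \in \support(\ddistr)} \ddistr(T) \cdot \distr_T$ with $\distr_T \in T$. Since $\support(\ddistr_k) \subseteq \support(\ddistr)$, setting $y_k = \sum_{T \in \support(\ddistr_k)} \ddistr_k(T) \cdot \distr_T$ yields an element of $\wms(\ddistr_k)$, and substituting $\ddistr(T) = \sum_k \lambda_k \ddistr_k(T)$ gives $x = \sum_k \lambda_k \cdot y_k$. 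Applying the lemma to each $\ddistr_k \in \ub(S)$ puts every $y_k$ in $\conv(B)$, hence $x \in \conv(B)$ by convexity. The main obstacle is the lemma's product-measure identity, which is elementary but requires careful bookkeeping with nested sums over the various $\ub(T)$; the rest is routine unwinding of the definitions of $\wms$, $\mu^{\cset}$, and $\ub(\cdot)$.
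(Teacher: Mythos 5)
The paper does not prove this proposition at all: it is imported verbatim from \cite{BSV20ar} (Lemma 8) and used as a black box, so there is no in-paper argument to compare yours against. Judged on its own, your proof is correct. The easy inclusion $\conv(B)\subseteq\mu(S)$ is fine, granting that $\mu(S)$ is convex --- note this is a genuine external input (it is exactly the claim that $\mu^{\cset}$ is well-typed into $\cset(X)$, which the paper also assumes as part of $\cset$ being a monad; it can be checked directly by observing that $p\cdot x+(1-p)\cdot y\in\wms(p\cdot\ddistr_1+(1-p)\cdot\ddistr_2)$ for $x\in\wms(\ddistr_1)$, $y\in\wms(\ddistr_2)$, using convexity of $S$ and of each $T$). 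Your key lemma is the right technical core: the product-weight computation $\sum_{\sigma:\sigma(T)=i}\lambda_\sigma=\mu_{T,i}$ does collapse correctly because the weights over each other coordinate sum to $1$, and the two-stage reduction (first expanding $\ddistr\in S$ over $\ub(S)$ to get $x=\sum_k\lambda_k\cdot y_k$ with $y_k\in\wms(\ddistr_k)$, then expanding each $\distr_T\in T$ over $\ub(T)$ inside each $\wms(\ddistr_k)$) is a clean way to reach $\conv(B)$. I see no gap.
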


The following proposition generalizes the convexity equation $x \oplus y = x \oplus y \oplus (x \pplus p y)$ in the theory $\etcs$ of convex semilattices, by showing how we can derive in $\etcs$ that convex combinations of the base of a set can always be added to the set.

\begin{proposition}[\cite{BSV20ar}, Lemma 16]\label{prop:removingconv}
Let $\distr$ be a convex combination of $\bigcup_{i} \{\distr_i\}$ for $i$ ranging over a finite set. Then we can derive in the theory $\etcs$ of convex semilattices the equation:
$$\bigcplus_{i} (\bigpplus_{x \in \support(\distr_{i}) } \distr_{i}(x) \, x) = \big(\bigcplus_{i} (\bigpplus_{x \in \support(\distr_{i}) } \distr_{i}(x) \, x )\big) \cplus (\bigpplus_{x \in \support(\distr) } \distr(x) \, x)$$
\end{proposition}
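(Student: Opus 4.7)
The plan is to proceed by induction on the cardinality $n$ of the (finite) index set $I$ over which $i$ ranges. We may assume all coefficients $p_i$ in the convex combination $\Delta = \sum_{i \in I} p_i \cdot \Delta_i$ are strictly positive, because any $i$ with $p_i = 0$ may be dropped from the sum on the right-hand side: the term $t_i = \bigpplus_{x \in \support(\Delta_i)} \Delta_i(x)\,x$ is already present in $\bigcplus_i t_i$ and hence can be freely re-introduced using idempotency (I), so the original and the trimmed right-hand sides are provably equal in $\etcs$. The single fact from $\etcs$ that we really exploit is the derived \emph{convexity equation} $x \oplus y = x \oplus y \oplus (x +_p y)$, which holds in every convex semilattice as a consequence of axiom (D) (together with the semilattice and convex-algebra axioms), as stated in Example \ref{example:presentation:set:3} and proved in the cited Lemma 14 of \cite{BSV20ar}.

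The base case $n=1$ is immediate: if $\Delta = \Delta_1$ then $t = t_1$ and the equation $t_1 = t_1 \oplus t_1$ is just idempotency (I). For the inductive step, write $\Delta = p_1 \cdot \Delta_1 + (1-p_1) \cdot \Delta'$, where $\Delta' = \sum_{i \geq 2} \frac{p_i}{1-p_1} \cdot \Delta_i$ (the case $p_1 = 1$ reducing to the base case as above). Setting $t' = \bigpplus_{x \in \support(\Delta')} \Delta'(x)\,x$, the presentation of $\dset$ by $\etca$ (Example \ref{example:presentation:set:3}) yields the ground identity $t = t_1 +_{p_1} t'$ derivable in $\qetcs$. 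The induction hypothesis applied to $\Delta'$ (a convex combination of $\Delta_2,\dots,\Delta_n$) gives $\bigcplus_{i\geq 2} t_i = (\bigcplus_{i\geq 2} t_i) \oplus t'$, and hence $\bigcplus_i t_i = t_1 \oplus \bigcplus_{i\geq 2} t_i \oplus t'$. Now applying the convexity equation to the pair $(t_1, t')$ replaces $t_1 \oplus t'$ by $t_1 \oplus t' \oplus (t_1 +_{p_1} t') = t_1 \oplus t' \oplus t$; re-associating via (A) and (C) lets us factor this back as $(\bigcplus_i t_i) \oplus t$, finishing the induction.

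The main obstacle I expect is bookkeeping rather than mathematical: one must verify that rewriting $\bigpplus_{x \in \support(\Delta)} \Delta(x)\,x$ as $t_1 +_{p_1} t'$ is indeed derivable from the convex-algebra axioms (A$_p$, C$_p$, I$_p$), since the two expressions correspond to the same element of $\dset(X)$ but their syntactic ordering of the binary $+_p$ operations differs. This is a routine but tedious computation that one could either carry out by induction on $|\support(\Delta)|$, or cite directly from the presentation of $\dset$ by $\etca$; in the latter case it suffices to note that two $\sigca$-terms denoting the same distribution are provably equal in $\etca$, which is exactly the soundness-and-completeness content of the presentation. No metric reasoning enters anywhere: the whole proof is a pure derivation in the equational theory $\etcs$.
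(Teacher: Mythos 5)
Your proof is correct. Note that the paper does not prove this proposition at all: it is imported verbatim from \cite{BSV20ar} (Lemma~16), and the only guidance the paper gives is the sentence preceding the statement, which describes it as the generalization of the derived convexity equation $x \oplus y = x \oplus y \oplus (x +_p y)$ --- precisely the equation that drives your induction, so your route is the intended one. The single point where you lean on something external, namely that $\bigpplus_{x\in\support(\distr)}\distr(x)\,x$ and $t_1 +_{p_1} t'$ are provably equal in $\etca$ because they denote the same element of the free convex algebra $\dset(X)$, is correctly identified and legitimately discharged by the presentation $\acat(\etca)\cong\EM(\dset)$; the only cosmetic gloss is that your final ``factor this back'' step silently reuses the induction hypothesis (or the already-established chain of equalities) once more to absorb the extra $t'$, which is harmless but worth stating.
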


\subsection{Proofs for section \ref{sec:F}}

\begin{lemma}\label{lem:axiomscs}
Let $((X,d),\alpha)$ be an object in $\EM(\lcset)$. The quantitative algebra $\calf((X,d),\alpha)= (X, \sigcs^{\alpha}, d)$ satisfies the quantitative inferences ($A$, $C$, $I$, $A_p$, $C_p$, $I_p$, $D$) of the theory $\qetcs$.
\end{lemma}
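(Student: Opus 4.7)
The plan is to reduce the verification to the already-established Sets presentation of $\cset$. Each of the quantitative inferences $(A, C, I, A_p, C_p, I_p, D)$ has the form $\emptyset \vdash s =_0 t$, so by the semantics in Definition \ref{quantitative:algebra:def} it suffices to show that the underlying (non-quantitative) equation $s = t$ holds universally in the algebra $(X, \sigcs^\alpha)$; the metric $d$ plays no role whatsoever in these axioms. Thus the lemma is a purely set-theoretic statement about the algebra obtained from $\alpha$ by the formulas $x_1 \oplus^\alpha x_2 = \alpha(\conv\{\delta_{x_1}, \delta_{x_2}\})$ and $x_1 +_p^\alpha x_2 = \alpha(\{p x_1 + (1-p) x_2\})$.

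First I would observe that an Eilenberg-Moore algebra $((X,d), \alpha)$ for the $\Met$ monad $\lcset$ induces, by forgetting the metric, an Eilenberg-Moore algebra $(X, \alpha)$ for the $\Sets$ monad $\cset$. This is immediate because, by construction, $\lcset$ and $\cset$ agree on underlying sets and functions: the unit $\eta^{\lcset}$, the multiplication $\mu^{\lcset}$, and the action of the functor on morphisms are literally those of $\cset$. Hence the Eilenberg-Moore identities $\alpha \circ \eta^{\cset}_X = id_X$ and $\alpha \circ \cset\alpha = \alpha \circ \mu^{\cset}_X$ hold as equalities of set-theoretic functions.

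Next I would invoke the Sets presentation result recalled in Section 2.1, namely $\acat(\etcs) \cong \EM(\cset)$. The canonical translation of an Eilenberg-Moore algebra $(X,\alpha)$ into a $\sigcs$-algebra defines, for every $n$-ary symbol $f$, the operation $f^\alpha(x_1,\dots,x_n) = \alpha\bigl(f^{\cset(X)}(\eta^{\cset}_X(x_1),\dots,\eta^{\cset}_X(x_n))\bigr)$, where $f^{\cset(X)}$ is the corresponding operation of the free convex semilattice on $X$ described explicitly in Example \ref{example:presentation:set:3}. Unfolding for $\cplus$ gives $\oplus^\alpha(x_1,x_2) = \alpha(\conv(\{\delta_{x_1}\} \cup \{\delta_{x_2}\})) = \alpha(\conv\{\delta_{x_1},\delta_{x_2}\})$, and for $\pplus p$ gives $+_p^\alpha(x_1,x_2) = \alpha(\wms(p\{\delta_{x_1}\} + (1-p)\{\delta_{x_2}\})) = \alpha(\{p x_1 + (1-p) x_2\})$. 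These coincide exactly with the definitions of $\sigcs^\alpha$ used by the functor $\calf$. Hence, by the presentation result, $(X, \sigcs^\alpha)$ is a convex semilattice, and in particular satisfies all the equations $(A, C, I, A_p, C_p, I_p, D)$, which is what was to be shown.

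There is essentially no obstacle: the only point requiring care is confirming that the two prescriptions of the operations agree, which is a routine unfolding using Example \ref{example:presentation:set:3}. Alternatively, one could dispense with the Sets presentation and verify each axiom directly, using $\alpha \circ \eta^\cset = id$ for the idempotency laws and $\alpha \circ \cset\alpha = \alpha \circ \mu^\cset$ together with the explicit formula for $\mu^\cset$ for the associativity, convex-algebra, and distributivity laws; but this is precisely the content that the Sets presentation theorem packages up once and for all, so the cleanest proof is the reduction above.
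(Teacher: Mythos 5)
Your proof is correct, but it takes a genuinely different route from the paper's. The paper verifies each axiom by hand: it notes that for inferences of the form $\emptyset\vdash s=_0 t$ it suffices to check $s$ and $t$ denote the same element of $X$, and then computes both sides of each equation directly from the Eilenberg--Moore laws $\alpha\circ\eta^{\cset}=id$ and $\alpha\circ\cset\alpha=\alpha\circ\mu^{\cset}$, using the unique-base formulas for $\cset f$ and $\mu^{\cset}$ (Propositions \ref{prop:ubf} and \ref{prop:ubmu}) to evaluate, e.g., $(x_1\oplus^{\alpha}x_2)\oplus^{\alpha}x_3=\alpha(\conv\{\dirac{x_1},\dirac{x_2},\dirac{x_3}\})$ and similarly for associativity of $+_p$ and for (D). You instead observe that forgetting the metric turns an $\EM(\lcset)$-algebra into an $\EM(\cset)$-algebra (valid, since $\lcset$ agrees with $\cset$ on underlying sets, unit and multiplication), check that the operations $\sigcs^{\alpha}$ prescribed by $\calf$ coincide with those produced by the canonical functor $\EM(\cset)\to\acat(\etcs)$ applied to $(X,\alpha)$ (valid, by unfolding $\conv(\{\dirac{x_1}\}\cup\{\dirac{x_2}\})$ and $\wms(p\{\dirac{x_1}\}+(1-p)\{\dirac{x_2}\})$ against Example \ref{example:presentation:set:3}), and then invoke the presentation $\acat(\etcs)\cong\EM(\cset)$ of \cite{BSV19}. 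This is shorter and more conceptual; the one thing it leans on is not merely the \emph{existence} of an isomorphism of categories but the \emph{concrete form} of the functor $\EM(\cset)\to\acat(\etcs)$ as the canonical translation $f^{\alpha}(\vec{x})=\alpha(f^{\cset(X)}(\eta^{\cset}\vec{x}))$ — the general fact that the interpretation of an arbitrary term in $(X,\sigcs^{\alpha})$ equals $\alpha$ applied to its interpretation in the free algebra $\cset(X)$ is exactly where the law $\alpha\circ\cset\alpha=\alpha\circ\mu^{\cset}$ is consumed, and it is the content that the paper's appendix re-proves axiom by axiom. Since that canonical form is how the presentation in \cite{BSV19} is actually established, your reduction is sound; the paper's computation buys self-containedness at the cost of length.
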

\begin{proof}[Proof of Lemma \ref{lem:axiomscs}.]
Since $d$ is a metric and thus assigns distance zero to the same elements of $X$, for all inferences of the form $\vdash t=_{0} s$ it is enough to prove that the terms $t$ and $s$ are interpreted in the algebra $\calf((X,d),\alpha)$ as the same elements of $X$.

Idempotency and commutativity of $\cplus$ are immediate as
\begin{align*}
&x \cplus^{\alpha} x = \alpha(\{\dirac x\}) = x  \tag{by $\alpha \circ\eta = id$}\\
&x_{1} \cplus^{\alpha} x_{2} = \alpha(\conv\{\dirac {x_{1}}, \dirac {x_{2}}\}) = x_{2} \cplus^{\alpha} x_{1}  
\end{align*}
For associativity of $\cplus$, we have
\begin{align*}
&(x_{1} \cplus^{\alpha} x_{2}) \cplus^{\alpha} x_{3}\\
&= \alpha ( \conv\{ \dirac{\alpha(\conv\{\dirac {x_{1}}, \dirac {x_{2}}\})},\dirac{x_{3}}\})  \\
&= \alpha ( \conv\{ \dirac{\alpha(\conv\{\dirac {x_{1}}, \dirac {x_{2}}\})},\dirac{\alpha(\{\dirac {x_{3}}\})}\}) &\tag{by $\alpha \circ\eta = id$}\\
&= \alpha \circ \cset\alpha ( \conv\{ \dirac{\conv\{\dirac {x_{1}}, \dirac {x_{2}}\}},\dirac{\{\dirac {x_{3}}\}}\}) 
\tag{by Proposition \ref{prop:ubf}}
\\
&= \alpha \circ \mu^{\cset} ( \conv\{ \dirac{\conv\{\dirac {x_{1}}, \dirac {x_{2}}\}},\dirac{\{\dirac {x_{3}}\}}\}) &\tag{by $\alpha \circ \cset \alpha = \alpha \circ \mu$}\\
&= \alpha ( \conv\{ \dirac {x_{1}}, \dirac {x_{2}}, \dirac {x_{3}}\}) &\tag{by definition of $\mu^{\cset}$ and Proposition  \ref{prop:ubmu}}
\end{align*}
and analogously we derive 
\begin{align*}
x_{1} \cplus^{\alpha} (x_{2} \cplus^{\alpha} x_{3})
&= \alpha \circ \mu^{\cset} ( \conv\{ \dirac{\{\dirac {x_{1}}\}},\dirac{\conv\{\dirac {x_{2}},\dirac {x_{3}}\}}\})\\
&= \alpha ( \conv\{ \dirac {x_{1}}, \dirac {x_{2}}, \dirac {x_{3}}\})
\end{align*}
which then concludes the proof.

For the axioms of convex algebras, we again have that idempotency and commutativity of $\pplus p$ are immediate as
\begin{align*}
&x \pplus p^{\alpha} x = \alpha(\{\dirac x\}) = x  &\tag{by $\alpha \circ\eta = id$}\\
&x_{1} \pplus p^{\alpha} x_{2} = \alpha(\{p {x_{1}} + (1-p)  {x_{2}}\}) = x_{2} \pplus {(1-p)}^{\alpha} x_{1}  
\end{align*}
Associativity of $\pplus p$ follows as that of $\cplus$. We have:
\begin{align*}
&(x_{1} \pplus q^{\alpha} x_{2}) \pplus p^{\alpha} x_{3}\\
&= \alpha \circ \cset\alpha ( \{p \{q x_{1} + (1-q){x_{2}}\}+(1-p) \{\dirac{x_{3}}\} \})\\
&= \alpha \circ \mu^{\cset} ( \{p \{q x_{1} + (1-q){x_{2}}\}+(1-p) \{\dirac{x_{3}}\}\} ) &\tag{by $\alpha \circ \cset \alpha = \alpha \circ \mu$}\\
&= \alpha ( \{ (pq) x_{1} + (p (1-q)){x_{2}} + (1-p) x_{3}\}) &\tag{by definition of $\mu^{\cset}$ and Proposition \ref{prop:ubmu}}
\end{align*}
and analogously we derive 
\begin{align*}
x_{1} \pplus {pq}^{\alpha} (x_{2} \pplus {\frac{p(1-q)}{1-pq}}^{\alpha} x_{3})
&= \alpha \circ \mu^{\cset} ( \{pq \{\dirac{x_{1}}\}+(1-pq) \{\frac{p(1-q)}{1-pq} x_{2} + (1-\frac{p(1-q)}{1-pq}){x_{3}}\} )\\
&= \alpha ( \{ (pq) x_{1} + (p (1-q)){x_{2}} + (1-p) x_{3}\})
\end{align*}

The distributivity axiom (D) follows as:
\begin{align*}
&x_{1} \pplus p^{\alpha} (x_{2} \cplus^{\alpha} x_{3}) \\
&= \alpha ( \{ p x_{1} + (1-p ) \alpha(\conv\{\dirac {x_{2}}, \dirac {x_{3}}\})\})  \\
&= \alpha ( \{ p \alpha(\{\dirac {x_{1}}\}) + (1-p ) \alpha(\conv\{\dirac {x_{2}}, \dirac {x_{3}}\})\})  &\tag{by $\alpha \circ\eta = id$}\\
&= \alpha \circ \cset\alpha ( \{ p \{\dirac {x_{1}}\} + (1-p ) \conv\{\dirac {x_{2}}, \dirac {x_{3}}\}\}) \\
&= \alpha \circ \mu^{\cset} ( \{ p \{\dirac {x_{1}}\} + (1-p ) \conv\{\dirac {x_{2}}, \dirac {x_{3}}\}\}) &\tag{by $\alpha \circ \cset \alpha = \alpha \circ \mu$}\\
&= \alpha ( \conv \{ p x_{1} + (1-p ) x_{2}, p x_{1} + (1-p ) x_{3}\}) &\tag{by definition of $\mu^{\cset}$ and Proposition \ref{prop:ubmu}}
\end{align*}
and analogously we derive 
\begin{align*}
(x_{1} \pplus p^{\alpha} x_{2}) \cplus (x_{1} \pplus p^{\alpha} x_{3})
&= \alpha \circ \mu^{\cset} ( \{ \dirac{\{p{x_{1}}+ (1-p ) x_{2}, p {x_{1}}+ (1-p ) {x_{3}}\}}\})\\
&= \alpha ( \conv \{ p x_{1} + (1-p ) x_{2}, p x_{1} + (1-p ) x_{3}\})
\end{align*}
\end{proof}

\newpage
The proof of Lemma \ref{non--expansive:lemma5.1.1} relies on the fact that 
the functions $\conv$, $\lambda x_{1},x_{2} . \{x_{1},x_{2}\}$, and $\lambda x_{1},x_{2}. (p x_{1}+(1-p) x_{2})$ are non-expansive. This is respectively proven in Lemmas \ref{lem:convne}, \ref{lem:pairsetne}, and \ref{lem:pairdistrne}

\begin{lemma}\label{lem:convne}
Let $(X,d)$ be a metric space. The function $\conv: \lfpset {\ldset(X,d)} \to \lcset(X,d)$ is non-expansive, i.e., 
for all $S,T\in \fpset {\dset(X)}$ it holds 
\[\hk (d) (\conv(S),\conv(T)) \leq \hk(d) (S,T) .\]
\end{lemma}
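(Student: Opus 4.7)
The plan is to unfold the definitions and use the convexity of the Kantorovich metric (Proposition \ref{prop:kantconv}) as the key ingredient, in the spirit of the proof of Lemma \ref{lem:hkconv}. Writing $\epsilon = \haus\kant(d)(S,T)$, the goal reduces to showing that each of the two $\sup$--$\inf$ quantities appearing in $\haus\kant(d)(\conv(S),\conv(T))$ is bounded by $\epsilon$; by symmetry it suffices to handle one.

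Concretely, I would pick an arbitrary $\distr_1 \in \conv(S)$ and write it (by definition of $\conv$) as a convex combination $\distr_1 = \sum_{i=1}^n p_i \cdot \distr_{1,i}$ with each $\distr_{1,i} \in S$. Since $T$ is a \emph{finite} subset of $\dset(X)$, the infimum $\inf_{\distr \in T}\kant(d)(\distr_{1,i},\distr)$ is attained, so I can select for each $i$ an element $\distr_{2,i} \in T$ with $\kant(d)(\distr_{1,i},\distr_{2,i}) \leq \epsilon$ (using the half of the Hausdorff distance $\sup_{\distr \in S}\inf_{\distr' \in T}\kant(d)(\distr,\distr') \leq \epsilon$). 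Then I would set $\distr_2 := \sum_{i=1}^n p_i \cdot \distr_{2,i} \in \conv(T)$ as the candidate witness for the inner $\inf$.

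The final step is to apply convexity of $\kant(d)$ (iterated from the binary convexity of Proposition \ref{prop:kantconv} to $n$-ary convex combinations by induction on $n$) to obtain
\[
\kant(d)(\distr_1,\distr_2) \;\leq\; \sum_{i=1}^n p_i \cdot \kant(d)(\distr_{1,i},\distr_{2,i}) \;\leq\; \sum_{i=1}^n p_i \cdot \epsilon \;=\; \epsilon.
\]
Hence $\inf_{\distr \in \conv(T)}\kant(d)(\distr_1,\distr) \leq \epsilon$, and taking the supremum over $\distr_1 \in \conv(S)$ gives the bound. The symmetric argument (swapping the roles of $S$ and $T$) handles the other half of the Hausdorff $\max$, and combining the two yields $\haus\kant(d)(\conv(S),\conv(T)) \leq \epsilon$.

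I do not expect any serious obstacle here: the finiteness of $S$ and $T$ means there are no $\inf$--attainment issues, and the whole content of the argument is the convexity of $\kant(d)$. The one bit of routine bookkeeping is the induction extending Proposition \ref{prop:kantconv} from binary to finitary convex combinations, which follows by a standard rewriting of $\sum_i p_i \cdot \distr_i$ as iterated binary $+_p$ operations.
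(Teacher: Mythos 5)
Your proof is correct, and it takes a genuinely different --- and in fact more robust --- route than the paper's. Both arguments hinge on the convexity of $\kant(d)$ (Proposition \ref{prop:kantconv}, iterated to finitary convex combinations), but they differ in how the inner infimum over $\conv(T)$ is handled. You construct an explicit witness: after matching each generator $\Delta_{1,i}\in S$ of $\Delta_1=\sum_i p_i\cdot\Delta_{1,i}$ with a nearby $\Delta_{2,i}\in T$, you take the convex combination $\Delta_2=\sum_i p_i\cdot\Delta_{2,i}\in\conv(T)$ and bound $\kant(d)(\Delta_1,\Delta_2)$ by $\sum_i p_i\cdot\kant(d)(\Delta_{1,i},\Delta_{2,i})$. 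The paper instead first weakens $\inf_{\Theta\in\conv(T)}$ to $\inf_{\Theta\in T}$ and then interchanges the infimum with the convex sum, writing $\inf_{\Theta\in T}\big(\sum_i p_i\cdot\kant(d)(\Delta_i,\Theta)\big)=\sum_i p_i\cdot\big(\inf_{\Theta\in T}\kant(d)(\Delta_i,\Theta)\big)$ by appeal to Proposition \ref{appendix:basic:infsup:properties}(INF). That proposition is stated for infima over \emph{independent} variables; with a single shared $\Theta$ only the inequality $\geq$ holds in general, so that step of the paper's chain is delicate as written. Your witness $\sum_i p_i\cdot\Delta_{2,i}$ --- in which each $i$ may select a \emph{different} element of $T$ --- is exactly what legitimises the decoupling of the infima, and it yields the same directed bound $\sup_{\Delta\in\conv(S)}\inf_{\Theta\in\conv(T)}\kant(d)(\Delta,\Theta)\le\sup_{\Delta\in S}\inf_{\Theta\in T}\kant(d)(\Delta,\Theta)$ that the paper is after, whence the lemma follows by symmetry. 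The only bookkeeping you owe, as you note, is the routine induction extending binary convexity to $n$-ary combinations; finiteness of $T$ disposes of any attainment issues.
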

\begin{proof}
Let $S,T\in \fpset {\dset(X)}$. 
By the definition of Hausdorff metric, we want to prove that 
\begin{align*}
& \max\big\{  \sup_{\Delta \in \conv(S)}\inf_{\Theta \in \conv(T)}\kant(d)(\Delta, \Theta)    ,  \sup_{\Theta \in \conv(T)}\inf_{{\Delta \in \conv(S)}}\kant(d)(\Delta, \Theta)   \big\}\\
&\leq \max\big\{  \sup_{\Delta \in S}\inf_{\Theta \in T}\kant(d)(\Delta, \Theta)    ,  \sup_{\Theta \in T}\inf_{{\Delta \in S}}\kant(d)(\Delta, \Theta)   \big\}
\end{align*}
We show that
\[\sup_{\Delta \in \conv(S)}\inf_{\Theta \in \conv(T)}\kant(d)(\Delta, \Theta) \leq \sup_{\Delta \in S}\inf_{\Theta \in T}\kant(d)(\Delta, \Theta) \]
which then implies (by symmetry) the result.

Let $\Delta \in \conv(S)$. Then $\Delta$ is a convex combination of elements of $S$, that is, $\Delta = \sum_{i} p_{i}\cdot \Delta_{i}$ with $\Delta_{i}\in S$ and we have
\begin{align*}
\inf_{\Theta \in \conv(T)}\kant(d)(\Delta, \Theta) 
&\leq \inf_{\Theta \in T}\kant(d)(\Delta, \Theta) \tag{by $T\subseteq \conv(T)$}\\
&= \inf_{\Theta \in T}\kant(d)(\sum_{i} p_{i} \cdot\Delta_{i}, \Theta)\\
&\leq \inf_{\Theta \in T}\big(\sum_{i} p_{i} \cdot \kant(d)(\Delta_{i}, \Theta)\big)\tag{by Proposition \ref{prop:kantconv}}\\
&=\sum_{i} p_{i} \cdot \big ( \inf_{\Theta \in T} \kant(d)(\Delta_{i}, \Theta)\big)\tag{by Lemma \ref{appendix:basic:infsup:properties}(INF)}\\
&\leq \max_{i} \big\{ \inf_{\Theta \in T} \kant(d)(\Delta_{i}, \Theta)\big\}\\
&\leq \sup_{\Delta \in S} \inf_{\Theta \in T}\kant(d)(\Delta, \Theta)
\end{align*}
%
\end{proof}

\begin{lemma}\label{lem:pairsetne}
Let $(X,d)$ be a metric space and $d_{\sup}$ be the $\sup$--metric over $X\times X$. The function $\lambda  x_{1},x_{2}. \{x_{1},x_{2}\}: (X\times X, d_{sup}) \to \lfpset(X,d) $ is non-expansive, i.e.,
\[\haus (d)(\{x_{1},x_{2}\}, \{y_{1},y_{2}\})\leq \max\{ d(x_{1},y_{1}), d(x_{2},y_{2})\}\]
\end{lemma}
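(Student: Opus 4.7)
The plan is to unfold the definition of the Hausdorff lifting on two-element sets, reducing everything to a finite combination of min's and max's of the four pairwise distances $d(x_i, y_j)$, and then observe that each inner $\inf$ can be bounded by the ``diagonal'' distance $d(x_i, y_i)$.

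Concretely, I would begin by expanding
\[
\haus(d)(\{x_1,x_2\},\{y_1,y_2\}) = \max\bigl\{A, B\bigr\},
\]
where
\[
A = \max\bigl\{\min\{d(x_1,y_1), d(x_1,y_2)\},\ \min\{d(x_2,y_1), d(x_2,y_2)\}\bigr\}
\]
is the $\sup$-$\inf$ term ranging over $x \in \{x_1,x_2\}$, and
\[
B = \max\bigl\{\min\{d(x_1,y_1), d(x_2,y_1)\},\ \min\{d(x_1,y_2), d(x_2,y_2)\}\bigr\}
\]
is its symmetric counterpart. The key observation is that in each of the four $\min$ expressions appearing in $A$ and $B$, one of the two arguments is of the ``diagonal'' form $d(x_i, y_i)$. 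Since $\min\{a,b\} \leq a$, each $\min$ is therefore bounded above by some $d(x_i,y_i)$, and hence by $\max\{d(x_1,y_1), d(x_2,y_2)\} = d_{\sup}((x_1,x_2),(y_1,y_2))$.

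Taking the maximum of the four bounds yields $\max\{A,B\} \leq \max\{d(x_1,y_1), d(x_2,y_2)\}$, which is exactly the desired inequality. There is no real obstacle here: the statement is a finite combinatorial check, and the only care required is to list the four inner $\inf$ terms correctly and identify in each of them a diagonal distance as a witness. No deeper property of the Hausdorff lifting (such as its monotonicity or convexity) is needed beyond its defining formula.
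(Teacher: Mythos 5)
Your proof is correct and is essentially the same as the paper's: both arguments bound each inner $\inf$ by the corresponding diagonal distance $d(x_i,y_i)$ (since $y_i$ is one of the candidates) and then take suprema/maxima. The paper merely writes this with index notation rather than expanding all four $\min$ terms explicitly.
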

\begin{proof}
\vvcut
{Let $I=\{1,2\}$. Let $f: I \to I$ be a permutation of elements of the sets such that for each $i$, $d(x_{i},y_{f(i)})=\inf_{i'\in I}d(x_i,y_{i'})$.
Symmetrically, let $g: I\to I$ be a permutation such that $d(x_{g(i)},y_{i})=\inf_{i'\in I}d(x_{i'},y_{i})$.
Hence, for all $i,i',i'' \in I$ we have $d(x_{i},y_{f(i)}) \leq d(x_{i},y_{i'})$ and $d(x_{g(i)},y_{i}) \leq d(x_{i''},y_{i})$. This implies that for all $i\in I$, $d(x_{i},y_{f(i)}) \leq d(x_{i},y_{i})$ and $d(x_{g(i)},y_{i}) \leq d(x_{i},y_{i})$, which in turn implies that for all $i\in I$, $\inf_{i'} d(x_{i},y_{i'}) \leq d(x_{i},y_{i})$ and $\inf_{i'} d(x_{i'},y_{i}) \leq d(x_{i},y_{i})$. }
Let $I=\{1,2\}$.
For all $i\in I$, $\inf_{i'} d(x_{i},y_{i'}) \leq d(x_{i},y_{i})$ and $\inf_{i'} d(x_{i'},y_{i}) \leq d(x_{i},y_{i})$.
Hence, we derive:
\[\sup_{i} \inf_{i'} d(x_{i},y_{i'}) \leq \sup_{i} d(x_{i},y_{i})\quad \text{ and } \quad \sup_{i}\inf_{i'} d(x_{i'},y_{i}) \leq \sup_{i} d(x_{i},y_{i})\]
Then
\begin{align*}
\haus (d)(\{x_{1},x_{2}\}, \{y_{1},y_{2}\}) 
&= \max\big\{  \sup_{i}\inf_{i'}d(x_i,y_{i'})    ,  \sup_{i}\inf_{i'}d(x_{i'},y_{i})   \big\}\\
&\leq \sup_{i} d(x_{i},y_{i})\\
&= \max\{ d(x_{1},y_{1}), d(x_{2},y_{2})\}
\end{align*}
\end{proof}

\newpage
\begin{lemma}\label{lem:pairdistrne}
Let $(X,d)$ be a metric space and $d_{\sup}$ be the $\sup$--metric over $X\times X$. For every $p \in (0,1)$, the function $\lambda x_{1},x_{2}. (p x_{1}+(1-p) x_{2}): (X\times X, d_{sup}) \to \ldset(X,d) $ is non-expansive, i.e.,
\[\kant (d)(p x_{1}+(1-p) x_{2}, p y_{1}+(1-p) y_{2})\leq \max\{ d(x_{1},y_{1}), d(x_{2},y_{2})\}\]
\end{lemma}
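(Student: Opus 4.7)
The plan is to bound the Kantorovich distance directly by exhibiting a witness coupling, since by Definition \ref{def:kantorovich:lifting} any coupling yields an upper bound on $\kant(d)$.

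First I would define the natural ``diagonal'' coupling that pairs $x_1$ with $y_1$ (with mass $p$) and $x_2$ with $y_2$ (with mass $1-p$). Concretely, consider $\omega \in \dset(X\times X)$ given by $\omega = p \cdot \dirac{(x_1,y_1)} + (1-p)\cdot \dirac{(x_2,y_2)}$, viewed as the function $\omega(a,b) = p\cdot[x_1=a \wedge y_1=b] + (1-p)\cdot[x_2=a \wedge y_2=b]$. A short check of the marginals shows
\[\dset(\pi_1)(\omega) \;=\; p\cdot\dirac{x_1} + (1-p)\cdot\dirac{x_2} \;=\; p\,x_1 + (1-p)\,x_2,\]
and symmetrically for $\pi_2$, so $\omega \in Coup(p\,x_1+(1-p)\,x_2,\ p\,y_1+(1-p)\,y_2)$. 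The verification of the marginals is immediate from the definition and covers also the degenerate situations where some of the $x_i,y_j$ coincide.

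Having this coupling in hand, I would then apply the coupling definition of $\kant(d)$ to conclude
\[\kant(d)\bigl(p\,x_1+(1-p)\,x_2,\ p\,y_1+(1-p)\,y_2\bigr) \;\leq\; \sum_{(a,b)}\omega(a,b)\cdot d(a,b) \;=\; p\cdot d(x_1,y_1) + (1-p)\cdot d(x_2,y_2),\]
and the final bound follows because a convex combination is dominated by the maximum of its arguments: $p\cdot d(x_1,y_1) + (1-p)\cdot d(x_2,y_2) \leq \max\{d(x_1,y_1), d(x_2,y_2)\}$.

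There is essentially no obstacle here; the proof is genuinely a one-liner once the right coupling is written down. The only point requiring (minor) care is the marginal computation when some of the arguments coincide, which is why I would phrase $\omega$ as a function $X\times X \to [0,1]$ rather than as a formal sum of Dirac masses. This lemma is in fact a special case of the convexity of $\kant(d)$ established in Proposition \ref{prop:kantconv} applied to $\Delta = \dirac{x_1}$, $\Delta' = \dirac{x_2}$, $\Theta = \dirac{y_1}$, $\Theta' = \dirac{y_2}$, combined with the observation that $\kant(d)(\dirac{a},\dirac{b}) = d(a,b)$; one could alternatively present the proof by citing these two facts, but exhibiting the coupling is arguably more direct and self-contained.
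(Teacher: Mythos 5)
Your proof is correct and matches the paper's argument in substance: the paper obtains the bound $p\cdot d(x_1,y_1)+(1-p)\cdot d(x_2,y_2)$ by citing the convexity of $\kant(d)$ (Proposition \ref{prop:kantconv}) together with $\kant(d)(\dirac{a},\dirac{b})=d(a,b)$, which is exactly the alternative you note at the end, while your explicit diagonal coupling simply inlines the proof of that proposition in this special case. Both then conclude by dominating the convex combination with the maximum, so there is nothing to add.
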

\begin{proof}
By convexity of the Kantorovich metric, we have
\begin{align*}
&\kant (d)(p x_{1}+(1-p) x_{2}, p y_{1}+(1-p) y_{2}) \\
&=  \kant (d)(p \cdot \dirac{x_{1}}+(1-p) \cdot \dirac{x_{2}}, p \cdot\dirac{y_{1}}+(1-p) \cdot \dirac{y_{2}})\\
&\leq p \cdot  \kant (d)(\dirac{x_{1}}, \dirac{y_{1}}) +(1-p)\cdot \kant (d)(\dirac{x_{2}}, \dirac{y_{2}})\\
&= p \cdot  d(x_{1},y_{1}) +(1-p)\cdot d(x_{2},y_{2})
\end{align*}
Then we conclude by $p \cdot  d(x_{1},y_{1}) +(1-p)\cdot d(x_{2},y_{2}) \leq \max\{ d(x_{1},y_{1}), d(x_{2},y_{2})\}.$

\end{proof}

\begin{lemma}
The functor $\calf$ is well-defined on morphisms.
\end{lemma}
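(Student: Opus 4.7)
The plan is to unfold what it means for $\calf(f) = f$ to be a morphism in $\qacat(\qetcs)$ and check those conditions one by one, using only the fact that $f$ is already a morphism in $\EM(\lcset)$. By Definition \ref{quantitative:algebra:def}, a morphism of quantitative convex semilattices must be (i) a non-expansive map between the underlying metric spaces and (ii) a homomorphism with respect to every operation in $\sigcs$, i.e., $\cplus$ and $\pplus p$ for all $p \in (0,1)$. Condition (i) is automatic: a morphism $f : ((X,d_X),\alpha_X) \to ((Y,d_Y),\alpha_Y)$ in $\EM(\lcset)$ is by Definition \ref{def:algebra-of-a-monad} a morphism in $\Met$, hence non-expansive. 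So the whole work is in condition (ii).

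For (ii), I would use the defining equation $f \circ \alpha_X = \alpha_Y \circ \lcset(f)$ together with the explicit description of $\cplus^{\alpha}$ and $\pplus p^{\alpha}$ from Section \ref{sec:F}. For $\cplus$, I compute
\begin{align*}
f(x_1 \cplus^{\alpha_X} x_2)
&= f(\alpha_X(\conv\{\dirac{x_1},\dirac{x_2}\})) \\
&= \alpha_Y(\lcset(f)(\conv\{\dirac{x_1},\dirac{x_2}\})) \\
&= \alpha_Y(\conv\{\dirac{f(x_1)},\dirac{f(x_2)}\}) \\
&= f(x_1) \cplus^{\alpha_Y} f(x_2),
\end{align*}
where the third step uses Proposition \ref{prop:ubf} (together with the fact that $\{\dirac{x_1},\dirac{x_2}\}$ is, up to redundancy, the unique base) to push $\lcset(f) = \cset(f)$ through the convex hull. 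For $\pplus p$ the computation is even simpler because the input is a singleton set:
\begin{align*}
f(x_1 \pplus p^{\alpha_X} x_2)
&= f(\alpha_X(\{p x_1 + (1-p) x_2\})) \\
&= \alpha_Y(\lcset(f)(\{p x_1 + (1-p) x_2\})) \\
&= \alpha_Y(\{p f(x_1) + (1-p) f(x_2)\}) \\
&= f(x_1) \pplus p^{\alpha_Y} f(x_2),
\end{align*}
where the third step just uses that $\dset(f)(p x_1 + (1-p) x_2) = p f(x_1) + (1-p) f(x_2)$.

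There is no genuine obstacle here; the lemma is essentially purely set-theoretic once non-expansiveness is observed to be free. The only mildly subtle point is ensuring, in the $\cplus$ case, that $\lcset(f)$ commutes with $\conv$ on a two-element set, which is exactly the content of Proposition \ref{prop:ubf}. Since these verifications involve no specific metric-theoretic content (they would work identically for $\cset$ on $\Sets$), I would present only the two displays above and remark that the metric part of the statement is immediate from Definition \ref{def:algebra-of-a-monad}.
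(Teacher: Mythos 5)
Your proof is correct and follows essentially the same route as the paper's: non-expansiveness is free because $f$ is already a $\Met$-morphism, and the homomorphism property is verified by the same chain of equalities using $f\circ\alpha_X=\alpha_Y\circ\lcset(f)$ and Proposition~\ref{prop:ubf} to push $\cset(f)$ through the convex hull. The only difference is cosmetic — the paper writes out the $\cplus$ case and declares $\pplus p$ analogous, whereas you spell out both.
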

\begin{proof}
We want to prove that whenever $f:((X,d), \alpha)\to ((Y,d'), \beta)$ is a non-expansive morphism of Eilenberg-Moore algebras then
$\calf(f)$ is a non-expansive homomorphism of convex semilattices. This means proving the following equations:
\begin{align*}
f(x_{1} \cplus^{\alpha} x_{2})
&= f({\alpha} (\conv\{\dirac {x_{1}}, \dirac {x_{2}}\}))\\
&= {\beta} \circ \cset f (\conv\{\dirac {x_{1}}, \dirac {x_{2}}\}) \tag{by $f$ a morphism of Eilenberg-Moore algebras}\\
&= {\beta}(\conv\{\dirac {f(x_{1})}, \dirac {f(x_{2})}\}) \tag{by Proposition \ref{prop:ubf}}\\
&= {f(x_{1})}\cplus^{\beta} {f(x_{2})}
\end{align*}
The equation $f(x_{1} \pplus p^{\alpha} x_{2})= {f(x_{1})}\pplus p^{\beta} {f(x_{2})}$ follows analogously.
\end{proof}

\subsection{Proofs for section \ref{sec:G}}

\begin{proof}[Proof of Lemma \ref{lem:Gwd}.1.]
The first equation immediately follows from the definition of $\alpha$:
\begin{align*}
&\alpha \circ \eta^{\cset} (x) =\alpha (\{\dirac x\})= (\nf (\{\dirac x\}))^{\alga} = x
\end{align*}
We are now left to prove 
\begin{align*}
&\alpha \circ \cset\alpha= \alpha \circ \mu^{\cset}
\end{align*}

In what follows, we often write $\nu(\Delta)$ to denote $\nu(\{\Delta\})$.

We first observe that in the theory of convex semilattices the following equation holds:
\begin{equation}\label{eq:nfhom}
\nf(\conv (S_{1} \cup S_{2}))= \nf(S_{1}) \oplus \nf(S_{2}) 
\end{equation}
Indeed, we know from \cite{BSV20ar} 
\finaltodo{\marginpar{this is actually not explicit..!!!!}}
that the isomorphism $\ics$ satisfies:
$\ics(\conv (S_{1} \cup S_{2}))= [t_{1} \oplus t_{2}]_{/\etcs}$
for any $t_{1}\in \ics(S_{1})$ and $t_{2}\in \ics(S_{2})$. Hence, by $[\nf(\conv (S_{1} \cup S_{2}))]_{/\etcs}= \ics(\conv (S_{1} \cup S_{2}))$, and by $\nf(S_{1}) \in \ics(S_{1})$ and $\nf(S_{2}) \in \ics(S_{2})$,
we derive that $[\nf(\conv (S_{1} \cup S_{2}))]_{/\etcs}= [\nf(S_{1}) \oplus \nf(S_{2})]_{/\etcs}$.

For $S \in \cset\cset(X)$, we have
\begin{align*}
&\alpha \circ \cset\alpha (S) \\
&= \alpha(\conv(\bigcup_{\ddistr \in \ub(S)}\{\sum_{T \in \support(\ddistr)}  \ddistr(T) \alpha(T)\})) \tag{by Proposition \ref{prop:ubf}}\\
&= \big(\nf(\conv\bigcup_{\ddistr \in \ub(S)}\{\sum_{T \in \support(\ddistr)}  \ddistr(T) \alpha(T) \})\big)^{\alga}\tag{by definition of $\alpha$}\\
&= \big(\bigoplus_{\ddistr \in UB(S)} (\bigpplus_{T \in \support (\ddistr)} \ddistr(T) \alpha (T))\big)^{\alga} \tag{by definition of $\nf$}\\
&= \big(\bigoplus_{\ddistr \in UB(S)} (\bigpplus_{T \in \support (\ddistr)} \ddistr(T) (\nf (T))^{\alga})\big)^{\alga} \tag{by definition of $\alpha$}\\
&= \big(\bigoplus_{\ddistr \in UB(S)} (\bigpplus_{T \in \support (\ddistr)} \ddistr(T) \nf (T))\big)^{\alga} \tag{by definition of interpretation of a term in an algebra}\\
&= \Big(\bigoplus_{\ddistr \in UB(S)} \big(\bigpplus_{T \in \support (\ddistr)} \ddistr(T) (\bigoplus_{\distr\in \ub(T)} \nf(\distr))\big)\Big)^{\alga} \tag{by definition of $\nf$}
\end{align*}
On the other side, we have
\begin{align*}
&\alpha \circ \mu^{\cset} (S) \\
&= \alpha\big(\conv(\bigcup_{\ddistr \in \ub(S)} \,(\bigcup_{f\in \{f:\support(\ddistr)\to\dset(X) |\, f(T)\in \ub(T), \forall T\}} \,\{\sum_{T \in \support(\ddistr)}  \ddistr(T) f(T) \}))\big) \tag{by Proposition \ref{prop:ubmu}} \\
&= \Big(\nf\big(\conv(\bigcup_{\ddistr \in \ub(S)} \,(\bigcup_{f\in \{f:\support(\ddistr)\to\dset(X) |\, f(T)\in \ub(T), \forall T\}} \,\{\sum_{T \in \support(\ddistr)}  \ddistr(T) f(T) \}))\big)\Big)^{\alga} \tag{by definition of $\alpha$}\\
&= \Big(\nf\big(\conv(\bigcup_{(\ddistr \in \ub(S), f\in \{f:\support(\ddistr)\to\dset(X) |\, f(T)\in \ub(T), \forall T\})} \,\{\sum_{T \in \support(\ddistr)}  \ddistr(T) f(T) \})\big)\Big)^{\alga}\\
&= \Big(\bigoplus_{(\ddistr \in \ub(S), f\in \{f:\support(\ddistr)\to\dset(X) |\, f(T)\in \ub(T), \forall T\})} \,\nf(\sum_{T \in \support(\ddistr)}  \ddistr(T) f(T)) \Big)^{\alga} \tag{by (\ref{eq:nfhom})}\\
&= \Big(\bigoplus_{(\ddistr \in \ub(S), f\in \{f:\support(\ddistr)\to\dset(X) |\, f(T)\in \ub(T), \forall T\})} \,\big(\bigpplus_{T \in \support(\ddistr)} \ddistr(T) \, \nf(f(T))\big) \Big)^{\alga} \tag{by definition of $\nf$}
\end{align*}
Hence, we can conclude if we derive in the theory of convex semilattices that 
\begin{align*}
&\bigoplus_{\ddistr \in UB(S)} \bigpplus_{T \in \support (\ddistr)} \ddistr(T) (\bigoplus_{\distr\in \ub(T)} \nf(\distr)) \\
&= \bigoplus_{(\ddistr \in \ub(S), f\in \{f:\support(\ddistr)\to\dset(X) |\, f(T)\in \ub(T), \forall T\})} \,\big(\bigpplus_{T \in \support(\ddistr)} \ddistr(T) \, \nf(f(T))\big)
\end{align*}
as this guarantees that the terms will be interpreted as the same element of $X$ in the algebra $\alga$.
This is derived from the fact that for every $\ddistr \in UB(S)$ it holds
\begin{align*}
&\bigpplus_{T \in \support (\ddistr)} \ddistr(T) (\bigoplus_{\distr\in \ub(T)} \nf(\distr)) 
= \bigoplus_{f\in \{f:\support(\ddistr)\to\dset(X) |\, f(T)\in \ub(T), \forall T\}} \,\big(\bigpplus_{T \in \support(\ddistr)} \ddistr(T) \, \nf(f(T))\big)
\end{align*}
which is an instance of the generalized version of axiom (D).
Indeed, by iterating the distributivity axiom (D), we derive in the theory of convex semilattices that:
\[\bigpplus_{1\leq i\leq k} p_{i}\, (t^{i}_{1}\oplus ...\oplus t^{i}_{n_{i}}) =
\bigoplus_{(t_{1},...,t_{k})\in \{(t_{1},...,t_{k})|\, t_{i} \in \{t^{i}_{1},...,t^{i}_{n_{i}}\}\}} (\bigpplus_{1\leq i\leq k} p_{i} \,t_{i})\]
and this law can be alternatively written as follows, whenever for each $i$ we have a set of terms $S_{i}$:
\[\bigpplus_{1\leq i\leq k} p_{i}\, (\bigoplus _{t\in S_{i}} t)= 
\bigoplus_{f\in \{f: \{1,...,k\} \to \terms X \sigcs |\, f({i}) \in S_{i}\}} (\bigpplus_{1\leq i\leq k} p_{i}\, f(i))\]
where $\{f: \{1,...,k\} \to \terms X \sigcs |\, f({i}) \in S_{i}\}$ is the set of functions choosing one term in each $S_{i}$.
\end{proof}

We show that in the theory $\qetcs$ the Kantorovich distance of two distributions is an upper bound to the distance of their corresponding terms given by $\nf$.
\begin{lemma}\label{lem:Gkant}
Let $(X,d)$ be a metric space and let $\Delta, \Theta \in \dset(X)$. Then
\[\bigcup_{(x,y)\in \support (\Delta) \times \support (\Theta)}\{x=_{d(x,y)} y\}\vdash  \nf(\{\Delta\}) =_{\kant(d)(\Delta,\Theta)} \nf(\{\Theta\}).\]
\end{lemma}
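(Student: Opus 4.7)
The plan is to exploit an optimal coupling $\omega \in Coup(\Delta,\Theta)$ witnessing $\kant(d)(\Delta,\Theta)$. Since $\Delta$ and $\Theta$ are finitely supported, the infimum defining the Kantorovich distance is attained by some finitely supported $\omega$, and hence $\sum_{(x,y) \in \support(\omega)} \omega(x,y)\cdot d(x,y) = \kant(d)(\Delta,\Theta)$; alternatively, a $\delta$-optimal $\omega$ combined with rule (Arch) yields the same conclusion. Note that $\support(\omega)\subseteq \support(\Delta)\times\support(\Theta)$, so every assumption of the form $x =_{d(x,y)} y$ with $(x,y)\in\support(\omega)$ already occurs in the premise set $\Gamma$ of the lemma statement.

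I would split the derivation into two blocks. First, using only the purely equational axioms of convex algebras (the $=_0$-inferences $A_p$, $C_p$, $I_p$), I would derive
\[
\emptyset \vdash \nf(\{\Delta\}) =_0 t_\Delta, \qquad \emptyset \vdash \nf(\{\Theta\}) =_0 t_\Theta,
\]
where $t_\Delta$ and $t_\Theta$ are binary-bracketed convex combinations $t_\Delta = \bigpplus_{(x,y)\in\support(\omega)} \omega(x,y)\, x$ and $t_\Theta = \bigpplus_{(x,y)\in\support(\omega)} \omega(x,y)\, y$ sharing the same bracketing. These identities hold because $t_\Delta$ and $\nf(\{\Delta\})$ denote the same distribution $\Delta$ (the first marginal of $\omega$ is $\Delta$), and analogously for $\Theta$; derivability then follows from the fact that convex-algebra terms denoting the same distribution are $\etca$-equal, an ingredient already used in the presentation $\acat(\etca)\cong \EM(\dset)$ and underlying the completeness result for $\qetca$ in \cite{radu2016}. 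Second, I would apply the quantitative inference (K) iteratively, by induction on $|\support(\omega)|$ and following the chosen bracketing, starting from the assumptions $\{x =_{d(x,y)} y\}_{(x,y)\in\support(\omega)}\subseteq\Gamma$, to obtain
\[
\Gamma \vdash t_\Delta =_{\epsilon} t_\Theta, \quad \text{with } \epsilon = \textstyle\sum_{(x,y)\in\support(\omega)} \omega(x,y)\cdot d(x,y) = \kant(d)(\Delta,\Theta).
\]
At each inductive step, (K) combines the already-accumulated distance of a sub-term with the new pairwise assumption $x =_{d(x,y)} y$, and the weights telescope into exactly the sum $\sum \omega(x,y) \cdot d(x,y)$. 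Splicing the three intermediate conclusions with (Symm) and two applications of (Triang) yields the desired $\Gamma \vdash \nf(\{\Delta\}) =_{\kant(d)(\Delta,\Theta)} \nf(\{\Theta\})$.

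The main obstacle is the first block: one must choose compatible binary bracketings of $t_\Delta$ and $t_\Theta$ so that the subsequent (K)-induction is syntactically aligned, and then invoke completeness of the convex-algebra axioms to justify the $=_0$-rebracketings from sums over $\support(\Delta)$ and $\support(\Theta)$ to sums over $\support(\omega)$. The inductive (K)-argument and the final (Triang)-splicing are then routine combinatorial bookkeeping, mirroring the completeness argument for $\qetca$ in \cite{radu2016}.
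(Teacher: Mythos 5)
Your proposal is correct and follows essentially the same route as the paper: fix an optimal coupling $\omega$ (attained since the supports are finite), use the purely equational convex-algebra axioms to rewrite $\nf(\{\Delta\})$ and $\nf(\{\Theta\})$ as $=_0$-equal convex combinations indexed by the support of $\omega$, apply (K) iteratively to bound their distance by $\sum_{(x,y)}\omega(x,y)\cdot d(x,y)$, and splice with (Triang) and (Cut). The only difference is presentational: you make explicit the induction on $|\support(\omega)|$ and the bracketing alignment that the paper leaves implicit in its multi-ary use of (K).
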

\begin{proof}
Let $\Delta=\sum_{i} p_{i} x_{i} $ and $\Theta= \sum_{j} q_{j} y_{j}$.
We have \[\kant(d)(\Delta,\Theta)=\inf_{\omega\in Coup(\Delta,\Theta)} (\sum_{(i,j)} {\omega(x_{i},y_{j})} \cdot {d(x_{i},y_{j})}).\] 
As the supports of the distributions are finite, there is some minimal coupling $\omega$ such that 
\[\sum_{(i,j)} \omega(x_{i},y_{j}) \cdot d{(x_{i},y_{j})}= \inf_{\omega\in Coup(\Delta,\Theta)} \big(\sum_{(i,j)} {\omega(x_{i},y_{j})} \cdot {d(x_{i},y_{j})}\big)\]
Hence, it is enough to prove that for any coupling $\omega\in Coup(\Delta,\Theta)$ we have
\[\{x_{i}=_{d(x_{i},y_{i})} y_{j}\}_{(i,j)}\vdash 
 \bigpplus_{i} p_{i} \,x_{i}  =_{\sum_{(i,j)} \omega(x_{i},y_{j}) \cdot d(x_{i},y_{j})} \bigpplus_{j} q_{j} \,y_{j}.\]
 Let $\Gamma$ be the set of hypothesis $\{x_{i}=_{d(x_{i},y_{i})} y_{j}\}_{(i,j)}$.
Let $\omega$ be a coupling for $\Delta,\Theta$.
By rule (K) we derive:
\begin{equation}\label{eq:Gkantomega}
\Gamma \vdash
\bigpplus_{(i,j)} {\omega(x_{i},y_{j})} \,{x_{i}} 
=_{\sum_{(i,j)} \omega(x_{i},y_{j}) \cdot \epsilon_{(i,j)}} 
\bigpplus_{(i,j)} {\omega(x_{i},y_{j})} \,{y_{j}}
\end{equation}
As $\omega$ is a coupling for the distributions $(\Delta,\Theta)$, $\sum_{j} \omega(x_{i},y_{j})= p_{i}$ for every $i$.
Using the convex algebra axioms, it is easy to see that  
\[\vdash \bigpplus_{i} p_{i} \,x_{i} =_{0} \bigpplus_{(i,j)} {\omega(x_{i},y_{j})} \,{x_{i}}\]
Analogously, by $\sum_{i} \omega(x_{i},y_{j})= \sum_{i} p_{i}$ for every $j$ we derive
\[\vdash \bigpplus_{j} q_{j} \,y_{j} =_{0} \bigpplus_{(i,j)} {\omega(x_{i},y_{j})} \,{y_{j}}\]
Hence, we derive from (\ref{eq:Gkantomega}), using the inference rules (Triang) and (Cut), that:
\[\Gamma\vdash
\bigpplus_{i} p_{i} \,x_{i} 
=_{\sum_{(i,j)} \omega(x_{i},y_{j}) \cdot d(x_{i},y_{j})} 
\bigpplus_{j} q_{j} \,y_{j}
\]
\end{proof}

\begin{proof}[Proof of Lemma \ref{lem:Ghk}.]
Let $(X,d)$ be a metric space and let $S,T \in \cset(X)$. We want to prove that in $\qetcs$ it holds:
\[
\bigcup_{(\Delta,\Theta)\in\ub(S)\times \ub(T)} (\bigcup_{(x,y) \in \support(\Delta)\times \support(\Theta)}\{x=_{d(x,y)} y\})
\vdash 
\nf(S) 
=_{\hk(d)(S,T)}  
\nf(T)
\]
In what follows, we often write $\nu(\Delta)$ to denote $\nu(\{\Delta\})$.

By definition of $\nf$, we have 
\[
\nf(S) =
\bigcplus_{\Delta\in \ub(S) } \nf(\Delta)
\quad\quad\quad
\nf(T) =
\bigcplus_{\Theta\in \ub(T) } \nf(\Theta)
\]
By the definition of the Hausdorff lifting, 
for each $\Delta \in S$ there is a $\Theta_{\Delta} \in T$ such that
\begin{equation}\label{eq:Ghk01}
\kant(d)(\Delta,\Theta_{\Delta})\leq \haus\kant(d) (S,T)
\end{equation}
and analogously for each $\Theta \in T$ there is a $\Delta_{\Theta} \in S$ such that
\begin{equation}\label{eq:Ghk02}
\kant(d)(\Delta_{\Theta},\Theta)\leq \haus\kant(d) (S,T)
\end{equation}

For each $\Theta \in \ub(T)$,  since $\Delta_{\Theta} \in S$ then $\Delta_{\Theta}$ is a convex combination of elements of $\ub(S)$. Then by Proposition \ref{prop:removingconv} we derive that for every $\Theta \in \ub(T)$,
\[\vdash \bigcplus_{\Delta\in \ub(S)} \nf(\Delta) =_{0} \big(\bigcplus_{\Delta\in \ub(S)} \nf(\Delta)\big) \cplus \nf(\Delta_{\Theta})\]
and thus by multiple applications of (Triang) and (Cut) we have
\begin{equation}\label{eq:Ghk11}
\vdash \bigcplus_{\Delta\in \ub(S)} \nf(\Delta) =_{0} \big(\bigcplus_{\Delta\in \ub(S)} \nf(\Delta)\big) \cplus \big(\bigcplus_{\Theta \in \ub(T)} \nf(\Delta_{\Theta})\big)
\end{equation}
Symmetrically, we derive
\begin{equation}\label{eq:Ghk12}
\vdash \bigcplus_{\Theta\in \ub(T)} \nf(\Theta) =_{0} \big(\bigcplus_{\Theta\in \ub(T)} \nf(\Theta)\big) \cplus \big(\bigcplus_{\Delta \in \ub(S)} \nf(\Theta_{\Delta})\big)
\end{equation}

By Lemma \ref{lem:Gkant} for every $\Delta\in \ub(S)$ we have
\[\bigcup_{(x,y)\in \support (\Delta) \times \support (\Theta_{\Delta})}\{x=_{d(x,y)} y\}\vdash  \nf(\Delta) =_{\kant(d)(\Delta,\Theta_{\Delta})} \nf(\Theta_{\Delta}).\]
Hence, by (\ref{eq:Ghk01}) and rule (Max), for every $\Delta \in \ub(S)$ we derive:
\begin{equation}\label{eq:Ghk000}
\bigcup_{(x,y)\in \support (\Delta) \times \support (\Theta_{\Delta})}\{x=_{d(x,y)} y\}\vdash  \nf(\Delta) =_{\hk(d)(S,T)} \nf(\Theta_{\Delta}).
\end{equation}
Now, define the set of hypothesis
\[\Gamma=\bigcup_{\Delta\in \ub(S),\Theta\in\ub(T)} (\bigcup_{(x,y) \in \support(\Delta)\times \support(\Theta)}\{x=_{d(x,y)} y\})\]
As $\Theta_{\Delta}$ is a convex combination of elements of $\ub(T)$, the elements in its support are included in $\bigcup_{\Theta \in \ub (T)} \support(\Theta)$. Hence, 
by (\ref{eq:Ghk000}) and rules (Cut) and (Assum) we derive that for every $\Delta\in \ub(S)$,
\begin{equation}\label{eq:Ghk21}
\Gamma \vdash  \nf(\Delta) =_{\hk(d) (S,T)} \nf(\Theta_{\Delta}).
\end{equation}

Symmetrically from Lemma \ref{lem:Gkant}, by (\ref{eq:Ghk02}) and by rules (Max),(Cut), and (Assum), we derive that for every $\Theta \in \ub(T)$:
\begin{equation}\label{eq:Ghk22}
\Gamma\vdash  \nf(\Delta_{\Theta}) =_{\hk(d) (S,T)} \nf(\Theta)
\end{equation}

From (\ref{eq:Ghk21}) and (\ref{eq:Ghk22}),
by multiple applications of (H), together with rules (Cut) and (Assum) to reach the set of hypothesis $\Gamma$, we derive:
\begin{equation}\label{eq:Ghk3}
\Gamma\vdash \big(\bigcplus_{\Delta\in \ub(S)} \nf(\Delta)\big) \cplus \big(\bigcplus_{\Theta \in \ub(T)} \nf(\Delta_{\Theta}) \big)
=_{\hk(d) (S,T)}  
\big(\bigcplus_{\Delta \in \ub(S)} \nf(\Theta_{\Delta})\big) \cplus \big(\bigcplus_{\Theta\in \ub(T)} \nf(\Theta)\big)
\end{equation}
Thus, by (\ref{eq:Ghk3}), (\ref{eq:Ghk11}), and (\ref{eq:Ghk12}), using rule (Triang) and commutativity of $\cplus$, together with rules (Cut) and (Assum) to reach the set of hypothesis $\Gamma$, we conclude:
\begin{equation}\label{eq:Ghk2020}
\Gamma\vdash \bigcplus_{\Delta\in \ub(S)} \nf(\Delta)
=_{\hk(d) (S,T)}  
\bigcplus_{\Theta\in \ub(T)} \nf(\Theta)
\end{equation}
\end{proof}

\begin{lemma}
The functor $\calg$ is well-defined on morphisms.
\end{lemma}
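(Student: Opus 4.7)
The plan is to verify that $\calg(f) = f$ is a morphism in $\EM(\lcset)$ from $\calg(\alga) = ((X,d), \alpha)$ to $\calg(\algb) = ((Y,d'), \beta)$, where $\alga = (X, \sigcs^{\alga}, d)$ and $\algb = (Y, \sigcs^{\algb}, d')$. Non-expansiveness of $f$ is part of the hypothesis, so the only thing left to check is the Eilenberg--Moore coherence law $f \circ \alpha = \beta \circ \lcset(f)$. Since this equation lives entirely in $\Sets$ (the metric content is already packaged into the non-expansiveness of $f$, $\alpha$, and $\beta$), I would argue pointwise on $S \in \cset(X)$.

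Fix $S \in \cset(X)$. Unfolding the definition of $\calg$ on objects, the two sides become $f(\alpha(S)) = f((\nf(S))^{\alga})$ and $\beta(\lcset(f)(S)) = (\nf(\cset(f)(S)))^{\algb}$. The key tool is that $f$, being a homomorphism of $\sigcs$-algebras, commutes with term interpretation: for every $t \in \terms{X}{\sigcs}$, applying $f$ to $t^{\alga}$ yields the $\algb$-interpretation of the term obtained from $t$ by substituting each variable $x$ with $f(x) \in Y$. Writing $\nf(S) = \bigcplus_{\Delta \in \ub(S)} \bigpplus_{x \in \support(\Delta)} \Delta(x)\,x$ and performing this substitution, the problem reduces to establishing the $\etcs$-equality
\[ \bigcplus_{\Delta \in \ub(S)} \Bigl(\bigpplus_{x \in \support(\Delta)} \Delta(x)\,f(x)\Bigr) \;=\; \nf(\cset(f)(S)) \]
in ${\terms{Y}{\sigcs}}_{/\etcs}$; since $\algb$ is a model of $\qetcs$, the two sides will then be interpreted identically in $\algb$.

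This equality would be established in two purely syntactic steps. First, for each $\Delta \in \ub(S)$, the convex-algebra axioms (A$_p$), (C$_p$), (I$_p$) collapse $\bigpplus_{x \in \support(\Delta)} \Delta(x)\,f(x)$ to the canonical convex term $\nf(\{\dset(f)(\Delta)\})$ of the pushforward distribution $\dset(f)(\Delta)$, merging summands whenever $f$ identifies points in $\support(\Delta)$. Second, Proposition~\ref{prop:ubf} gives $\cset(f)(S) = \conv \bigl( \bigcup_{\Delta \in \ub(S)} \{\dset(f)(\Delta)\} \bigr)$, so each $\dset(f)(\Delta)$ lies in the convex hull of $\ub(\cset(f)(S))$; by repeated application of Proposition~\ref{prop:removingconv}, which allows absorbing any convex combination of base points into a semilattice sum, together with (I) to delete duplicates, the outer sum simplifies to $\bigcplus_{\Theta \in \ub(\cset(f)(S))} \nf(\{\Theta\}) = \nf(\cset(f)(S))$.

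The main obstacle is the second step, which has to handle the fact that $f$ may induce coincidences or convex-linear dependencies among the distributions $\dset(f)(\Delta)$ for distinct $\Delta \in \ub(S)$: some of them may no longer lie in the unique base of $\cset(f)(S)$. This is exactly the syntactic manipulation underpinning the proof of Lemma~\ref{lem:Gwd}.1 for the $\cset$-algebra multiplication law, and it has no specifically metric-theoretic content, which is why the authors omit the verification.
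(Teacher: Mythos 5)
Your proposal is correct and follows essentially the same route as the paper's proof: push $f$ through the term interpretation using the homomorphism property, then use Proposition~\ref{prop:ubf} together with Proposition~\ref{prop:removingconv} to show the resulting term is $\etcs$-equal to $\nf(\cset(f)(S))$, and conclude because $\algb$ is a model of $\qetcs$. Your explicit first step (collapsing $\bigpplus_{x\in\support(\Delta)}\Delta(x)\,f(x)$ to $\nf(\{\dset(f)(\Delta)\})$ when $f$ identifies points of the support) is a detail the paper glosses over but which is indeed needed before Proposition~\ref{prop:removingconv} applies.
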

\begin{proof}
Let $f:X\to Y$ be a non-expansive homomorphism between the quantitative algebras $\alga =(X, \sigcs^{\alga},d)$
and $\algb =(Y, \sigcs^{\algb},d')$ in $\qacat(\qetcs)$.
Then $f$ is an arrow in $\Met$, being non-expansive. We now show that $f$ is also a morphism of Eilenberg-Moore algebras. We first observe that
\begin{align*}
f \circ \alpha (S) 
&= f ((\nf(S))^{\alga})
\tag{definition of $\alpha$}\\
&= f (\bigoplus_{\distr \in \ub(S)}^{\alga} (\bigpplus_{x \in \support (\distr)}^{\alga} \distr(x) x))
\tag{definition of $\nf$ and of interpretation in $\alga$}\\
& = \bigoplus_{\distr \in \ub(S)}^{\algb} (\bigpplus_{x \in \support (\distr)}^{\algb} \distr(x) f(x))
\tag{by $f$ an homomorphism}\\
& = \big(\bigoplus_{\distr \in \ub(S)} (\bigpplus_{x \in \support (\distr)} \distr(x) f(x))\big)^{\algb}
\end{align*}
By definition we have
\[\cset f(S)= \conv (\bigcup_{\distrb \in \ub(\cset f(S))} \{\distrb\})\]
By Proposition \ref{prop:ubf}, 
$\bigcup_{\distr \in \ub(S)} \{\dset f (\distr)\}$ is also a base for $\cset f (S)$, although possibly it is not the unique, minimal base. This means, that $\bigcup_{\distr \in \ub(S)} \{\dset f (\distr)\}$ contains $\ub(\cset f(S))$, and the remaining elements of $\bigcup_{\distr \in \ub(S)} \{\dset f (\distr)\}$ are convex combinations of $\ub(\cset f(S))$.
By Proposition \ref{prop:removingconv}, we can then derive in the theory of convex semilattices that
\[\bigoplus_{\distr \in \ub(S)} (\bigpplus_{x \in \support (\distr)} \distr(x) f(x)) = \bigoplus_{\distrb \in \ub(\cset f(S))} (\bigpplus_{y \in \support (\distrb)} \distrb(y) y)\]
As $\algb$ is a quantitative algebra for the theory $\qetcs$, the interpretation of such terms in  $\algb$ will be the same, and so we conclude:
\begin{align*}
\big(\bigoplus_{\distr \in \ub(S)} (\bigpplus_{x \in \support (\distr)} \distr(x) f(x)\big)^{\algb}
&= \big(\bigoplus_{\distrb \in \ub(\cset f(S))} (\bigpplus_{y \in \support (\distrb)} \distrb(y) y)\big)^{\algb}\\
&= (\nf(\cset f(S)))^{\algb}\\
&= \beta \circ \cset f (S)
\end{align*}
\end{proof}

\subsection{Proofs for Section \ref{sec:iso}}

We prove that the equations 
\[\calg \circ \calf = id_{\EM(\lcset)}\qquad \calf \circ \calg = id_{\qacat(\qetcs)}\]
hold for objects of the categories.

For $\calg \circ \calf = id_{\EM(\lcset)}$, let $((X,d),\alpha)$ be an object in $\EM(\lcset)$. Then
\begin{align*}
\calg \circ \calf ((X,d),\alpha)
= \calg(\alga)
= ((X,d),\alpha')
\end{align*}
with $\alga= (X, \sigcs^{\alpha}, d)$ defined accordingly to the definition of $\calf$,
and with $\alpha'(S)= (\nf(S))^{\alga}$ for any $S\in \cset (X)$.
We prove that $\alpha=\alpha'$.
For $S\in \cset (X)$, we have
\begin{align*}
\alpha'(S)
&= (\nf(S))^{\alga}
\tag{definition of $\alpha'$}\\
&= \big(\bigoplus_{\distr \in \ub(S)}(\bigpplus_{x\in \support (\distr)} \distr(x) \,x )\big)^{\alga}
\tag{definition of $\nf$}\\
&= \bigoplus_{\distr \in \ub(S)}^{\alga} (\bigpplus_{x\in \support (\distr)}^{\alga} \distr(x) \,x )
\tag{definition of interpretation in an algebra}\\
&= \alpha\Big(\conv \big(\bigcup_{\distr \in \ub(S)} \{\dirac {\alpha(\{\distr\})}\}\big)\Big)
\tag{definition of $\alga$}\\
&= \alpha \circ \cset \alpha \Big(\conv \big(\bigcup_{\distr \in \ub(S)} \{\dirac {\{\distr\}} \}\big)\Big)
\tag{Proposition \ref{prop:ubf}}\\
&= \alpha \circ \mu^{\cset} \Big(\conv \big(\bigcup_{\distr \in \ub(S)} \{\dirac {\{\distr\}} \}\big)\Big)
\tag{by $\alpha \circ \cset \alpha = \alpha \circ \mu^{\cset}$}\\
&= \alpha \Big(\conv \big(\bigcup_{\distr \in \ub(S)} \{\distr\}\big)\Big)
\tag{by definition of $\mu^{\cset}$}\\
&= \alpha (S)
\end{align*}

It remains to prove the second equation on objects, that is, $\calf \circ \calg = id_{\qacat(\qetcs)}$. Let $\alga= (X, \sigcs^{\alga}, d)$ be a quantitative algebra in $\qacat(\qetcs)$. We have
\begin{align*}
\calf \circ \calg (\alga)
= \calf((X,d), \alpha)
= \alga'
\end{align*}
with $\alpha$ defined accordingly to the definition of $\calg$ and 
$\alga'=(X, \sigcs^{\alpha}, d)$.
We want to prove that the interpretation of the convex semilattice operations in the algebras $\alga$ and $\alga'$ coincide.
We have 
\begin{align*}
x \cplus^{\alpha} y 
&= \alpha(\conv \{\dirac {x} , \dirac{y}\}) \\
&= (\nf(\conv \{\dirac {x} , \dirac{y}\}))^{\alga}\\
&= (x \oplus y)^{\alga}\\
&= x \oplus^{\alga} y
\end{align*}
and analogously we derive
\begin{align*}
x \pplus p^{\alpha} y 
&= \alpha(\{p \, x + (1-p) \, y\}) \\
&= (\nf(\{p \, x + (1-p) \, y\}))^{\alga}\\
&= (x \pplus p y)^{\alga}\\
&= x \pplus p^{\alga} y
\end{align*}
This last two proof are modulo the ordering of the elements in the set and in the support of the distributions as given by $\nf$, which is however irrelevant as, being $\alga$ a quantitative algebra for $\qetcs$, terms with a different ordering will have the same interpretation.

\end{document}